\theoremstyle{plain}
\newtheorem{theorem}{Theorem}
\newtheorem{lemma}{Lemma}
\newtheorem{corollary}{Corollary}
\theoremstyle{definition}
\newtheorem{assumption}{Assumption}
\theoremstyle{remark}
\newtheorem{remark}{Remark}
\def\bmu{{\boldsymbol \mu}}
\def\bEta{{\boldsymbol \eta}}
\def\blambda{{\boldsymbol \lambda}}
\def\bbeta{{\boldsymbol \beta}}
\def\bSigma{{\boldsymbol \Sigma}}
\def\bOmega{{\boldsymbol \Omega}}
\def\btheta{{\boldsymbol \theta}}
\def\bLambda{{\boldsymbol \Lambda}}
\def\0{{\bf 0}}
\def\B{{\bf B}}
\def\A{{\bf A}}
\def\z{{\bold z}}
\def\Y{{\bold Y}}
\def\y{{\bold y}}
\def\I{{\bold I}}
\newcommand{\argmax}{\mathop{\rm arg\max}}
\def\hat{\widehat}
\def\tilde{\widetilde}
\def\T{{ \mathrm{\scriptscriptstyle T} }}
\newtheorem{condition}{Condition}
\newtheorem{lem1}{Lemma S.}
\def\0{{\bf 0}}
\def\A{{\bf A}}
\def\B{{\bf B}}
\def\e{{\bf e}}
\def\H{{\bf H}}
\def\I{{\bf I}}
\def\T{{\bf T}}
\def\T{{\bf T}}
\def\Y{{\bf Y}}
\def\y{{\bf y}}
\def\z{{\bf z}}
\def\1{{\bf 1}}
\def\bmu{{\boldsymbol \mu}}
\def\bEta{{\boldsymbol \eta}}
\def\bbeta{{\boldsymbol \beta}}
\def\bSigma{{\boldsymbol \Sigma}}
\def\bOmega{{\boldsymbol \Omega}}
\def\btheta{{\boldsymbol \theta}}
\def\0{{\bf 0}}
\def\B{{\bf B}}
\def\A{{\bf A}}
\def\z{{\bold z}}
\def\Y{{\bold Y}}
\def\y{{\bold y}}
\title{\bf Distributed inference for heterogeneous mixture models using multi-site data}
\author{Xiaokang Liu\thanks{Department of Statistics and Data Science, University of Missouri, xiaokang.liu@missouri.edu} \and  Rui Duan \thanks{Department of Biostatistics,
       Harvard University, rduan@hsph.harvard.edu} \and Raymond J. Carroll\thanks{Department of Statistics,
       Texas A\&M University, carroll@stat.tamu.edu} \and Yang Ning\thanks{Department of Statistics and Data Science,
       Cornell University, yn265@cornell.edu} \and Yong Chen\thanks{Corresponding author. Department of Biostatistics, Epidemiology and Informatics,
       University of Pennsylvania, ychen123@upenn.edu}}
\date{}
\begin{document}
\maketitle

\begin{abstract}
Mixture models postulate the overall population as a mixture of finite subpopulations with unobserved membership. Fitting mixture models usually requires large sample sizes and combining data from multiple sites can be beneficial. However, sharing individual participant data across sites is often less feasible due to various types of practical constraints, such as data privacy concerns. Moreover, substantial heterogeneity may exist across sites, and locally identified latent classes may not be comparable across sites. We propose a unified modeling framework where a common definition of the latent classes is shared across sites and heterogeneous mixing proportions of latent classes are allowed to account for between-site heterogeneity. To fit the heterogeneous mixture model on multi-site data, we propose a novel distributed Expectation-Maximization (EM) algorithm where at each iteration a density ratio tilted surrogate Q function is constructed to approximate the standard Q function of the EM algorithm as if the data from multiple sites could be pooled together.  Theoretical analysis shows that our estimator achieves the same contraction property as the estimators derived from the EM algorithm based on the pooled data. 
\end{abstract}

\noindent\textbf{Keywords:} EM algorithm; Federated learning; Finite mixture model; Multi-site analysis.

\newpage

\section{Introduction}\label{sec:intro}

Mixture models analyze complex data sets by postulating the overall population as a mixture of finite subpopulations with unobserved membership, which are also referred to as latent classes. Mixture models allow characterizations of each subpopulation's distribution and mixing proportions \citep{Lindsay1995MixtureM}. Due to their  flexibility, mixture models have a wide range of applications in many fields including biomedical studies \citep{peel2000finite}. 
For example,     
disease subtyping plays a critical role in disentangling some syndromic diseases by identifying meaningful biological and clinical subphenotypes \citep{li2015identification, neff2021molecular}. A recent application is the work by \cite{su2021clinical} to derive subphenotypes for COVID-19, which is well-known for its variable host responses and clinical manifestations. Based on routinely collected clinical data from five health systems in New York City, four biologically different subphenotypes were identified which differed significantly in demographics, clinical variables, and chronic comorbidities, and were found to be predictive of patient mortality. These results are important in advancing our understanding of the varied biological disease mechanisms and facilitate subsequent pathophysiological studies on COVID-19.   

Fitting mixture models usually requires large sample sizes \citep{peel2000finite}. 
Combining data across multiple healthcare organizations provides the opportunities to obtain larger bodies of data from a more general population. 
As in COVID-19 subtyping to disentangle the complex clinical manifestations of COVID-19 by finding its subphenotypes,  
it is of great clinical importance to properly characterize the distinct distributions of latent classes by fully utilizing the data from the five healthcare systems, which can provide a better
understanding of the features manifested by each class and the determinants of differentiation between classes. Often, disease subtypes identified from a general population involving all the sites are of more clinical interest due to the potentially improved interpretability and generalizability compared to  locally identified site-specific disease subtypes \citep{calfee2014subphenotypes,sinha2018latent}. 
However, using data from multiple sites raises two major challenges: first, sharing individual participant data across clinical sites is sometimes 
logistically prohibitive or practically infeasible due to privacy concerns \citep{ohno2012share}. Second, site-level data heterogeneity needs to be properly modeled and accounted for when applying mixture models across multiple data sets.
A naive approach is to apply mixture models locally within each site and then perform cross-site matching and combining based on certain similarity-based metrics. However, since locally identified latent classes may not be comparable across sites due to potential heterogeneity and  label-switching issues (classes are identifiable up to a permutation), matching local classes may introduce additional errors due to potential mismatching. 

Recently, there has been a growing interest in developing distributed algorithms  which allow jointly analyzing multiple data sets based on summary-level statistics. For example, \cite{wang2017efficient} and \cite{jordan2018communication} proposed the idea of a surrogate likelihood function to approximate the global likelihood function using local data, and \cite{battey2018distributed} and \cite{fan2019distributed} analyzed properties of an aggregation of locally obtained estimates, see also  \cite{zhang2013communication}, \cite{dobriban2020wonder} and \cite{dobriban2021distributed}.  In addition, \cite{chen2019quantile} and \cite{chen2020distributed} considered  approximating the distributions of Newton-type estimators, and the same idea is also considered in \cite{chen2021distributed, chen2021first}. Most of these methods were developed under a homogeneous assumption assuming all sites share the same distribution, 
which is less practical in real-world biomedical settings because multi-center data are likely to be heterogeneously distributed as they often represent different populations and may be exposed to different environments. 
However, limited efforts have been devoted to address data heterogeneity issues in distributed learning and inference. More recently, \cite{cai2021individual} proposed a high-dimensional integrative regression that allows for heterogeneity in both the covariate distribution and model parameters. \cite{duan2019heterogeneity} developed a density ratio tilted efficient score function based approach to accommodate the site-specific nuisance parameter. To our knowledge, distributed unsupervised learning algorithms using heterogeneous mixture models have not been investigated.  

In this paper, we propose a unified modeling framework to allow a common characterization of the latent classes shared across different sites, yet we use heterogeneous mixing proportions of latent classes in mixture models to account for between-site heterogeneity. Such a formulation aligns with the practical needs of identifying  latent classes shared across sites.
For estimation, we propose a novel aggregated data based distributed Expectation-Maximization (EM) algorithm to allow joint analysis of multi-center data, which avoids the need for label matching of latent classes and addresses the challenge of sharing individual participant data.
We develop a novel construction of a surrogate Q function to approximate the standard Q function in the EM algorithm based on the pooled data which may not exist in practice because of privacy concerns. The surrogate Q function only requires sharing of aggregated data from sites. By construction, the gradient of surrogate Q function can match that of the standard Q function, and the high-order derivatives of the standard Q function are approximated using a density ratio tilting technique. Our theoretical analyses show that the resulting estimator from our heterogeneity-aware distributed EM algorithm retains the same contraction property as the estimator derived from the EM algorithm based on the pooled data and is consistent in estimating the unknown parameters.




Our main contributions are summarized as follows: First, we propose a novel heterogeneity-aware distributed EM algorithm based on a unified formulation of multi-site mixture models. It characterizes between-site heterogeneity using site-specific class proportions and retains interpretability by setting the same definition of each latent class across sites. Second, via the novel construction of the density ratio tilted surrogate Q function, we obtain an aggregated data based estimator which achieves the same contraction property as the estimator derived from the EM algorithm based on the pooled data. Third, we use the Gaussian mixture model as an important example and validate our results 
with theoretical analyses and numerical studies. 
 Lastly, although our main motivation is from the practical needs of latent class analysis,  EM algorithms are widely applicable to a broader class of missing data problems due to their computational stability and theoretical foundations \citep{Dempster1977}, and our distributed EM algorithm can shed new lights to problems beyond mixture models. 

Our work is different from previous research in several significant aspects: 
\begin{itemize}
	\item To the best of our knowledge, our approach is among the first attempts to develop distributed learning strategies for unsupervised learning tasks. Specifically, we propose a novel heterogeneous mixture model to tackle the unique challenge of unsupervised learning, i.e., the class mismatching issue, which often impedes the application of commonly used average-type methods.
	\item Our work significantly departs from previous studies that primarily investigate the theoretical properties of the traditional EM algorithm on a single data set. We instead work on multi-site studies where data is stored at multiple locations that cannot be pooled together and which have heterogeneous characteristics. As a result, the traditional EM algorithm cannot be directly applied, and its distributed version requires a novel design that includes significant changes to the original algorithm. This fundamental difference makes the theoretical analysis more challenging as well. 
	\item As far as we know, we are the first group to work on Q functions, which involve unknown parameters and depend on current parameter estimates through conditional expectation. As a result, for the distributed EM algorithm to achieve the contraction property, the verification of some regularity conditions, particularly the smoothness conditions in terms of the current estimates of the parameters, requires much more effort. 
\end{itemize}

The rest of the paper is organized as follows. In Section 2, we introduce the problem setup and our distributed algorithm. Section 3 develops the theoretical properties for a general model and the results are then applied to a Gaussian mixture model in Section 4. In Section 5 we present some simulation results. Finally we conclude in Section 6.  

\section{Methodology}\label{sec:method}

In what follows, we introduce the problem setup and our approach.

\subsection{A Proposed Heterogeneous Mixture Model Under Multi-site Settings}\label{subsec:back}

We formulate the problem in a distributed learning setting with $K$ study sites, and assume that the observations within each site come from a mixture of $S$ distributions, and each distribution is characterized by a set of distribution-specific parameters $\bmu_c\in\mathbb{R}^d$ for $c\in [S]$ where we use $[S]$ to denote $\{1, \ldots, S\}$ for simplicity, and the same rule applies to the following contexts. To characterize the heterogeneity across sites, we assume that different sites have their own site-specific mixing proportion  vectors, i.e., $\blambda_{j}$ may or may not be the same as $\blambda_{\ell}$ for $j \neq \ell$ where $\blambda_j=(\lambda_{j1},\ldots,\lambda_{jS})^\T$ with $\lambda_{jc}\in (0,1)$ representing the proportion of the $c$th distribution in the site $j$, and $\sum_{c=1}^S \lambda_{jc}=1$. To convey our idea in its simplest form, without loss of generality, in the rest of this paper we consider mixture models with $S = 2$. {Therefore, in the following context, $\lambda_j$ reduces to a scalar between $(0,1)$ for all $j\in[K]$.} The extension to settings with $S > 2$ is algebraically tedious but conceptually straightforward.


For simplicity of notation, we assume that in  site $j\in [K]$ we collect $n$ independently and identically distributed observations $\{\y_{ij}\in\mathbb{R}^d; i \in [n]\}$. Let  $N=nK$  be the total sample size across $K$ sites. Our method applies to unequal sample size settings with some modifications of notations. Hereafter, we assume equal sample size across sites for simplicity in notations. The $i$th observation in the $j$th site follows a distribution 
\begin{align}
 f_j(\y_{ij};\btheta_j)=\lambda_j f(\y_{ij}; \bmu_1) +(1-\lambda_j) f(\y_{ij}; \bmu_0), \label{eq-main}
\end{align}
where $\btheta_j=(\lambda_j, \bmu^\T)^\T$, $\bmu= (\bmu_0^\T, \bmu_1^\T)^\T$, and $f(\y;\bmu)$ is a known function indexed by the parameter $\bmu$. This formulation explicitly postulates the same set of latent classes across sites, which allows identification and characterization of latent classes shared across sites. Different $\lambda_j$'s account for the between-site heterogeneity, which indicates that the proportion of individuals from a given latent class can vary across sites. 
For convenience, we let $\btheta = (\bmu^\T,\bLambda^\T)^\T \in\Theta \subset\mathbb{R}^{2d+K} $  and $\bLambda=(\lambda_1, \dots,\lambda_K)^\T$, and the true parameter values are denoted by  $\btheta^* = (\bmu^{* \mathrm{\scriptscriptstyle T} },\bLambda^{* \mathrm{\scriptscriptstyle T}})^\T$.

\subsection{Standard EM Algorithm Based on the Pooled Data}\label{subsec:EM}

Model \eqref{eq-main} is essentially a missing data problem where each individual has an unobserved latent class membership denoted by  $Z_{ij} \sim Bernoulli(\lambda_j)$, and we have $$ \text{Pr}(\Y_{ij}=\y_{ij}\mid Z_{ij} = k) = f(\y_{ij}; \bmu_k), \text{ for } k=0,1.$$ To fit the latent class model using an EM algorithm,  we first consider an ideal situation where the data across all sites are available and could be pooled together. 

Based on the pooled data, we have the global complete loglikelihood function 
\begin{equation*}
L_C = \frac{1}{nK}\sum_{j=1}^{K}\sum_{i=1}^{n}\left[Z_{ij}\log\left\{\lambda_jf(\y_{ij}; \bmu_1)\right\} +\left(1-Z_{ij}\right)\log\left\{\left(1-\lambda_j\right)f(\y_{ij}; \bmu_0)\right\} \right].
\end{equation*}
With a current parameter $\btheta^t$, the Q function in the expectation step is calculated as 
\begin{align}
& Q  \left(\btheta \mid \btheta^t\right)  = E\left(L_C \mid \btheta^t, \y\right) =\nonumber\\
 &  \frac{1}{Kn}\sum_{j=1}^{K}\sum_{i=1}^{n}\left[ w_{\btheta_j^t}^j (\y_{ij})  \log\left\{\lambda_jf(\y_{ij}; \bmu_1)\right\} 
+\{1-w_{\btheta_j^t}^j \left(\y_{ij}\right)\}\log\left\{(1-\lambda_j)f(\y_{ij}; \bmu_0)\right\}\right],\label{eq:pooled-Q}
\end{align}
where the expectation is with respect to the unobservable variable $Z_{ij}$ conditional on the observed data and $\btheta^t$, and 
\begin{equation*}
w_{\btheta_j}^j (\y) = E\left(Z\mid \btheta_j, \y\right) = \frac{\lambda_j f(\y; \bmu_1)}{\lambda_j f(\y; \bmu_1) +(1-\lambda_j)f(\y; \bmu_0)}.
\end{equation*}
Also, the local Q function only involving data from the $j$th site is denoted by 
\begin{align*}
Q_j(\btheta_j\mid \btheta_j^t) = \frac{1}{n}\sum_{i=1}^{n} \left[ w_{\btheta_j^t}^j (\y_{ij})   \log\{\lambda_j f(\y_{ij}; \bmu_1)\} +\{1-w_{\btheta_j^t}^j (\y_{ij}) \}\log\{(1-\lambda_j)f(\y_{ij}; \bmu_0)\} \right].
\end{align*}
The maximization step then updates the estimate by $$
\btheta^{t+1} = M_n(\btheta^t) = \argmax_{\btheta} Q(\btheta\mid \btheta^t). 
$$
By iteratively applying these two steps, the EM algorithm can provide a sequence of estimates $\{\btheta^t\}_{t \geq 0}$ that monotonically increase the likelihood function and converge to a consistent estimator under standard regularity conditions \citep{louis1982finding, balakrishnan2017statistical}. We denote the pooled EM estimator as $\widehat \btheta=(\widehat \bmu^\T, \widehat \bLambda^\T)^\T$.  
However, the above classical EM algorithm   requires individual participant data from all sites to construct the Q function in  \eqref{eq:pooled-Q} at each iteration, which cannot be applied directly in distributed data settings where individual participant data cannot be pooled together. 

\subsection{An Aggregated Data Based Distributed EM Algorithm}\label{subsec:surrog}

Motivated by the common practice in many multi-center collaborations, we consider the setting where there is a lead site whose individual participant data are accessible  while only summary-level statistics are available from all other participating sites; for example, see \citet{duan2020learning} and \citet{luo2022odach}. In the  heterogeneous setting, data from the lead site  might not be representative of the overall population, which becomes a challenge in developing distributed algorithms. 
 To adjust for such heterogeneity, we propose a density ratio tilted surrogate Q function, which is constructed  using the individual participant data at the lead site and the gradients of the local Q function from the participating sites evaluated at the current estimates of the model parameters. {At each iteration, the participating sites only need to send the gradients to the lead site to construct the density ratio tilted surrogate Q function, and individual participant data from participating sites are not required to be shared in our distributed EM algorithm.}

Without loss of generality, we let site 1 be the lead site where individual participant data are accessible while all other sites can  only share summary-level statistics with the lead site.
With a current estimate $\btheta^t$, the density ratio tilted surrogate Q function is defined as
\begin{align}
\widetilde Q\left(\bmu\mid \btheta^t\right) = \check Q\left(\bmu\mid\btheta^t\right) + \left\langle \nabla_\bmu Q_{\bmu}\left(\bmu^t\mid\btheta^t\right)- \nabla_\bmu \check Q\left(\bmu^t\mid\btheta^t\right), \bmu \right\rangle, \label{DSQ-func}
\end{align}
where $Q_{\bmu} \left(\bmu \mid\btheta^t\right)$ includes only the terms  of the standard $Q$ function defined in (\ref{eq:pooled-Q}) that contain $\bmu$, i.e.,
\begin{align*}
Q_{\bmu}&\left(\bmu \mid\btheta^t\right) = \frac{1}{Kn}\sum_{j=1}^{K}\sum_{i=1}^{n}\left[w_{\btheta_j^t}^j (\y_{ij}) \log\{f(\y_{ij}; \bmu_1)\} +\{1-w_{\btheta_j^t}^j (\y_{ij})\}\log\{f(\y_{ij}; \bmu_0)\}\right],
\end{align*}
and $\check Q\left(\bmu\mid\btheta^t\right)$ is obtained using data only from the lead site, i.e.,
\begin{align*}
    \check Q&\left(\bmu\mid\btheta^t\right) \\ & =  \frac{1}{Kn}\sum_{j=1}^{K}\sum_{i=1}^{n}t\left(\y_{i1}, \bEta_j^t\right)
\left[ w_{\btheta_j^t}^j (\y_{i1})\log\{f(\y_{i1}; \bmu_1)\}  +
\{1-w_{\btheta_j^t}^j (\y_{i1})\}\log\{f(\y_{i1}; \bmu_0)\} \right],
\end{align*}
with the help of an density ratio term  defined as
\begin{equation}\label{eq_eta}
t\left(\y_{i1}, \bEta_j^t\right)=\frac{\lambda_j^tf\left(\y_{i1}; \bmu_1^t\right) +(1-\lambda_j^t)f\left(\y_{i1}; \bmu_0^t\right)}{\lambda_1^tf\left(\y_{i1}; \bmu_1^t\right) +\left(1-\lambda_1^t\right)f\left(\y_{i1}; \bmu_0^t\right)},\quad \bEta_j^t=\left(\bmu^{t \mathrm{\scriptscriptstyle T}},\lambda_1^t,\lambda_j^t\right)^\T. 
\end{equation}

The rationale behind the construction of the density ratio tilted surrogate Q function is to  approximate the global Q function  by  matching its gradient and the expectation of higher-order derivatives using only data from the lead site and the gradients $\nabla_{\bmu}Q_j(\btheta_j\mid\btheta_j^t)$ at $\btheta^t$ calculated from the rest of the sites. To see how the density ratio tilted surrogate Q function approximates the global Q function, it can be  verified that $\widetilde Q(\bmu\mid\btheta^t)$ has the same  gradient as $Q_{\bmu}(\bmu\mid\btheta^t)$ at $\bmu^t$. As for its second- and higher-order derivatives, we have
\begin{align}
\nabla_{\bmu}^p \widetilde Q\left(\bmu\mid\btheta^t\right) = \nabla_{\bmu}^p \check Q\left(\bmu\mid\btheta^t\right),\quad p \geq 2.\label{eq:gradient-match0}
\end{align}
A proof of equation \eqref{eq:gradient-match0} is provided in Supplementary Material S1.
In addition,  at the true values of the parameters $\btheta^*$, we have
\begin{align}
E_{\btheta_1^*} \{ \nabla_{\bmu}^p \check Q(\bmu\mid\btheta^*) \}= E_{\btheta^*} \{\nabla_{\bmu}^p Q_{\bmu}(\bmu\mid\btheta^*)\},\quad p \geq 1, \label{eq:gradient-match}
\end{align}
where $E_{\btheta_1^*}(\cdot)$ stands for the expectation with respect to the distribution $ f_1(\y;\btheta_1)$ and $E_{\btheta^*}(\cdot)$ represents the expectation with respect to the distribution of the pooled data of $K$ data sets. Equations \eqref{eq:gradient-match0} and \eqref{eq:gradient-match} imply that at the true $\btheta^*$, the density ratio tilted surrogate Q function have the same higher-order derivatives as the global Q function at the population level. Since the true coefficients $\btheta^*$ are unknown, we use the estimate $\btheta^t$ at the current iteration $t$ to approximate $\btheta^*$ and the resulting $\check Q(\bmu\mid\btheta^t)$ satisfies
\begin{align}
E_{\btheta_1^*} \{\nabla_{\bmu}^2 \check Q\left(\bmu\mid\btheta^t\right) \}= E_{\btheta^*} \{\nabla_{\bmu}^2 Q_{\bmu}\left(\bmu\mid\btheta^t\right)\} + o(1)
\label{eq:gradient-match-sample}
\end{align}
when $\nabla_{\bmu}^2 \check Q(\bmu\mid\btheta^t)$ and $\nabla_{\bmu}^2 Q_{\bmu}(\bmu\mid\btheta^t)$ satisfy standard 
smoothness conditions in $\btheta^t$ (e.g., Lipschitz continuity
) and $E(\|\bmu^t-\bmu^*\|^2_2) =o(1)$. Therefore, the density ratio tilted surrogate Q function $\widetilde Q(\bmu\mid\btheta^t)$ and the function $Q_{\bmu}(\bmu\mid\btheta^t)$ have the same gradient at $\bmu^t$, and their second-order derivatives have expectations with corresponding elements that only differ by $o(1)$, and these properties  
ensure
\begin{align}
    \widetilde Q\left(\bmu\mid\btheta^t\right) - Q_{\bmu}\left(\bmu\mid\btheta^t\right) = O\left\{ n^{-1/2} + o(1)\right\} \left(\| \bmu^t - \bmu \|_2^2 + \| \bmu^t - \bmu \|_2^3 \right).
\end{align}
Also, the construction of the density ratio tilted surrogate Q function only requires sharing the gradients $\nabla_{\bmu}Q_j(\btheta_j^t\mid\btheta_j^t)$ from the participating sites (i.e., $j > 1$). The communication cost is low since only the first-order terms are shared. 

It is worthwhile to compare the density ratio tilted surrogate Q function with the surrogate likelihood function proposed by \cite{jordan2018communication}. Specifically, instead of working with likelihood functions which are often non-convex in the case of a mixture model, here we focus on approximating the global Q function of the EM algorithm which has better computational stability. With the additional density ratio tilting component, we successfully adjust for the difference between the local data and the pooled data, which enables approximating the standard pooled $Q$ function using the local data with the presence of site-specific nuisance parameters. 

Based on the density ratio tilted surrogate Q function \eqref{DSQ-func}, we update the estimates as
\begin{align}
	\widetilde\bmu^{t+1} & 
	= \arg\max_{\bmu} \widetilde Q(\bmu\mid\btheta^t),\label{mu_upd} \\
	\widetilde\lambda_j^{t+1} & 
	=n^{-1}\sum_{i=1}^n w_{\btheta_j^{t}}^j(\y_{ij})\label{lam_upd}.
\end{align}
For simplicity, we let $ \widetilde\btheta^{t+1} = \widetilde M_n(\btheta^t) = (\widetilde\bmu^{t+1}, \widetilde\lambda_1^{t+1}, \ldots, \widetilde\lambda_K^{t+1}) 
$. The reason that we update $\bmu$ and the $\lambda_j$'s separately from different objective functions is due to the fact that only the data in the $j$th site  contain the information of $\lambda_j$ while data from all sites are informative to estimate $\bmu$. 

To summarize, at the $(t+1)$-st iteration, with the estimate $\btheta^t$ from the previous iteration, a density ratio tilted surrogate Q function $\widetilde Q(\bmu\mid\btheta^t)$ is built at the lead site based on $\{\y_{i1}\}_{i=1}^n$ and $\{\nabla_{\bmu}Q_j(\btheta_j^t\mid\btheta_j^t)\}_{j=1}^K$, and then the maximization step is implemented to update $\btheta$. The iteration continues until the algorithm reaches convergence or the prespecified iteration number, and we 
denote the final estimator as the distributed EM estimator $\widetilde \btheta=(\widetilde \bmu^\T, \widetilde \bLambda^\T)^\T$. 
 Importantly, the whole procedure does not require sharing individual participant data from the participating sites and only aggregated data of the same dimension as $\bmu$ are transferred between sites. Therefore, this algorithm avoids communicating individual participant data, while accounting for between-site heterogeneity.  We summarize our algorithm below.
\begin{algorithm}
Algorithm 1: the distributed EM algorithm
	\begin{tabbing}
        \qquad \enspace 1.\enspace \enspace  Input: data $\{\y_{i1}\}_{i=1}^n$, initial estimates $\bmu^0$; \\
		\qquad \enspace 2.\enspace \enspace Initialize with $\widetilde \bmu^0 = \bmu^0$; \\
		\qquad \enspace 3.\enspace \enspace From $t=0$ iterate until converge: \\
		\qquad \enspace 4.  \enspace\quad\enspace In Site $j=1$ to $j=K$ \\
		\qquad \enspace 5.\enspace \qquad\quad Compute and transfer $\widetilde\lambda_j^t$ (by \eqref{lam_upd}) and  $\nabla_{\bmu} Q_j(\widetilde\btheta_j^t\mid\widetilde\btheta_j^t)$ to Site 1;\\
		\qquad \enspace 6. \enspace\quad\enspace  In Site 1 \\
     	\qquad \enspace 7.\enspace \qquad \qquad Construct $\widetilde Q(\bmu\mid\widetilde\btheta^t)$ using $\widetilde\btheta^t$ and $\{\nabla_{\bmu}Q_j(\widetilde\btheta_j^t\mid\widetilde\btheta_j^t)\}_{j=1}^K$;\\
	    \qquad \enspace 8.\enspace \qquad \qquad  Obtain $\widetilde\bmu^{t+1}$ by solving $\nabla_{\bmu} \widetilde Q(\bmu\mid\widetilde\btheta^t)=\0$ and broadcast $\widetilde\bmu^{t+1}$;\\	
	   	\qquad \enspace 9. \enspace Output: $\widetilde\btheta$  \\
	\end{tabbing}\label{alg1}
 \end{algorithm}
\begin{remark}
	A natural choice of the initial estimates $\bmu^0$ is from an EM algorithm fitted locally in the lead site. Theoretically, we show in the next section that an initial value has to fall in a neighborhood of true parameter values to guarantee the  convergence and consistency of our final estimator.  
	Under mild regularity conditions, the local EM estimator satisfies the initialization condition  when the local sample size is not too small. In some distributed algorithms, the initial estimator can be obtained by averaging local estimates from all sites, which can further improve the accuracy  \citep{huang2019distributed}. However, in the mixture model setting, the locally identified latent classes need to be matched before obtaining  an average-type initial value, and the matching step may introduce errors especially when the estimation accuracy of some local estimators is low. Therefore, in practice, we suggest using a site with relatively large sample size for initialization.  Given $\bmu^0$, the initialization of $\lambda_j$'s is achieved from Equation  \eqref{lam_upd}  by plugging in $\btheta^t_j = (\bmu^0,  0.5)$ or from optimizing \eqref{eq-main} with the given $\bmu^0$. When obtaining  $\bmu^0$  from fitting an EM algorithm locally in the lead site,  initialization of the local EM need to be carefully chosen as discussed in \citet{biernacki2003choosing}.
	
\end{remark}

\begin{remark}
A classical EM algorithm often requires two layers of iteration: the outer loop to update the Q function and an inner loop to optimize the Q function unless there is a closed form solution. Our distributed EM algorithm enables the inner loop to be completely conducted at the lead site without further communications among sites, thanks to the construction of the density ratio tilted surrogate Q function. Although the outer loop is unavoidable, the reduction in the communication cost of the inner loop is substantial in a distributed analysis setting.
\end{remark}

\section{Theoretical Analysis}\label{sec:generalTheory}

In this section we investigate the theoretical properties of our distributed EM algorithm. The contraction property of an algorithm states the ability and the speed of an algorithm to shorten the distance between its iterates and the true parameter value through each iteration \citep{cai2019chime}. Under Assumptions \ref{loc_conc}--\ref{assump:init} described below, we show that 
our distributed EM algorithm can achieve the same contraction property as the EM algorithm based on the pooled data. In other words, the distributed EM algorithm can achieve the best possible estimation performance to learn a mixture model without sharing individual participant data.  

We first introduce some notations. We use  $\|\cdot\|_2$ to denote the $\ell_2$ norm when applied to a vector and it is the operator norm if applied to a matrix, i.e., the largest singular value of a matrix. To measure the distance between two estimates, for any $\btheta,\widetilde\btheta\in\Theta$, we define 
\begin{eqnarray*} d_2(\btheta,\widetilde\btheta)&=&\left(\hbox{$\sum_{j=1}^K$}|\lambda_j-\widetilde\lambda_j|^2\right)^{1/2}+\hbox{$\sum_{k=0}^1$}\|\bmu_k-\widetilde\bmu_k\|_2, \\
d_2(\btheta_j,\widetilde\btheta_j)&=&|\lambda_j-\widetilde\lambda_j|+\hbox{$\sum_{k=0}^1$}\|\bmu_k-\widetilde\bmu_k\|_2. 
\end{eqnarray*}
Also, we write $a_n \lesssim b_n$ for two sequences $\{a_n\}$ and $\{b_n\}$ when there exists a constant $c$ such that $a_n \leq c b_n$ for all $n$. 

Suppose the parameter space of $\btheta_j$, denoted by $\Theta_j$, is a compact and convex set and the true parameter is an interior point of $\Theta_j$. This implies that the parameter space of $\bmu$ is also a compact and convex set and the true parameter $\bmu^*$ is an interior point. 
We need the following assumptions to derive the contraction properties for our distributed EM algorithm estimator.

 \begin{assumption}[Local strong concavity]\label{loc_conc}
Let $\mathcal{Q}(\btheta\mid\btheta^t)=E\{Q(\btheta\mid\btheta^t)\}$ be the population objective function and $\mathcal{Q}^*(\btheta)=\mathcal{Q}(\btheta\mid\btheta^*)$. There exists some $\mu_+, \mu_- > 0$ s.t.
$-\mu_+ \I \preceq \nabla_{\btheta}^2 \mathcal{Q}^*(\btheta^*) \preceq -\mu_- \I $ where $\I$ is an identity matrix and $\A \preceq \B $ means $\B-\A$ is positive semidefinite.  
 \end{assumption}

{
\begin{assumption}[Smoothness]\label{assump:smooth}
	For each $j \in [K]$, define 
	\begin{align*}
		h(\y;\bmu,\btheta_j') = w_{\btheta_j'}^j (\y)\log\{f(\y; \bmu_1)\} +\{1-w_{\btheta_j'}^j (\y)\}\log\{f(\y; \bmu_0)\}. 
	\end{align*}
	Let $U_{\btheta^*} (\rho) = \{\btheta';\|\btheta'-\btheta^*\|_2\le\rho\}$ be a neighborhood around $\btheta^*$ for some radius $\rho>0$, and $U_{\bmu^*}(\rho)$, $U_{\btheta_j^*}(\rho)$, and $U_{\bEta_j^*}(\rho)$ are defined in a similar way, where $\bEta_j$ is defined in (\ref{eq_eta}). There exist some functions $m_k (\cdot)$, $k=1,2,3,4$, 
	such that for 
	any $\bmu$, $\bmu' \in U_{\bmu^*}(\rho)$, $\bar\btheta_j$, $\bar\btheta_j'\in U_{\btheta_j^*}(\rho)$, $\bar\bEta_j$, $\bar\bEta_j'\in U_{\bEta_j^*}(\rho)$ with any $j\in[K]$, we have
\begin{align*}
    & \|\nabla_{\bmu\bmu}^2 h(\y_{ij};\bmu, \bar\btheta_j) - \nabla_{\bmu\bmu}^2 h(\y_{ij};\bmu', \bar\btheta_j')\|_2 \le m_1(\y_{ij})(\|\bmu-\bmu'\|_2+\|\bar\btheta_j-\bar\btheta_j'\|_2),\\
    & \|t(\y_{i1},\bar\bEta_j)\nabla_{\bmu\bmu}^2 h(\y_{i1};\bmu, \bar\btheta_j)  -t(\y_{i1}, \bar\bEta_j')\nabla_{\bmu\bmu}^2 h(\y_{i1};\bmu', \bar\btheta_j')\|_2 \le m_2(\y_{i1})(\|\bmu-\bmu'\|_2+\|\bar\bEta_j-\bar\bEta_j'\|_2),\\ 
    & | w_{\bar\btheta_j}(\y_{ij}) - w_{\bar\btheta_j'}(\y_{ij}) | \leq m_3(\y_{ij})\|\bar\btheta_j- \bar\btheta_j'\|_2, \\
    & \|\nabla_{\bmu\btheta_j}^2 h(\y_{ij};\bmu, \bar\btheta_j) - \nabla_{\bmu\btheta_j}^2 h(\y_{ij};\bmu', \bar\btheta_j')\|_2 \le m_4(\y_{ij})(\|\bmu-\bmu'\|_2+\|\bar\btheta_j-\bar\btheta_j'\|_2),
\end{align*}
where $t(\y_{i1},\bar\bEta_j)$ is given in (\ref{eq_eta}). 
We require that $E\{m_k(\Y_{ij})^8\} \leq L^8$ and $E([m_k(\Y_{ij})- E \{ m_k(\Y_{ij})\} ]^8) \leq L^8$ with some finite constant $L$ for all $k$ and $j$. 
	Also, there are finite constants $G$, $H$, $J$ and $C$ such that the first and the second partial derivatives of $h$ exist and satisfy
	\begin{align*}
		& E\{\|\nabla_{\bmu} h(\y_{ij},\bmu^*,\btheta_j^*)\|_2^8\}\leq G^8, \\
		& E\left[\|\nabla_{\bmu\bmu}^2 h(\y_{ij},\bmu^*,\btheta_j^*)- E \{ \nabla_{\bmu\bmu}^2 h(\y_{ij},\bmu^*,\btheta_j^*)\}\|_2^8\right] \leq H^8, \\
		& E\left[\| t(\y_{i1},\bEta_j^*) \nabla_{\bmu\bmu}^2 h(\y_{i1},\bmu^*,\btheta_j^*)-E \{\nabla_{\bmu\bmu}^2 h(\y_{ij},\bmu^*,\btheta_j^*)\}\|_2^8\right]\leq J^8, \\
		& E\left[\|\nabla_{\bmu\btheta_j}^2 h(\y_{ij},\bmu^*,\btheta_j^*)- E \{\nabla_{\bmu\btheta_j}^2 h(\y_{ij},\bmu^*,\btheta_j^*)\}\|_2^8\right]\leq C^8.
	\end{align*}
\end{assumption}
}

\begin{assumption}[Initialization and pooled contraction]\label{assump:init} 
Given an initial estimator that
satisfies $d_2(\btheta_j^0, \btheta_j^*) \lesssim \{\log(n)/n\}^{1/2}$, 
with probability at least $1-K/n-1/(Kn)$, the EM algorithm iterates $\{ \btheta^t \}_{t \geq 1 }$ based on the pooled data satisfy $d_2(\btheta^t, \btheta^* ) \leq \kappa d_2(\btheta^{t-1}, \btheta^*) + O\{K\log(n)/n\}^{1/2}$ with $\kappa \in (0, 1)$. In particular, there is
$d_2(\btheta^t_j, \btheta_j^* ) = O_p\{\log(n)/n\}^{1/2}$ for any $j \in [K]$. 
\end{assumption}

Assumption \ref{loc_conc} requires the population objective function to be concave around $\btheta^*$ to induce consistency of the estimator. Assumption \ref{assump:smooth} requires the Hessian matrices and the function $w^j_{\btheta_j}(\y)$ to be smooth in the neighborhood of the optimal point $\btheta^*$, and is essential 
to control the approximation error between the distributed estimator and the pooled estimator \citep{zhang2013communication
}. The moment conditions control the tail of the gradient and Hessian of the Q functions.  Assumption \ref{assump:init} requires the EM algorithm iterates obtained from the pooled data are contracting towards $\btheta^*$ once the initial estimator is good enough, 
and it is a necessary basis since the proposed distributed EM algorithm approximates the pooled EM and the pooled EM needs to converge. Moreover, 
since $\sum_{k=0}^1\| \bmu_k^t - \bmu_k^* \|_2 \leq d_2(\btheta^t_j, \btheta_j^* )$, we also have $\sum_{k=0}^1\| \bmu_k^t - \bmu_k^* \|_2 \lesssim \{\log(n)/n\}^{1/2}$. 
We ignore the $\log(n)$ factor in the following contexts since it is a small term compared to $n$. The verification of Assumption \ref{assump:init} for Gaussian mixture models is shown in the next section. 

With Assumptions \ref{loc_conc}--\ref{assump:init}, 
we obtain contraction properties for the proposed estimator. 
Recall that  $\btheta^{t}=M_n(\btheta^{t-1}) = \argmax_{\btheta} Q(\btheta\mid\btheta^{t-1})$ is the EM algorithm estimator based on the pooled data and $\widetilde\btheta^{t}=\widetilde M_n(\widetilde\btheta^{t-1}) $ is the distributed EM algorithm estimator. 
Since 
\begin{align*}
	d_2(\widetilde\btheta^{t}, \btheta^* ) 
	\leq d_2( \widetilde\btheta^{t}, \btheta^{t} ) + d_2(\btheta^{t},\btheta^*),
\end{align*}
and we bound $d_2(\btheta^{t},\btheta^*)$ by Assumption \ref{assump:init}, it remains to control the approximation error $d_2(\widetilde\btheta^{t}, \btheta^{t})$.

\begin{lemma}\label{lemma1}
Under Assumptions \ref{loc_conc}--\ref{assump:init}, 
with probability at least $1- n^{-2/3} - K/n$, 
$d_2(\widetilde\btheta_j^{t}, \btheta_j^{t}) = O(n^{-5/6})$.
\end{lemma}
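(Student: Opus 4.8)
The plan is to compare the distributed iterates $\widetilde\btheta^{t}=\widetilde M_n(\widetilde\btheta^{t-1})$ with the pooled iterates $\btheta^{t}=M_n(\btheta^{t-1})$, which start from the same value $\widetilde\btheta^0=\btheta^0$, through the recursion
\begin{align*}
 d_2(\widetilde\btheta^{t},\btheta^{t}) \;\le\; d_2\big(\widetilde M_n(\widetilde\btheta^{t-1}),\widetilde M_n(\btheta^{t-1})\big) \;+\; d_2\big(\widetilde M_n(\btheta^{t-1}),M_n(\btheta^{t-1})\big),
\end{align*}
and to argue by induction on $t$. By Assumption~\ref{assump:init}, on an event of probability at least $1-K/n-1/(Kn)$ the pooled iterates satisfy $d_2(\btheta_j^{t},\btheta_j^{*})\lesssim n^{-1/2}$ for all $j$ and $t$ (ignoring the $\log n$ factor), so $\btheta^{t-1}\in U_{\btheta^{*}}(\rho)$ in the sense of Assumption~\ref{assump:smooth}; the inductive hypothesis $d_2(\widetilde\btheta_j^{t-1},\btheta_j^{t-1})=O(n^{-5/6})$ then puts $\widetilde\btheta^{t-1}$ in the same neighbourhood. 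It remains to control the two terms on the right.

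\emph{One-step discrepancy.} In $d_2(\widetilde M_n(\btheta^{t-1}),M_n(\btheta^{t-1}))$ both maps are evaluated at the same current iterate $\btheta^{t-1}$. Their $\lambda_j$-updates then coincide exactly (both equal $n^{-1}\sum_i w^j_{\btheta_j^{t-1}}(\y_{ij})$ by \eqref{lam_upd} and the pooled M-step), so this term reduces to $\sum_k\|\bmu_k^{\dagger}-\bmu_k^{t}\|_2$, where $\bmu^{\dagger}=\argmax_\bmu\widetilde Q(\bmu\mid\btheta^{t-1})$ and $\bmu^{t}=\argmax_\bmu Q_\bmu(\bmu\mid\btheta^{t-1})$ is the $\bmu$-part of $\btheta^{t}$. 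From the first-order conditions $\nabla_\bmu\widetilde Q(\bmu^{\dagger}\mid\btheta^{t-1})=\0$ and $\nabla_\bmu Q_\bmu(\bmu^{t}\mid\btheta^{t-1})=\0$, Assumption~\ref{loc_conc}, and the identity $\nabla_{\bmu}^2\widetilde Q=\nabla_{\bmu}^2\check Q$ from \eqref{eq:gradient-match0}, a Taylor expansion shows the Hessian of $\widetilde Q$ is uniformly negative definite on the neighbourhood, hence $\|\bmu^{\dagger}-\bmu^{t}\|_2\lesssim\|\nabla_\bmu\widetilde Q(\bmu^{t}\mid\btheta^{t-1})\|_2$. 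Subtracting the second first-order condition and using the gradient-matching property built into \eqref{DSQ-func} (that $\nabla_\bmu\widetilde Q$ and $\nabla_\bmu Q_\bmu$ agree at $\bmu^{t-1}$, the $\bmu$-part of the current iterate), one further expansion gives
\begin{align*}
 \nabla_\bmu\widetilde Q(\bmu^{t}\mid\btheta^{t-1}) \;=\; \int_0^1\!\big\{\nabla_{\bmu}^2\check Q(\bmu_s\mid\btheta^{t-1})-\nabla_{\bmu}^2 Q_\bmu(\bmu_s\mid\btheta^{t-1})\big\}\,ds\;\big(\bmu^{t}-\bmu^{t-1}\big),\quad \bmu_s=\bmu^{t-1}+s(\bmu^{t}-\bmu^{t-1}).
\end{align*}
The Hessian difference is split into (i)--(ii) the fluctuations of $\nabla_{\bmu}^2\check Q$ and of $\nabla_{\bmu}^2 Q_\bmu$ about their expectations, which by \eqref{eq:gradient-match} agree at $\btheta^{*}$, and (iii) a population bias equal to zero at $\btheta^{*}$ and therefore $O(\|\btheta^{t-1}-\btheta^{*}\|_2)=O(n^{-1/2})$ by the Lipschitz bounds $m_1,m_2$ of Assumption~\ref{assump:smooth}; Lipschitz transfer (again via $m_1,m_2$) moves everything from the random point $\bmu_s$ to $\bmu^{*}$. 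Because the centered Hessians have eight finite moments (the $H^8,J^8$ bounds of Assumption~\ref{assump:smooth}) and $\nabla_{\bmu}^2\check Q$ is an average of $n$ i.i.d.\ terms built from the lead site, Markov's inequality on the eighth moment together with a union bound over $j\in[K]$ shows that on an event of probability at least $1-n^{-2/3}-K/n$ the Hessian difference is $O(n^{-1/3})$. Since $\bmu^{t}$ and $\bmu^{t-1}$ lie within $O(n^{-1/2})$ of $\bmu^{*}$ by Assumption~\ref{assump:init}, $\|\bmu^{t}-\bmu^{t-1}\|_2=O(n^{-1/2})$, and multiplying yields $\sum_k\|\bmu_k^{\dagger}-\bmu_k^{t}\|_2\lesssim n^{-1/3}\cdot n^{-1/2}=n^{-5/6}$ (the cubic remainder in the bound for $\widetilde Q-Q_\bmu$ displayed before Lemma~\ref{lemma1} is of smaller order and is absorbed).

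\emph{Propagated error and conclusion.} I would next show that $\widetilde M_n$ is a $\kappa$-contraction in $d_2$ on the neighbourhood, $d_2(\widetilde M_n(\widetilde\btheta^{t-1}),\widetilde M_n(\btheta^{t-1}))\le\kappa\,d_2(\widetilde\btheta^{t-1},\btheta^{t-1})$ with $\kappa\in(0,1)$: the $\lambda_j$-block of $\widetilde M_n$ is the classical EM update and its $\bmu$-block differs from that of $M_n$ only by the $O\{n^{-1/2}+o(1)\}$-size perturbation of the objective recorded before Lemma~\ref{lemma1}, so the Jacobian of $\widetilde M_n$ at $\btheta^{*}$ has operator norm $\kappa+o(1)<1$ (the population EM map being a contraction by Assumption~\ref{loc_conc}, which also underlies Assumption~\ref{assump:init}), and Assumption~\ref{assump:smooth} extends this Jacobian bound uniformly over the shrinking neighbourhood. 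Plugging both estimates into the recursion gives $d_2(\widetilde\btheta^{t},\btheta^{t})\le\kappa\,d_2(\widetilde\btheta^{t-1},\btheta^{t-1})+O(n^{-5/6})$, and iterating from $\widetilde\btheta^0=\btheta^0$ yields $d_2(\widetilde\btheta_j^{t},\btheta_j^{t})\le d_2(\widetilde\btheta^{t},\btheta^{t})=O(n^{-5/6})$ on the stated event, which closes the induction (and keeps every $\widetilde\btheta^{t}$ in the neighbourhood). The main obstacle is the one-step $\bmu$-discrepancy: unlike the classical EM operator, $\widetilde Q(\cdot\mid\btheta^{t-1})$ depends on the current iterate $\btheta^{t-1}$ simultaneously through the weights $w^j_{\btheta_j^{t-1}}$, the density-ratio tilt $t(\cdot,\bEta_j^{t-1})$ of \eqref{eq_eta}, and the linear correction term of \eqref{DSQ-func}, so the gradient- and Hessian-matching identities, the Lipschitz transfer to $\btheta^{*}$, and the eighth-moment/Markov arguments must all be carried out in the current-iterate directions as well as in $\bmu$; this is precisely the extra smoothness bookkeeping that Assumption~\ref{assump:smooth} is built to supply, and it is what pins down the rate $n^{-5/6}$ and the failure probability $n^{-2/3}+K/n$.
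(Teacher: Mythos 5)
Your treatment of the one-step discrepancy (both maps evaluated at the same current iterate) is essentially the paper's own argument: strong concavity plus Zhang et al.'s Lemma 6 reduce the problem to $\|\nabla_\bmu\widetilde Q(\bmu^{t}\mid\btheta^{t-1})\|_2$, the gradient-matching at the current $\bmu$ turns this into an integrated Hessian difference $\nabla^2_{\bmu}\check Q-\nabla^2_{\bmu}Q_\bmu$, which is split into Lipschitz terms of size $O(n^{-1/2})$ and a centered fluctuation of size $O(n^{-1/3})$ controlled by the eighth-moment bounds and Markov's inequality, and multiplying by $\|\bmu^{t}-\bmu^{t-1}\|_2=O(n^{-1/2})$ gives $O(n^{-5/6})$ on an event of probability $1-n^{-2/3}-K/n$. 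That part is sound and matches the supplement.

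The gap is in your propagated-error step. You assert that $\widetilde M_n$ is a $\kappa$-contraction with $\kappa\in(0,1)$ on the neighbourhood, deducing this from Assumption \ref{loc_conc} plus a small perturbation. Local strong concavity of $\mathcal{Q}^*$ at $\btheta^*$ only lower-bounds the curvature of the M-step objective; contraction of an EM-type operator additionally requires that the maximizer move slowly as the \emph{conditioning} iterate moves, i.e.\ a bound on the cross-derivative relative to the curvature (a first-order/gradient-stability condition), and nothing in Assumptions \ref{loc_conc}--\ref{assump:init} forces that ratio below one. Assumption \ref{assump:init} is a statement that the pooled iterates contract \emph{toward $\btheta^*$}; it is not a Lipschitz bound for $M_n$ between two arbitrary nearby points, and it says nothing about the surrogate map $\widetilde M_n$ (whose $\bmu$-block depends on the current iterate through the weights, the density-ratio tilt, and the linear correction simultaneously). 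The same objection applies to the $\lambda_j$-block: the sample update $\lambda\mapsto n^{-1}\sum_i w^j_{\btheta_j}(\y_{ij})$ is Lipschitz in $\btheta_j$ with constant $n^{-1}\sum_i m_3(\y_{ij})\approx E\,m_3$, which need not be smaller than one. The paper never needs such an operator-level contraction: for $t>0$ it propagates the inductive hypothesis $d_2(\widetilde\btheta_j^{t},\btheta_j^{t})=O(n^{-5/6})$ through one update using the cross-derivative smoothness conditions of Assumption \ref{assump:smooth} (the $m_3$ bound for the $\lambda_j$-update and the $m_4$, $\nabla^2_{\bmu\btheta_j}h$ bounds for the $\bmu$-update), introducing the intermediate point $\bar\bmu^{t+1}$ defined by $\nabla_\bmu Q_{\bmu}(\bmu\mid\widetilde\btheta^{t})=\0$ (Lemma S.2); the propagated term then enters only as a bounded constant times $O(n^{-5/6})$, which already yields the per-iteration claim, while the geometric-series summation you invoke is neither available under the stated assumptions nor required for the lemma. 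To repair your proof you would have to either establish the contraction of $\widetilde M_n$ under extra conditions, or restructure the different-starting-point step along the paper's lines.
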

 
A proof of Lemma \ref{lemma1} is provided in Supplementary Material S2. Lemma \ref{lemma1} indicates that for estimating parameters in a single site, the proposed estimator approximates the pooled EM estimator with an approximation error of order $n^{-5/6}$, which is smaller than  the estimation error of the pooled EM algorithm of order $n^{-1/2}$. Thus, combining all parameters across $K$ sites, $d_2(\widetilde\btheta^t,\btheta^t)$ will be no greater than $n^{-1/3}(K/n)^{1/2}$, which is ignorable compared to the estimation error occurred at each iteration of the pooled data-based EM algorithm.  Therefore, our distributed EM algorithm estimator achieves 
a similar contraction behavior as the pooled estimator at each iteration. We summarize the analysis results formally in the following theorem. 
 
\begin{theorem}\label{coro1}
Under Assumptions \ref{loc_conc}--\ref{assump:init}, with probability at least $1-n^{-2/3}-K/n$, we have $$ d_2(\widetilde\btheta^t, \btheta^*) \leq \kappa^t d_2(\btheta^0, \btheta^*) + O\{(K/n)^{1/2}\}.$$ Thus, when $t$ is large, we have 
$d_2(\widetilde\btheta^t, \btheta^*) = O\{(K/n)^{1/2}\}.
$
\end{theorem}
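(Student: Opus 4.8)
The plan is to derive the stated bound by assembling two ingredients that are already in place: the per-site approximation guarantee of Lemma~\ref{lemma1}, which controls how far the distributed iterate $\widetilde\btheta^t$ is from the pooled iterate $\btheta^t=M_n(\btheta^{t-1})$, and the pooled contraction built into Assumption~\ref{assump:init}, which controls how far $\btheta^t$ is from $\btheta^*$. The glue is the triangle-inequality decomposition displayed just before the theorem, $d_2(\widetilde\btheta^t,\btheta^*)\le d_2(\widetilde\btheta^t,\btheta^t)+d_2(\btheta^t,\btheta^*)$, where the pooled trajectory is initialized at the same point as the distributed one, so that $\widetilde\btheta^0=\btheta^0$ and hence $d_2(\widetilde\btheta^0,\btheta^*)=d_2(\btheta^0,\btheta^*)$.

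First I would unroll the pooled recursion. On the event of probability at least $1-K/n-1/(Kn)$ from Assumption~\ref{assump:init}, the pooled iterates satisfy $d_2(\btheta^t,\btheta^*)\le\kappa\,d_2(\btheta^{t-1},\btheta^*)+O\{K\log(n)/n\}^{1/2}$ with $\kappa\in(0,1)$. Iterating this linear recursion from $t$ down to $0$ and summing the geometric series $1+\kappa+\cdots+\kappa^{t-1}\le(1-\kappa)^{-1}$ gives $d_2(\btheta^t,\btheta^*)\le\kappa^t d_2(\btheta^0,\btheta^*)+(1-\kappa)^{-1}O\{K\log(n)/n\}^{1/2}$, which, after dropping the $\log n$ factor as done elsewhere in the paper, is $\kappa^t d_2(\btheta^0,\btheta^*)+O\{(K/n)^{1/2}\}$.

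Next I would pass from the single-site rate in Lemma~\ref{lemma1} to the aggregate metric $d_2$ over all $K$ sites. Since $\bmu$ is shared across sites, Lemma~\ref{lemma1} directly gives $\sum_{k=0}^1\|\widetilde\bmu_k^t-\bmu_k^t\|_2\le d_2(\widetilde\btheta_j^t,\btheta_j^t)=O(n^{-5/6})$; and since $|\widetilde\lambda_j^t-\lambda_j^t|\le d_2(\widetilde\btheta_j^t,\btheta_j^t)=O(n^{-5/6})$ for every $j\in[K]$, the mixing-proportion part contributes $\big(\sum_{j=1}^K|\widetilde\lambda_j^t-\lambda_j^t|^2\big)^{1/2}=O(K^{1/2}n^{-5/6})$. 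Hence $d_2(\widetilde\btheta^t,\btheta^t)=O(K^{1/2}n^{-5/6})=O\{n^{-1/3}(K/n)^{1/2}\}=o\{(K/n)^{1/2}\}$, so this term is absorbed into the pooled error term. On the intersection of the Lemma~\ref{lemma1} event and the Assumption~\ref{assump:init} event — whose probability is still at least $1-n^{-2/3}-K/n$ once the $O\{1/(Kn)\}$ term is absorbed into $n^{-2/3}$ — the triangle inequality then yields $d_2(\widetilde\btheta^t,\btheta^*)\le\kappa^t d_2(\btheta^0,\btheta^*)+O\{(K/n)^{1/2}\}$, and letting $t$ grow so that $\kappa^t\to0$ gives $d_2(\widetilde\btheta^t,\btheta^*)=O\{(K/n)^{1/2}\}$, as claimed.

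Given Lemma~\ref{lemma1}, the theorem is a short assembly, so the real work sits upstream: one must ensure that the bound of Lemma~\ref{lemma1} applies simultaneously to every iterate $t$ entering the argument, which in turn requires an induction keeping each $\widetilde\btheta^{t-1}$ inside the neighborhood $U_{\btheta^*}(\rho)$ on which Assumption~\ref{assump:smooth} and the pooled contraction are valid, and requires that the high-probability events be intersected without eroding the stated probability. These are precisely the points handled within the proof of Lemma~\ref{lemma1}, so at the level of Theorem~\ref{coro1} it suffices to invoke that lemma and Assumption~\ref{assump:init} and carry out the geometric summation together with the across-site aggregation of the $n^{-5/6}$ rate.
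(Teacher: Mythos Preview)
Your proposal is correct and follows essentially the same route as the paper: triangle inequality $d_2(\widetilde\btheta^t,\btheta^*)\le d_2(\widetilde\btheta^t,\btheta^t)+d_2(\btheta^t,\btheta^*)$, then unrolling the pooled contraction of Assumption~\ref{assump:init} for the second term and aggregating Lemma~\ref{lemma1}'s per-site $O(n^{-5/6})$ bound into $O(K^{1/2}n^{-5/6})=O(n^{-1/3}(K/n)^{1/2})$ for the first, which is absorbed into the $O\{(K/n)^{1/2}\}$ remainder. The paper writes the aggregation as $\sqrt{K}\max_j d_2(\widetilde\btheta_j^t,\btheta_j^t)$ rather than separating the shared $\bmu$ and site-specific $\lambda_j$ pieces as you do, but the argument and conclusion are the same.
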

A proof of Theorem \ref{coro1} is in Supplementary Material S3. Theorem \ref{coro1} shows that the distributed estimator $\widetilde\btheta$ is a consistent estimator and achieves the same estimation accuracy as the pooled estimator.  In the next section, we further illustrate our theoretical results under  Gaussian mixture models.

\section{Distributed EM Algorithm for Heterogeneous Gaussian Mixture Models}\label{sec:gaussian}


Within site $j\in[K]$, although our method is generally applicable to any finite mixture models, we assume that the $i$th subject independently and identically follows a two-component Gaussian mixture model (i.e., $S=2$), 
$ \lambda_j N_d(\bmu_1,\bSigma) +  (1-\lambda_j)N_d(\bmu_0, \bSigma)$,
where $\bmu_0,\bmu_1\in\mathbb{R}^d$ are the unknown mean vectors, $\lambda_j\in(0,1)$ is the unknown mixing proportion of site $j$. To simplify the theoretical analysis, we consider the case that the covariance matrix $\bSigma$ is known. We further require there are some positive constants $M$ to make $M^{-1}\leq \lambda_{min}(\bSigma)\leq \lambda_{max}(\bSigma)\leq M$ 
where $\lambda_{min}(\cdot)$ and $\lambda_{max}(\cdot)$ denote the smallest and largest eigenvalues, respectively. 
Following the steps introduced in Algorithm \ref{alg1}, we can obtain the distributed EM algorithm estimator. Next, we show that Assumptions 1-3 in Section \ref{sec:generalTheory} hold in Gaussian mixture models. 

Assumption \ref{loc_conc} can be easily verified as $\mathcal{Q}^*(\btheta)$ can be written as a summation of several strongly concave terms. The verification of Assumption \ref{assump:smooth} is relegated to Supplementary Material S7. To verify Assumption \ref{assump:init}, we need to investigate the theoretical performance of the EM algorithm estimator $\widehat\btheta$ on the pooled data. Specifically, we consider a parameter space $$\Theta=\{\btheta=(\bmu_0,\bmu_1,\lambda_1,\ldots,\lambda_K):\lambda_j\in(c_w,1-c_w),\ j\in[K],\ \bmu_0,\bmu_1\in \mathbb{R}^d \}$$ with $0<c_w<1/2$.
We also define $\Delta=\{(\bmu_0^*-\bmu_1^*)^\T\bSigma^{-1}(\bmu_0^*-\bmu_1^*)\}^{1/2}$ as the signal to noise ratio,  
and require the following condition to restrict
the initial estimator be within a small neighborhood around the true parameter at a radius proportional to $\Delta$. 
\begin{condition}\label{cond2}The initial estimator $\btheta^0$ satisfies $d_2(\btheta^0,\btheta^*)\leq r\Delta$, with 
\begin{align*}
    r<\frac{M^{3/2}}{4}\wedge\frac{|c_0-c_w|}{\Delta} & \wedge\left\{\left(\frac{2c_1-1}{M}+\frac{4}{M}\right)^{1/2}- \frac{2}{M^{1/2}} \right\} \\
    & \wedge\left[\left\{ \frac{c_1}{M} + \frac{1}{4}\left(M + \frac{1}{M} +2\right) \right\}^{1/2}  - \frac{1}{2}(M^{1/2} + M^{-1/2})\right],
\end{align*}
where $M$ is the upper bound of  $\lambda_{max}(\bSigma)$, $c_0$ and $c_1$ are constants which satisfy $0 < c_0 \leq c_w < 1/2$ and $1/2 < c_1<1$, and $\wedge$ is a binary operator to take the smaller one between two items. 
\end{condition}
\noindent This kind of initialization requirement is commonly seen in non-convex problems \citep{loh2013regularized}.  
Under some mild regularity conditions on the sample size, we can verify that for any $\btheta^0 \in \Theta$, Condition \ref{cond2} guarantees $\btheta^0$ and all the subsequent EM algorithm iterates $\btheta^t$ for $t \geq 1$ are within a contraction region $\B(\btheta^*;c_0,c_1)$ whose exact form is in Supplementary Material S6. The contraction region is crucial to guarantee the EM algorithm to find a global solution \citep{balakrishnan2017statistical, cai2019chime}. When deriving the contraction property of $\btheta^t$, $\btheta^t \in \B(\btheta^*;c_0,c_1)$ for $t \geq 0$ is a prerequisite to use a uniform upper-bound on the difference between the sample-based and population-level EM updates on the contraction region.

The next result formally characterizes the contraction property of $\btheta^t$ based on the pooled data, and the proof is deferred to Supplementary Material S4 where the exact forms of constants $\kappa,\ \kappa'$, and $\kappa''$ can be found.

\begin{theorem}\label{overall_contraction} Consider the Gaussian mixture model over the parameter space $\Theta$, under Condition \ref{cond2} and assume  $\Delta > C(c_0,c_1,M,K)$ with $C(c_0,c_1,M,K)$ being a positive quantity that depends on constants $c_0$, $c_1$, $M$,  
and grows with $K$ with the rate of $\log(K)^{1/2}$, then there exist constants $\kappa, \kappa', \kappa'' 
\in(0,1)$ 
such that when $n$ is large enough to make $T(n,K) =  \{K\log(n)/n\}^{1/2} \leq (1-\kappa)r\Delta$,  
we have with probability at least $1-Kn^{-1}-(nK)^{-1}$, 
\begin{align}
    d_2(\btheta^{t+1},\btheta^*)\leq \kappa^{t+1} d_2(\btheta^*,\btheta^0) + \frac{1-\kappa^{t+1}}{1-\kappa}T(n,K). \label{theta}
\end{align}
In particular, at the $(t+1$)st iteration, with probability at least $1- n^{-1} - (nK)^{-1}$,
\begin{align}
    d_2(\btheta_j^{t+1},\btheta_j^*) \leq \kappa' K^{-1}\sum_{m=1}^K d_2(\btheta_m^{t},\btheta_m^*) + \kappa'' d_2(\btheta_j^{t},\btheta_j^*) + O\{\log(n)/n\}^{1/2}. \label{theta_j} 
\end{align}
\end{theorem}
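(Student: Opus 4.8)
\emph{Plan.} I would follow the two–stage template for analysing sample–based EM in the style of \citet{balakrishnan2017statistical} and \citet{cai2019chime}: first show that the \emph{population} EM operator contracts toward $\btheta^*$ on the contraction region $\B(\btheta^*;c_0,c_1)$; then bound the deviation between the sample operator $M_n$ and the population operator $M$ uniformly over that region; and finally combine the two by an induction that simultaneously certifies that every iterate stays inside $\B(\btheta^*;c_0,c_1)$. \emph{Population contraction.} Because $\bSigma$ is known, the M-step is explicit: $\lambda_j^{t+1}$ is the empirical mean of the responsibilities $w_{\btheta_j^t}^j(\y_{ij})$, and $\bmu_k^{t+1}$ is a responsibility–weighted average of the $\y_{ij}$ pooled over all $K$ sites, so the population operator $M(\btheta)=E\{M_n(\btheta)\}$ is also explicit. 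Writing $M_j$ for the site–$j$ block of $M$ and using the self–consistency identity $M(\btheta^*)=\btheta^*$, the local strong concavity of $\mathcal{Q}^*$ (Assumption \ref{loc_conc}), and Condition \ref{cond2}, I would prove that on $\B(\btheta^*;c_0,c_1)$ one has $d_2(M_j(\btheta),\btheta_j^*)\le \kappa' K^{-1}\sum_{m=1}^K d_2(\btheta_m,\btheta_m^*)+\kappa'' d_2(\btheta_j,\btheta_j^*)$ with $\kappa',\kappa''\in(0,1)$ — the mean update mixes all sites, which is why site $j$'s new mean error is controlled by the site–averaged distance, whereas the $\lambda_j$ update sees only site $j$ — and summing over $j$ (with $\|\bmu_k^{t+1}-\bmu_k^*\|_2$ common to all sites) gives the aggregate factor $\kappa=\kappa'+\kappa''\in(0,1)$. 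The hypothesis $\Delta>C(c_0,c_1,M,K)$ enters here: a large signal–to–noise ratio keeps the responsibilities close to a hard assignment, which forces $\kappa',\kappa''$ below one, and the $\log(K)^{1/2}$ growth of $C$ is what makes this hold simultaneously across all $K$ sites.

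\emph{Uniform deviation bound.} Next I would show $\sup_{\btheta\in\B(\btheta^*;c_0,c_1)} d_2\{M_n(\btheta),M(\btheta)\}\lesssim T(n,K)=\{K\log(n)/n\}^{1/2}$ on an event of probability at least $1-Kn^{-1}-(nK)^{-1}$. The ingredients are: $w_{\btheta_j}^j\in[0,1]$ together with its Lipschitz dependence on $\btheta_j$, the Lipschitz control of $\nabla_{\bmu}h$ and of the Hessians in Assumption \ref{assump:smooth}, the moment bounds there, and the Gaussian tails of the $\y_{ij}$; these let me apply sub–Gaussian/sub–exponential concentration to the per–coordinate empirical averages appearing in $M_n$ and then upgrade from pointwise to uniform control by a covering/chaining argument over the compact set $\B(\btheta^*;c_0,c_1)$. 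The $\sqrt{\log n}$ factor comes from the tail level needed for the union over a grid, and the $\sqrt{K}$ factor from aggregating $K$ sites in the pooled mean update; in the per–site statement a single site contributes only $\{\log(n)/n\}^{1/2}$, the extra factor being spent on the union over sites.

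\emph{Induction and the per–site bound.} Conditioning on the event of the previous step, $d_2(\btheta^{t+1},\btheta^*)\le d_2\{M_n(\btheta^t),M(\btheta^t)\}+d_2\{M(\btheta^t),\btheta^*\}\le T(n,K)+\kappa\,d_2(\btheta^t,\btheta^*)$, and unrolling from $t=0$ yields \eqref{theta}; the hypothesis $T(n,K)\le(1-\kappa)r\Delta$ keeps the geometric sum on the right below $r\Delta$, so by Condition \ref{cond2} all iterates $\btheta^t$ remain in $\B(\btheta^*;c_0,c_1)$ and the uniform bound of Step~2 may be re-used at every iteration with no further union over $t$. For \eqref{theta_j} I would carry the block–wise decomposition of the population step through the sample version at a single site, isolating the pooled–mean contribution $\kappa'K^{-1}\sum_{m}d_2(\btheta_m^t,\btheta_m^*)$, the site–$j$ responsibility/$\lambda_j$ contribution $\kappa'' d_2(\btheta_j^t,\btheta_j^*)$, and the single–site sampling noise, which concentrates at rate $\{\log(n)/n\}^{1/2}$ with probability at least $1-n^{-1}-(nK)^{-1}$ after a union over the $O(1)$ coordinates of $\btheta_j$.

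\emph{Main obstacle.} I expect the delicate part to be the simultaneous control of the $K$-dependence: showing that $\kappa,\kappa',\kappa''$ are bounded away from $1$ and that the uniform deviation bound of Step~2 holds with a probability that survives the union over all $K$ sites forces the large–$\Delta$ regime with the precise $\log(K)^{1/2}$ scaling of $C(c_0,c_1,M,K)$, and making this rigorous requires carefully exploiting the Lipschitz continuity of the responsibilities in $\btheta_j$ (Assumption \ref{assump:smooth}) to pass from pointwise to uniform concentration on $\B(\btheta^*;c_0,c_1)$, while keeping track of how the mean update and the proportion updates, built from different effective sample sizes ($nK$ versus $n$), enter the final rates.
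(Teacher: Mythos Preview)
Your high–level architecture matches the paper exactly: establish population contraction of $M$ on the region $\B(\btheta^*;c_0,c_1)$ (the paper's Lemma~S.3), establish a uniform deviation bound $\sup_{\btheta\in\B}d_2(M_n(\btheta),M(\btheta))\le T(n,K)$ (Lemma~S.4), then induct while checking that $T(n,K)\le(1-\kappa)r\Delta$ keeps every iterate inside $\B$. The paper also proceeds by the explicit Gaussian M-step formulas, self-consistency, and a Taylor expansion of $\lambda_j(\btheta)$ and $\bmu_k(\btheta)$ in $\btheta$, so on that level you are on track.

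Where your proposal goes wrong is the bookkeeping of the $K$-dependence, and this is not cosmetic because it is precisely what produces both the $\sqrt{K}$ in $T(n,K)$ and the $\sqrt{\log K}$ in $C(c_0,c_1,M,K)$. First, the aggregate contraction factor is \emph{not} $\kappa=\kappa'+\kappa''$. The paper shows $|\lambda_j(\btheta)-\lambda_j^*|\le\kappa_1\,d_2(\btheta,\btheta^*)$ and $\|\bmu_k(\btheta)-\bmu_k^*\|_2\le\kappa_2\,d_2(\btheta,\btheta^*)$ separately, and because $d_2(\btheta,\btheta^*)$ carries the $\ell_2$ norm $(\sum_j|\lambda_j-\lambda_j^*|^2)^{1/2}$ over the $K$ proportions one gets $\kappa=\sqrt{K}\,\kappa_1+2\kappa_2$. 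It is this $\sqrt{K}$ that forces $\kappa_1\sim\exp(-c_4\Delta^2)\lesssim K^{-1/2}$, i.e.\ $\Delta\gtrsim\sqrt{\log K}$; your explanation that the $\sqrt{\log K}$ ``makes the contraction hold simultaneously across all $K$ sites'' misidentifies the mechanism. Second, you attribute the $\sqrt{K}$ in $T(n,K)$ to ``aggregating $K$ sites in the pooled mean update,'' but the pooled mean uses all $nK$ observations and concentrates at rate $\{\log(nK)/(nK)\}^{1/2}$; the $\sqrt{K}$ actually comes from the $K$ site-specific proportions, each estimated from only $n$ observations, whose $\ell_2$ aggregation gives $\sqrt{K}\cdot\{\log(n)/n\}^{1/2}$. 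This is also why the per-site bound \eqref{theta_j} has no $\sqrt{K}$: only one $\lambda_j$ appears there.

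A smaller point: you invoke Assumptions~\ref{loc_conc}--\ref{assump:smooth} as inputs, but Theorem~\ref{overall_contraction} is precisely what \emph{verifies} Assumption~\ref{assump:init} for the Gaussian model, and the paper's proofs of Lemmas~S.3--S.4 do not go through the general assumptions at all. They use the explicit Gaussian form of $\gamma_\btheta$, the elementary bounds on $t\mapsto e^t/\{we^t+(1-w)e^{-t}\}^2$ from \citet{cai2019chime}, and for the deviation bound a symmetrization plus Ledoux--Talagrand contraction argument (exploiting that $x\mapsto\lambda/\{\lambda+(1-\lambda)e^x\}$ is Lipschitz) rather than a generic chaining over Hessians. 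Your covering/chaining sketch would likely work too, but you should be aware that the paper's route is more concrete and does not pass through Assumption~\ref{assump:smooth}.
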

The first term in the right hand side of inequality (11) decreases geometrically in the iteration number $t$, and the latter term measures the estimation error accumulated along the iterations. When $t$ is large enough to make the former term to be dominated by the latter term, the iterates $\btheta^t$ fall in a ball of radius $O\{T(n,K)\}$ centered at the true parameter vector $\btheta^*$.  The bounds in 
\eqref{theta_j} further depicts the contraction of each parameter component at each site.  The involvement of $K$ nuisance parameters $\lambda_j$'s induces the term $K^{1/2}$ in $T(n,K)$, which does not exist in 
\eqref{theta_j}. The quantity $C(c_0,c_1,M,K)$ relies on $K$ mainly through the term $\log(K)^{1/2}$, which is a small term since the number of sites $K$ is usually not too big in practice.  This requirement on the signal to noise ratio is due to the increasing number of nuisance parameters when $K$ increases, thus requiring a larger signal to noise ratio to guarantee the contraction parameter $\kappa$ to be less than 1.         
 
Theorem \ref{overall_contraction} successfully verifies Assumption \ref{assump:init} in two-ways: first, when $K=1$, it includes the local estimator as a special case and shows that the local estimator achieves an estimation error rate of $O_p(n^{-1/2})$. 
Therefore, the local estimator satisfies Condition \ref{cond2} when $n$ is large enough and it is eligible to serve as an initial value of our distributed algorithm. Secondly, the EM algorithm applied to the pooled data produces contractive iterates $\{\btheta^t\}_{t\geq 1}$. In particular, \eqref{theta_j} together with $d_2(\btheta_j^0, \btheta_j^*)=O_p(n^{-1/2})$ ensures $d_2(\btheta_j^t, \btheta_j^*)=O_p(n^{-1/2})$ for all $t \geq 1$. 
Therefore, we have now verified all the Assumptions \ref{loc_conc}--\ref{assump:init}, and we get the following result on the contraction behavior of the distributed EM estimator under heterogeneous Gaussian mixture models.

\begin{corollary}\label{coro2}
Under the conditions in Theorem \ref{overall_contraction}, our distributed EM estimator satisfies  
$ d_2(\widetilde\btheta^t,\btheta^*) \leq \kappa^t d_2(\btheta^0,\btheta^*) +  O_p\{(K/n)^{1/2}\}. $
In particular, when $t$ is large enough, we have
$d_2(\widetilde\btheta^t,\btheta^*) = O_p\{(K/n)^{1/2}\}. $
\end{corollary}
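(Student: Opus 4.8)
The plan is to obtain Corollary \ref{coro2} as an immediate consequence of Theorem \ref{coro1}: the bound asserted in the corollary is literally the conclusion of Theorem \ref{coro1}, so all that is needed is to check that Assumptions \ref{loc_conc}--\ref{assump:init} hold for the heterogeneous Gaussian mixture model under the hypotheses of Theorem \ref{overall_contraction} (Condition \ref{cond2} and the signal-to-noise lower bound $\Delta > C(c_0,c_1,M,K)$), and then invoke Theorem \ref{coro1} verbatim. So the work decomposes into three verification steps followed by a short assembly step.

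First I would verify Assumption \ref{loc_conc}. For the Gaussian mixture, $\mathcal{Q}^*(\btheta) = E\{Q(\btheta\mid\btheta^*)\}$ splits into a sum over $j \in [K]$ and over the two mixture components of quadratic terms $-\tfrac12(\,\cdot\,)^\T\bSigma^{-1}(\,\cdot\,)$ in $(\bmu_0,\bmu_1)$, plus, for each $\lambda_j$, a term of the form $a_j\log\lambda_j + (1-a_j)\log(1-\lambda_j)$ with $a_j = E\{w^j_{\btheta^*_j}(\Y)\}\in(c_w,1-c_w)$. Each summand is strongly concave on the parameter space, with curvature controlled from above and below by $M^{-1}\le\lambda_{\min}(\bSigma)\le\lambda_{\max}(\bSigma)\le M$ and by $\lambda_j\in(c_w,1-c_w)$; hence $-\mu_+\I \preceq \nabla^2_{\btheta}\mathcal{Q}^*(\btheta^*)\preceq -\mu_-\I$ for explicit constants $\mu_\pm$ depending only on $M$ and $c_w$, which is Assumption \ref{loc_conc}.

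Second, Assumption \ref{assump:smooth}. For Gaussian $f(\cdot;\bmu_k)$, the function $h(\y;\bmu,\btheta_j')$, the posterior weights $w^j_{\btheta_j'}(\y)$, and the density ratio $t(\y,\bEta_j)$ in \eqref{eq_eta} are smooth in $(\bmu,\btheta_j',\bEta_j)$, and their mixed partials up to second order are products of affine functions of $\y$ (of fixed degree) with coefficients that are bounded uniformly over the compact neighborhoods $U_{\bmu^*}(\rho)$, $U_{\btheta_j^*}(\rho)$, $U_{\bEta_j^*}(\rho)$, multiplied by bounded factors built from the posterior weights and the density ratio. Consequently the four Lipschitz bounds hold with dominating functions $m_k(\y)$ that are polynomials in $\|\y\|_2$ of a fixed degree, and since $\Y_{ij}$ is a two-component Gaussian mixture all of its polynomial moments are finite, so $E\{m_k(\Y_{ij})^8\}\le L^8$, the centered eighth-moment bounds, and the moment bounds on $\nabla_\bmu h$, $\nabla^2_{\bmu\bmu}h$, $t\nabla^2_{\bmu\bmu}h$ and $\nabla^2_{\bmu\btheta_j}h$ all hold for finite constants $L,G,H,J,C$. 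This is the bookkeeping carried out in Supplementary Material S7; in the proposal I would only need to point to the polynomial-in-$\y$ structure of the derivatives and to Gaussian tails.

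Third, Assumption \ref{assump:init} is exactly what Theorem \ref{overall_contraction} delivers. Applying that theorem with $K=1$ to the lead site produces a local EM estimator with $d_2(\btheta^0_1,\btheta^*_1)=O_p(n^{-1/2})$, so for $n$ large this initializer meets Condition \ref{cond2}'s radius $d_2(\btheta^0,\btheta^*)\le r\Delta$ and hence the initialization rate $d_2(\btheta^0_j,\btheta^*_j)\lesssim\{\log(n)/n\}^{1/2}$; then \eqref{theta} supplies the one-step pooled contraction $d_2(\btheta^t,\btheta^*)\le\kappa d_2(\btheta^{t-1},\btheta^*)+O\{K\log(n)/n\}^{1/2}$, and \eqref{theta_j} together with the initialization yields $d_2(\btheta^t_j,\btheta^*_j)=O_p\{\log(n)/n\}^{1/2}$ for all $t\ge1$ and $j\in[K]$, on an event of probability at least $1-Kn^{-1}-(nK)^{-1}$. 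With Assumptions \ref{loc_conc}--\ref{assump:init} now in force, Theorem \ref{coro1} gives $d_2(\widetilde\btheta^t,\btheta^*)\le\kappa^t d_2(\btheta^0,\btheta^*)+O\{(K/n)^{1/2}\}$ on an event of probability at least $1-n^{-2/3}-K/n$; intersecting with the event from Theorem \ref{overall_contraction} and absorbing the vanishing failure probabilities turns the high-probability bound into the $O_p$ statement, and letting $t\to\infty$ kills the geometric term to give $d_2(\widetilde\btheta^t,\btheta^*)=O_p\{(K/n)^{1/2}\}$. The main obstacle is not this final assembly, which is mechanical, but step two: one must exhibit the explicit dominating functions $m_k$ and check that every mixed partial that appears — including those hidden inside the density-ratio weight $t(\y,\bEta_j)$, a ratio of Gaussian densities evaluated at possibly different parameter values — is bounded uniformly over the $\rho$-neighborhoods and has finite eighth moments; a secondary point of care is that the radius in Condition \ref{cond2} (which must also dominate $T(n,K)$) remains compatible with the $O_p(n^{-1/2})$ local-EM initialization error over the relevant regime of $n$ and $K$, so that the several high-probability events align and the stated failure probability $n^{-2/3}+K/n$ is achieved.
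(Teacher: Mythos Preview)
Your proposal is correct and follows essentially the same approach as the paper: the corollary is obtained by verifying Assumptions \ref{loc_conc}--\ref{assump:init} for the Gaussian mixture model (strong concavity via the quadratic-plus-logarithmic structure, smoothness via the polynomial-in-$\|\y\|_2$ bounds of Supplementary Material S7, and the initialization/pooled-contraction property via Theorem \ref{overall_contraction} applied both with $K=1$ and for general $K$), and then invoking Theorem \ref{coro1} directly.
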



\section{Simulation Study}\label{sec:simu}
To illustrate the empirical performance of our method, we conducted simulation studies with data generated from the heterogeneous Gaussian mixture model,
\[
\Y_{ij}\sim \lambda_j N_d(\bmu_1 ,\bSigma) + (1-\lambda_j) N_d(\bmu_0, \bSigma), \ j\in[K],\ i\in[n].
\]
The shared parameters to be estimated are $(\bmu_1,\bmu_0)$. The nuisance parameter $(\lambda_1, \ldots, \lambda_K)$ is generated from $U(0.5 - a, 0.5 + a)$ where $a = 0.1$ or $0.3$ represents respectively a low or high level of heterogeneity across sites. We let $d  = 5$, $\bmu_1^* = (5, \ldots, 5)^\T \in \mathbb{R}^d$ and $\bmu_0^* = (4, \ldots, 4)^\T \in \mathbb{R}^d$, and the variance-covariance matrix $\bSigma = \sigma^2 \I_d$ with $\sigma^2 = 2.5$ or 5 representing respectively a high or low level of signal to noise ratio (i.e., $\{ (\bmu_1^* - \bmu_0^*)^\T \bSigma^{-1}(\bmu_1^* - \bmu_0^*) \}^{1/2} $) of the problem. Moreover, for each of these scenarios, we let $K \in \{10, 30\}$ and $n\in \{1000, 3000\}$ to see the impacts of the number of sites and sample size, respectively, on the performance of the distributed EM algorithm.

The methods under comparison are: (1) our distributed EM algorithm estimator;
	(2) the pooled estimator obtained by applying the EM algorithm to the pooled data, which we call the gold standard;
	(3) the average estimator $(\hat\bmu_{avg,1},\hat\bmu_{avg,0})$.
	The average estimator is obtained by first fitting a local model in each site separately to get      
$(\hat\bmu_{j1},\hat\bmu_{j0})$ for all $j\in [K]$. Then we use the lead site estimator as an anchor, and match the classes identified from other sites based on the distance to the estimated class centriods. Specifically, for each site $j\in\{2,\ldots,K\}$ 
	we compute 
	$a_{j}^1 = \|\hat\bmu_{j1} - \hat\bmu_{11}\|_2 + \|\hat\bmu_{j0} - \hat\bmu_{10}\|_2 ,$ and $a_{j}^2 = \|\hat\bmu_{j1} - \hat\bmu_{10}\|_2 + \|\hat\bmu_{j0} - \hat\bmu_{11}\|_2 .$
	Then the average estimator is 
	calculated as $\hat\bmu_{avg,1}=w_1\hat\bmu_{11} + \sum_{j=2}^K w_j \{ 1(a_j^1<a_j^2) \hat\bmu_{j1} + 1(a_j^1 \geq a_j^2) \hat\bmu_{j0} \},$ and
	$\hat\bmu_{avg,0}=w_1\hat\bmu_{10} + \sum_{j=2}^K w_j \{ 1(a_j^1<a_j^2) \hat\bmu_{j0} + 1(a_j^1 \geq a_j^2) \hat\bmu_{j1} \}$ with $w_j=1/K$.

For the pooled estimator and the distributed EM algorithm  estimator, we use the same local estimator from site 1 as the initial estimator. 
Specifically, the local estimator is initialized by K-means clustering with five different initializations to ensure convergence to a global maximizer. To see the approximation performance of the distributed EM algorithm  estimator, we calculate the approximation error, measured by the relative distance $\|\widetilde\bmu^t - \bmu^t \|_2/\|\bmu^t\|_2$ of the distributed EM algorithm  estimator to the pooled estimator along the iteration path $\{\widetilde\bmu^t\}_{t \geq 1}$. We also compare the estimation error and the bias of the local estimator, average estimator, pooled estimator and distributed EM algorithm  estimator, e.g., for the pooled estimator we calculate $\|\widehat\bmu - \bmu^* \|_2/\sqrt{2d}$ and $\widehat\mu_{01} - \mu_{01}^*$. The simulation is replicated 200 times for each setting.

We first investigate the approximation error of the distributed EM algorithm  estimator relative to the pooled estimator along the iteration path. Since the approximation error decays rapidly to a small value at the early stage of the iteration path, we only show the approximation error over first 50 iterations.
Figure \ref{fig0} displays randomly selected examples from 200 replications under simulation settings with number of sites $K=10$. After 50 iterations, it is shown that the approximation errors for all four examples are smaller than $10^{-4}$. Under all settings considered, these paths confirm that when initialized with a local estimator, the relative distance starts from a small value and then decays to zero rapidly.

\begin{figure}
	\centering
	\includegraphics[width = 1\linewidth]{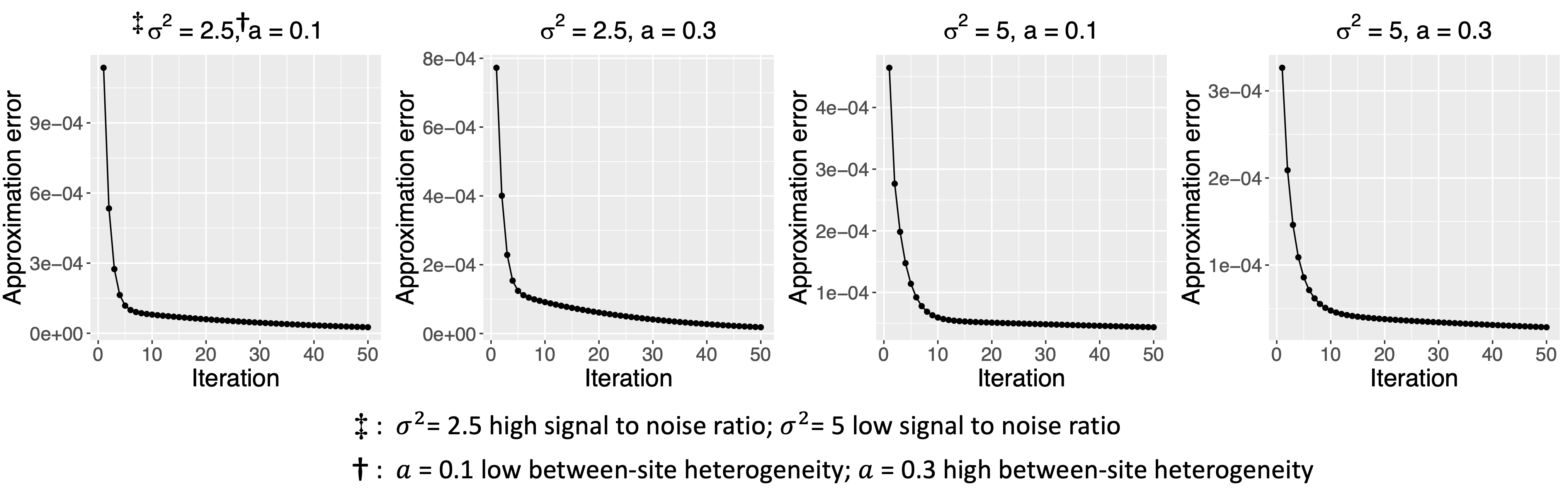}
	\caption{\baselineskip=12pt 
	Approximation error of the distributed EM algorithm estimator to the pooled estimator when $n$ = $1,000$ at the first 50 iterations of the EM algorithm under different settings of signal to noise ratio $\sigma^2$ and heterogeneity level $a$ with number of sites $K=10$. 
	}
\label{fig0}
\end{figure}


We now consider the relative performance in estimation of different methods. Figure \ref{fig1} presents the empirical bias and variances of estimates of $\mu_{01}$ when $n=1,000$. The results for $n=3,000$ are similar and are deferred to Supplementary Material S10. Overall, we found that for the average estimator, either a low signal to noise ratio (i.e., $\sigma^2=5$ as opposed to $\sigma^2=2.5$) or larger between-site heterogeneity (i.e., $a=0.3$ as opposed to $a=0.1$) led to increased bias and larger variance. A larger number of sites (i.e., $K=30$ as opposed to $K=10$), corresponding to a larger total sample size, led to smaller variance yet the larger bias in the average estimator remains. On the other hand, for both the pooled estimator and our distributed EM estimator, the bias under all settings was small, and, as expected, both estimators have similar bias and variance. Further, we found that, similar to the average estimator, a smaller signal to noise ratio led to larger variance in both estimators. However, different from the average estimator, larger between-site heterogeneity (i.e., $a=0.3$ as opposed to $a=0.1$) has little impact on the relative variance of the pooled estimator and the distributed EM estimator. 

Figure \ref{fig3} presents the mean squared error of different estimates of the parameter $\bmu$ when $n=1,000$. The results for $n=3,000$ are similar and are deferred to Supplementary Material S10. Similar to the findings in Figure \ref{fig1}, for the average estimator, either the signal to noise ratio or between-site heterogeneity had a sizable impact on the mean squared error. The pooled estimator and the distributed EM estimator had a similar mean squared error, which is impacted by the signal to noise ratio, but is relatively robust to the level of between-site heterogeneity.

In summary, the simulation study confirmed that although the average method is simple to implement, its performance, in terms of estimation bias, variance and mean squared error, is sensitive to the signal to noise ratio, level of heterogeneity, and the number of sites.  It has a larger bias compared to the proposed estimator across all settings considered. On the other hand, the distributed EM algorithm estimator provides an excellent approximation to the pooled estimator with a small bias and nearly identical variance. It can successfully handle different levels of between-site heterogeneity, and its variance is robust to the level of between-site heterogeneity. 

\begin{figure}
	\centering
	\includegraphics[width = 0.9\linewidth]{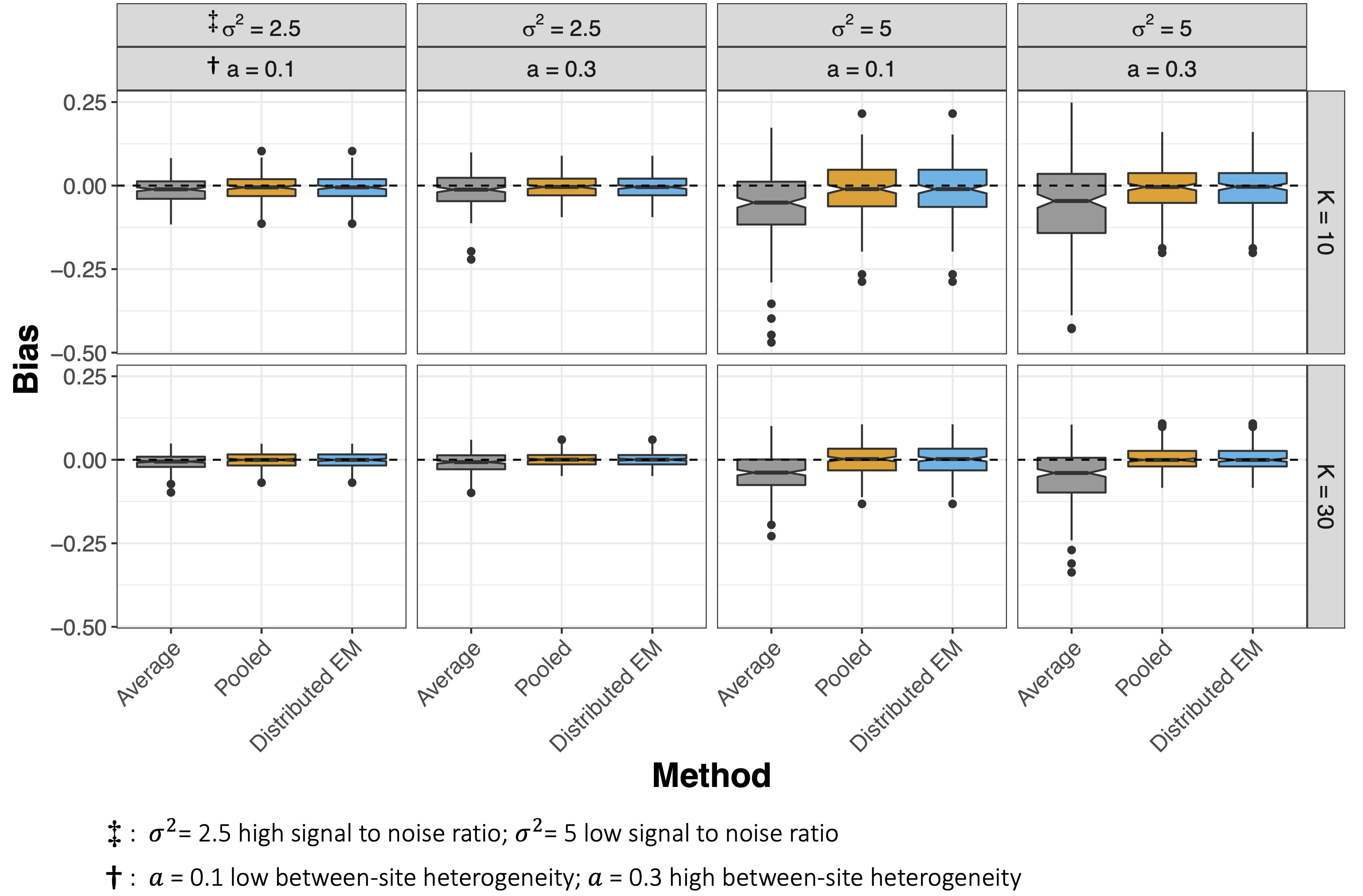}
	\caption{\baselineskip=12pt Empirical bias and variances of estimates of $\mu_{01}$ from the average estimator, the pooled estimator, and our distributed EM estimator, when $n=1,000$ under different settings of number of sites ($K$), signal to noise ratio ($\sigma^2$) and heterogeneity level ($a$).}
\label{fig1}
\end{figure}

\begin{figure}
	\centering
	\includegraphics[width = 0.9\linewidth]{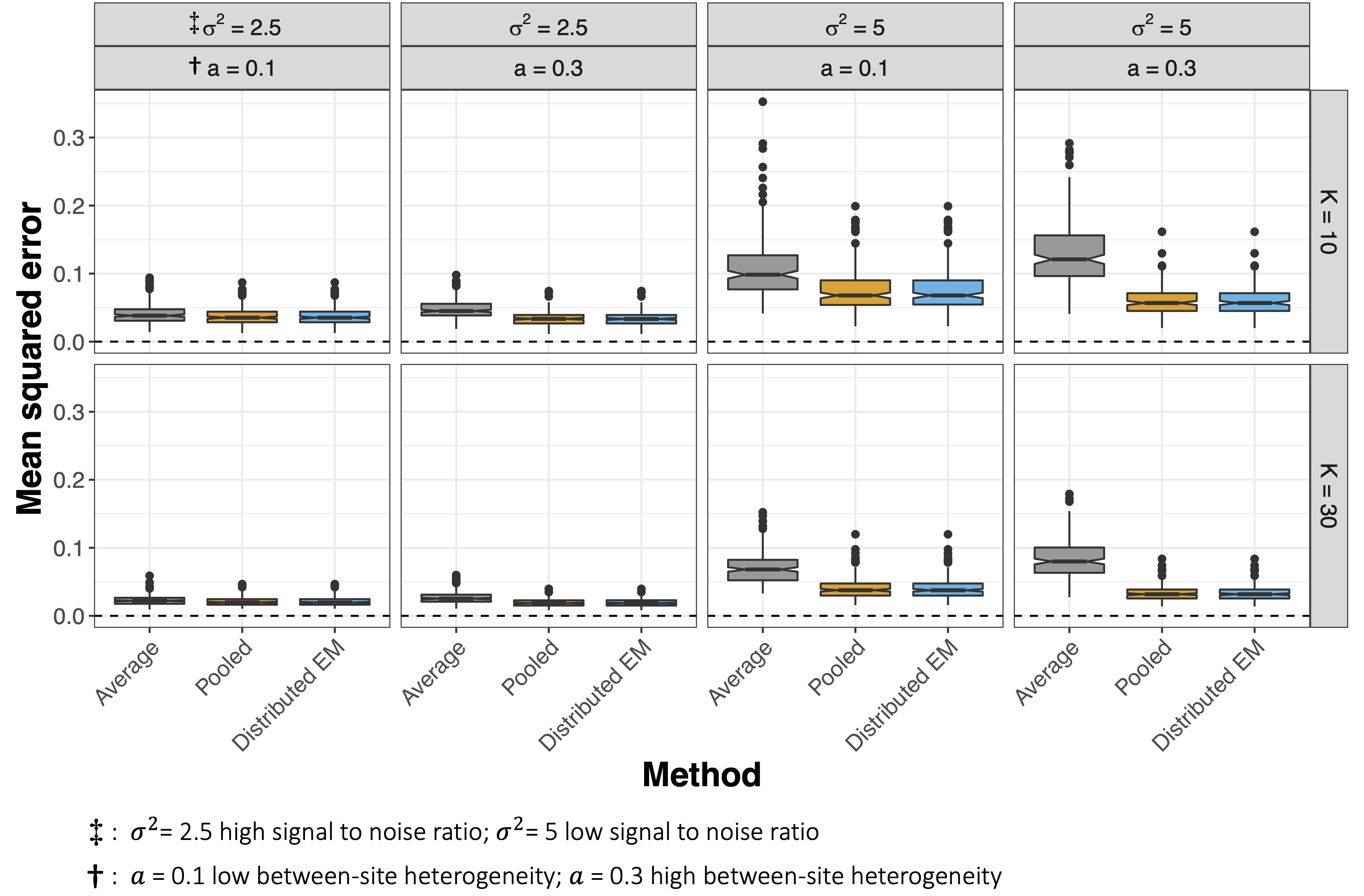}
	\caption{\baselineskip=12pt Mean squared error of estimates of $\bmu$ from the average estimator, the pooled estimator, and our distributed EM estimator, when $n=1,000$ under different settings of number of sites ($K$), signal to noise ratio ($\sigma^2$) and heterogeneity level ($a$).}
\label{fig3}
\end{figure}

\section{Discussion}\label{sec:dis}

We have developed a distributed learning framework for latent class models, which is distinct from most existing work that focuses on distributed supervised learning. Specifically, our approach investigates federated unsupervised learning and enables collaborative identification of shared latent classes across sites while allowing for heterogeneous proportions of latent classes. Our approach enables jointly fitting EM algorithms across multiple heterogeneous local data sets without sharing individual participant data, and this novel setting also makes our work distinguishable from traditional single site EM algorithm analysis. Additionally, our algorithm is derived from a novel construction of a surrogate Q function, which adopts a density ratio weighting approach to approximate the pooled population. We demonstrated that our proposed method achieves nearly identical performance to the pooled EM algorithm, both theoretically and numerically. 
Overall, our work presents a novel approach to federated unsupervised learning in latent class models that can be applied to large-scale multi-site studies.

Similar to existing work on the EM algorithm \citep{wang2015high, 
cai2019chime}, our theoretical analysis is based on the case where we know the number of clusters, i.e., $S = 2$. Even for a single-site study, choosing the number of latent classes is a nonregular problem and has undergone extensive investigation. Specifically, in mixture models, testing for $S$ versus $S+1$ latent classes involves testing with the mixture proportion parameter lying on the boundary of the parameter space. What is more, under the null hypothesis, the relevant class-specific parameter is not identifiable \citep{davies1977hypothesis, davies1987hypothesis}. \citet{ning2015class} and \citet{hong2017plemt} have developed pseudolikelihood-based methods to test one versus two-class problems in exponential tilt mixture models. Extension of these methods to handle problems involving more than two classes  while accounting for between-site heterogeneity, data-sharing prohibition, and communication cost requires more investigation.

Our model specification assumes the same number of latent classes across sites. An interesting future direction is to consider the case where some sites may only contain a subset of the $S$ latent classes. Some recent theoretical work on over-specified class numbers might be helpful to understand the behavior of the distributed EM algorithm under this case \citep{dwivedi2020sharp,dwivedi2020singularity}.  In addition, when there are intrinsic differences in the latent class characterizations, i.e., each site may have site-specific parameters $\bmu$, it is also interesting to study whether the latent class characterizations are similar between sites such that data from one site can be used to refine the latent class analysis in the other sites. Some of these extensions are currently under investigation, and will be reported in the future.

\bibliographystyle{apalike}
\bibliography{LCM}

\newpage
\appendix

\centerline{\Large\bf Supplementary Materials}
\vspace{0.3in}

\setcounter{section}{0}
\setcounter{theorem}{0}
\setcounter{figure}{0}
\setcounter{table}{0}

\renewcommand{\thesection}{S\arabic{section}}
\renewcommand{\thetheorem}{S\arabic{theorem}}
\renewcommand{\thelemma}{S\arabic{lemma}}
\renewcommand{\thefigure}{S\arabic{figure}}
\renewcommand{\thetable}{S\arabic{table}}

\theoremstyle{plain}
\newtheorem{theoremS}{Theorem}[section]
\newtheorem{lemmaS}[theoremS]{Lemma}
\newtheorem{propositionS}[theoremS]{Proposition}

\theoremstyle{remark}
\newtheorem{remarkS}[theoremS]{Remark}

\noindent
Note: we use $c_1,\ c_2,\ \ldots$ to represent positive constants whose exact values may change line to line. 

\section{Proof of equation (5)}
\begin{proof}
Recall the definition of the density ratio tilted surrogate Q function:
\begin{align*}
\widetilde Q\left(\bmu\mid \btheta^t\right) = \check Q\left(\bmu\mid\btheta^t\right) + \left\langle \nabla_\bmu Q_{\bmu}\left(\bmu^t\mid\btheta^t\right)- \nabla_\bmu \check Q\left(\bmu^t\mid\btheta^t\right), \bmu \right\rangle.
\end{align*}
Take the gradient of $\widetilde Q\left(\bmu\mid \btheta^t\right)$ at $\bmu^t$, we have
\begin{align*}
    \nabla_{\bmu} \widetilde Q\left(\bmu^t \mid \btheta^t\right) & = \nabla_{\bmu} \check Q\left(\bmu^t\mid\btheta^t\right) + \nabla_{\bmu} Q_{\bmu}(\bmu^t \mid\btheta^t) - \nabla_{\bmu} \check Q\left(\bmu^t\mid\btheta^t\right)\\
    & = \nabla_{\bmu} Q_{\bmu}(\bmu^t \mid\btheta^t)
    .
\end{align*}
Take the second- and higher-order derivatives of $\widetilde Q\left(\bmu\mid \btheta^t\right)$, we directly get 
\begin{align*}
\nabla_{\bmu}^p \widetilde Q\left(\bmu\mid\btheta^t\right) = \nabla_{\bmu}^p \check Q\left(\bmu\mid\btheta^t\right),\quad p \geq 2.
\end{align*}
This completes the proof.
\end{proof}

\section{Proof of Lemma \ref{lemma1} }

In the proof we ignore the logarithmic factors, e.g., $\log(n)$, when they are dominated by other terms. We use $c_1,\ c_2,\ \ldots$ to represent positive constants whose exact values change line to line. We need the following technical lemma.

\begin{lem1}
	Under Assumption \ref{assump:smooth}, 
	we have for $2 \leq \kappa_1, \kappa_2, \kappa_3, \kappa_4 \leq 8$, 
	\begin{align}
		& E[\| \frac{1}{nK} \sum_{j=1}^K\sum_{i=1}^n \{\nabla_{\bmu\bmu}^2 t(\y_{i1},\bEta_j^*)h(\y_{i1},\bmu^*,\btheta_j^*) - E\nabla_{\bmu\bmu}^2 h(\y_{ij},\bmu^*,\btheta_j^*)\} \|_2^{\kappa_1}] = O(n^{-\kappa_1/2}) 
		\label{s1_1} 
		\\
		& E[\| \frac{1}{nK} \sum_{j=1}^K\sum_{i=1}^n \{\nabla_{\bmu\bmu}^2 h(\y_{ij},\bmu^*,\btheta_j^*) - E\nabla_{\bmu\bmu}^2 h(\y_{ij},\bmu^*,\btheta_j^*)\} \|_2^{\kappa_2}] = O((nK)^{-\kappa_2/2}) 
		\label{s1_2}\\
		& E[\| \frac{1}{nK} \sum_{j=1}^K\sum_{i=1}^n \{\nabla_{\bmu\btheta_j}^2 h(\y_{ij},\bmu^*,\btheta_j^*) - E\nabla_{\bmu\btheta_j}^2 h(\y_{ij},\bmu^*,\btheta_j^*)\} \|_2^{\kappa_3}] = O((nK)^{-\kappa_3/2}) 
		\label{s1_3} \\
		& E[| \frac{1}{n} \sum_{i=1}^n m_k(\y_{ij}) - E m_k(\y_{ij})|^{\kappa_4}] = O(n^{-\kappa_4/2}), \ k=1,2,3,4, \ j\in [K]. 
		\label{s1_4}
	\end{align}
\end{lem1}

The left hand side of \eqref{s1_1} only involves independently and identically distributed (i.i.d.) samples $\{\y_{i1}\}_{i=1}^n$ from the leading site, therefore, \eqref{s1_1} is a direct application of Lemma 7 in \cite{zhang2013communication} based on Assumption \ref{assump:smooth}. 
The same argument goes for \eqref{s1_4}.
As for \eqref{s1_2} and \eqref{s1_3}, since the observations $\{\y_{ij}\}$ are not i.i.d. across sites, the proof needs some minor modifications to the proof of Lemma 7 in \cite{zhang2013communication} to change from i.i.d. summands to independent but not identically distributed summands. These modifications are trivial and we ignore the proof here.    

\begin{lem1} Define $\bar\bmu^{t+1}$ as the root of $\nabla_{\bmu}Q_{\bmu} ( \bmu|\tilde\btheta^t)=\0$, given that $d_2( \tilde\btheta_j^t, \btheta_j^t )=O(n^{-5/6})$ with probability at least $1-K/n-n^{-2/3}$, then we have
\begin{align*}
	\| \bar\bmu^{t+1} - \bmu^{t+1} \|_2 = O(n^{-5/6})
\end{align*} 
	with probability at least $1-K/n-n^{-2/3}$.
\end{lem1}

\begin{proof}[Proof of Lemma \ref{lemma1}]
We first focus on the updating from $\btheta^t$ to $\btheta^{t+1}$ and $\tilde\btheta^{t+1}$. When updating from the same $\btheta^t$, we have $d_2(\btheta^{t+1}, \tilde\btheta^{t+1})=\sum_{k=0}^1 \| \bmu_k^{t+1} - \tilde\bmu_k^{t+1} \|_2$ since $ \Lambda^{t+1} = \tilde \Lambda^{t+1}$.
Therefore, we only need to bound $\sum_{k=0}^1 \| \bmu_k^{t+1} - \tilde\bmu_k^{t+1} \|_2$, and the following three events are required
\begin{align*}
	&\epsilon_{0j} := \{ \frac{1}{n} \sum_{i=1}^n  m_k(\y_{ij}) \leq 2L\}, k=1,2;\ j\in [K] \\
	&\epsilon_1 := \{ \| \nabla_{\bmu\bmu}^2 {\widetilde{Q}} ( \bmu^{t+1} | \btheta^t) - \nabla_{\bmu\bmu}^2 Q_{\bmu}(\bmu^{t+1} | \btheta^t)  \|_2 \leq C_1 \} \\
	&\epsilon_2 := \{ \| \nabla_{\bmu} {\widetilde{Q}} ( \bmu^{t+1} | \btheta^t) \|_2 \leq C_2 \},
\end{align*}
where 
$C_1\leq \rho\mu_-/2$ and $C_2 \leq (1-\rho)\mu_-\delta_{\rho}/2$ with $\delta_{\rho}=\min\{\rho, \rho\mu_-/(4L)\}$. 
Then, by using Lemma 6 in \cite{zhang2013communication}, under $\epsilon_0 \cap \epsilon_1 \cap\epsilon_2 $ where $\epsilon_0= \cap_j \epsilon_{0j}$ we have $\|\tilde \bmu^{t+1} - \bmu^{t+1}  \|_2 \leq C \| \nabla_{\bmu} {\widetilde{Q}} ( \bmu^{t+1} | \btheta^t) \|_2$. Since $\|\tilde \bmu_k^{t+1} - \bmu_k^{t+1}  \|_2 \leq \|\tilde \bmu^{t+1} - \bmu^{t+1}  \|_2$, next we only need to control $\| \nabla_{\bmu} {\widetilde{Q}} ( \bmu^{t+1} | \btheta^t) \|_2$. 

Due to the fact that $\nabla_{\bmu} Q_{\bmu} ( \bmu^{t+1} | \btheta^t)= \0 $, with $\bmu'=\alpha\bmu^t + (1-\alpha)\bmu^{t+1}$ where $\alpha\in(0,1)$ we have
\begin{align}
	\nabla_{\bmu} & {\widetilde{Q}} (\bmu^{t+1}|\btheta^t) = \nabla_{\bmu}{\check{Q}} (\bmu^{t+1}|\btheta^t) +\nabla_{\bmu} Q_{\bmu}(\bmu^t|\btheta^t) -\nabla_{\bmu}{\check{Q}} (\bmu^t|\btheta^t) - \nabla_{\bmu} Q_{\bmu} ( \bmu^{t+1} | \btheta^t) \nonumber \\
	& = \frac{1}{Kn}\sum_{i=1}^n\sum_{j=1}^K t(\y_{i1},\bEta_j^t)\{ \nabla_{\bmu} h(\y_{i1},\bmu^{t+1},\btheta_j^t) - \nabla_{\bmu} h(\y_{i1}, \bmu^t, \btheta_j^t) \} \nonumber \\
	& + \frac{1}{Kn}\sum_{i=1}^n\sum_{j=1}^K \{ \nabla_{\bmu} h(\y_{ij},\bmu^t,\btheta_j^t) - \nabla_{\bmu} h(\y_{ij}, \bmu^{t+1}, \btheta_j^t) \}\nonumber \\
	& \leq \frac{1}{Kn}\sum_{i=1}^n\sum_{j=1}^K t(\y_{i1},\bEta_j^t)\{ \nabla_{\bmu\bmu}^2h(\y_{i1},\bmu',\btheta_j^t)(\bmu^{t+1} - \bmu^t) \} \nonumber\\
	& -  \frac{1}{Kn}\sum_{i=1}^n\sum_{j=1}^K \{ \nabla_{\bmu\bmu}^2h(\y_{ij},\bmu',\btheta_j^t)(\bmu^{t+1} - \bmu^t) \}\nonumber \\
%
	& \leq \frac{1}{Kn}\sum_{i=1}^n\sum_{j=1}^K  \nabla_{\bmu\bmu}^2\{t(\y_{i1},\bEta_j^t) h(\y_{i1},\bmu',\btheta_j^t) - t(\y_{i1},\bEta^*_j) h(\y_{i1},\bmu^*,\btheta_j^*)\}(\bmu^{t+1} - \bmu^t) \label{1st_1} \\
	& -  \frac{1}{Kn}\sum_{i=1}^n\sum_{j=1}^K \nabla_{\bmu\bmu}^2\{  h(\y_{ij},\bmu',\btheta_j^t) - h(\y_{ij},\bmu^*,\btheta_j^*) \}(\bmu^{t+1} - \bmu^t) \label{1st_2} \\
	&  + \frac{1}{Kn}\sum_{i=1}^n\sum_{j=1}^K \{\nabla_{\bmu\bmu}^2t(\y_{i1},\bEta^*_j) h(\y_{i1},\bmu^*,\btheta_j^*) - \nabla_{\bmu\bmu}^2 h(\y_{ij},\bmu*,\btheta_j^*) \}(\bmu^{t+1} - \bmu^t). \label{1st_3}
\end{align}
As for \eqref{1st_1}, based on $\epsilon_{01}$ and Assumption \ref{assump:smooth} 
we have
\begin{align*}
	\| \frac{1}{Kn}\sum_{i=1}^n\sum_{j=1}^K  & \nabla_{\bmu\bmu}^2\{t(\y_{i1},\bEta_j^t) h(\y_{i1},\bmu',\btheta_j^t) - t(\y_{i1},\bEta^*_j) h(\y_{i1},\bmu^*,\btheta_j^*)\}(\bmu^{t+1} - \bmu^t) \|_2 \\
	& \leq 2L(\|\bmu' - \bmu^* \|_2 + \frac{1}{K} \sum_{j=1}^K \|\bEta_j^t - \bEta_j^*\|_2)\| \bmu^{t+1} - \bmu^t \|_2.
\end{align*}
Similarly, based on $\epsilon_{0}$ and Assumption \ref{assump:smooth} 
we can bound \eqref{1st_2} with
\begin{align*}
	\| \frac{1}{Kn}\sum_{i=1}^n\sum_{j=1}^K & \nabla_{\bmu\bmu}^2\{  h(\y_{ij},\bmu',\btheta_j^t) - h(\y_{ij},\bmu^*,\btheta_j^*) \}(\bmu^{t+1} - \bmu^t) \|_2 \\
	& \leq 2L(\|\bmu' - \bmu^* \|_2 + \frac{1}{K} \sum_{j=1}^K \|\btheta_j^t - \btheta_j^*\|_2)\| \bmu^{t+1} - \bmu^t \|_2.
\end{align*}
As for \eqref{1st_3}, we can further decompose it  as
\begin{align*}
	\frac{1}{Kn}\sum_{i=1}^n\sum_{j=1}^K & \{\nabla_{\bmu\bmu}^2t(\y_{i1},\bEta^*_j) h(\y_{i1},\bmu^*,\btheta_j^*) - \nabla_{\bmu\bmu}^2 h(\y_{ij},\bmu^*,\btheta_j^*) \}  \\
	& = \frac{1}{Kn}\sum_{i=1}^n\sum_{j=1}^K \{\nabla_{\bmu\bmu}^2t(\y_{i1},\bEta^*_j) h(\y_{i1},\bmu^*,\btheta_j^*) -E\nabla_{\bmu\bmu}^2  h(\y_{ij},\bmu^*,\btheta_j^*) \\ & +  E\nabla_{\bmu\bmu}^2 h(\y_{ij},\bmu^*,\btheta_j^*)  - \nabla_{\bmu\bmu}^2 h(\y_{ij},\bmu^*,\btheta_j^*) \},
\end{align*}
which leads to 
 \begin{align*}
	\|\frac{1}{Kn}\sum_{i=1}^n\sum_{j=1}^K & \{\nabla_{\bmu\bmu}^2t(\y_{i1},\bEta^*_j) h(\y_{i1},\bmu^*,\btheta_j^*) - \nabla_{\bmu\bmu}^2 h(\y_{ij},\bmu^*,\btheta_j^*) \}\|_2  \\
	& \leq \|\frac{1}{Kn}\sum_{i=1}^n\sum_{j=1}^K \{\nabla_{\bmu\bmu}^2t(\y_{i1},\bEta^*_j) h(\y_{i1},\bmu^*,\btheta_j^*) -E\nabla_{\bmu\bmu}^2 h(\y_{ij},\bmu^*,\btheta_j^*)\|_2 \\ & +  \|\frac{1}{Kn}\sum_{i=1}^n\sum_{j=1}^K \{E\nabla_{\bmu\bmu}^2 h(\y_{ij},\bmu^*,\btheta_j^*)  - \nabla_{\bmu\bmu}^2 h(\y_{ij},\bmu^*,\btheta_j^*) \}\|_2.
\end{align*}
Note that
$$E(\nabla_{\bmu\bmu}^2t(\y_{i1},\bEta^*_j) h(\y_{i1},\bmu^*,\btheta_j^*) -E\nabla_{\bmu\bmu}^2 h(\y_{ij},\bmu^*,\btheta_j^*))=\0$$ and 
$$E(\nabla_{\bmu\bmu}^2 h(\y_{ij},\bmu^*,\btheta_j^*)  - E\nabla_{\bmu\bmu}^2 h(\y_{ij},\bmu^*,\btheta_j^*))=\0, $$
then based on Assumption \ref{assump:smooth} 
and Lemma S.1, with probability at least $1-n^{-2/3}$, we have 
\begin{align*}
	\|\frac{1}{Kn}\sum_{i=1}^n\sum_{j=1}^K \{\nabla_{\bmu\bmu}^2t(\y_{i1},\bEta^*_j) h(\y_{i1},\bmu^*,\btheta_j^*) -E\nabla_{\bmu\bmu}^2 h(\y_{ij},\bmu^*,\btheta_j^*)\|_2 \lesssim n^{-1/3}
\end{align*}
and with probability at least $1-(nK)^{-2/3}$, we have
\begin{align*}
	\|\frac{1}{Kn}\sum_{i=1}^n\sum_{j=1}^K \{E\nabla_{\bmu\bmu}^2 h(\y_{ij},\bmu^*,\btheta_j^*)  - \nabla_{\bmu\bmu}^2 h(\y_{ij},\bmu^*,\btheta_j^*) \}\|_2 = O((nK)^{-1/3}).
\end{align*} 
Therefore, with probability at least $1-n^{-2/3}-(nK)^{-2/3}$, we have 
$$ \|\frac{1}{Kn}\sum_{i=1}^n\sum_{j=1}^K \{\nabla_{\bmu\bmu}^2t(\y_{i1},\bEta^*_j) h(\y_{i1},\bmu^*,\btheta_j^*) - \nabla_{\bmu\bmu}^2 h(\y_{ij},\bmu^*,\btheta_j^*) \}\|_2 = O(n^{-1/3}).$$
To summarize, with $c_1$ and $c_2$ being some constants, we have 
\begin{align*}
	\|\nabla_{\bmu}{\widetilde{Q}} (\bmu^{t+1}|\btheta^t)\|_2 \leq \{c_1\|\bmu'-\bmu^*\|_2 +\frac{c_2}{K}\sum_{j=1}^K\|\bEta_j^t - \bEta_j^*\|_2 + O_p(n^{-1/3})\}\|\bmu^t-\bmu^{t+1}\|_2.
\end{align*}

Next, let's control $P(\epsilon^c)=P((\epsilon_0 \cap \epsilon_1 \cap \epsilon_2)^c)$ where $\epsilon_0= \cap_j \epsilon_{0j}$. 
By Assumption \ref{assump:smooth} and Lemma S.1, 
we have $P(\epsilon_{0j}^c)\lesssim 1/n$ which leads to $P(\epsilon_0^c)\lesssim K/n$. To control $P(\epsilon_1^c)$, follow the same steps as we bound $\|\nabla_{\bmu}{\widetilde{Q}} (\bmu^{t+1}|\btheta^t)\|_2$, we have
\begin{align*}
	\| \nabla_{\bmu\bmu}^2 & {\widetilde{Q}} ( \bmu^{t+1} | \btheta^t) - \nabla_{\bmu\bmu}^2 Q_{\bmu}(\bmu^{t+1} | \btheta^t)\|_2 \leq c_1\|\bmu' - \bmu^*\|_2 + \frac{c_2}{K}\sum_{j=1}^K\|\bEta_j^t-\bEta_j^*\|_2 \\
	& + \|\frac{1}{Kn}\sum_{i=1}^n\sum_{j=1}^K \{\nabla_{\bmu\bmu}^2t(\y_{i1},\bEta^*_j) h(\y_{i1},\bmu^*,\btheta_j^*) - \nabla_{\bmu\bmu}^2h(\y_{ij},\bmu^*,\btheta_j^*) \}\|_2.
\end{align*}
Since $ \| \bmu' - \bmu^* \|_2 \leq \| \bmu^t - \bmu^* \|_2 + \| \bmu^{t+1} - \bmu^* \|_2 $ and $ \| \bEta_j^t - \bEta_j^* \|_2 \leq d_2(\btheta_1^t, \btheta_1^*) + d_2(\btheta_j^t, \btheta_j^*) $, based on Assumption \ref{assump:init} we have 
$E(\| \bmu' - \bmu^* \|^2_2) = O(K/n)$ and $E(\| \bEta_j^t - \bEta_j^* \|^2_2) = O(K/n)$. Therefore, we obtain
\begin{align*}
	& P(\| \nabla_{\bmu\bmu}^2 {\widetilde{Q}} ( \bmu^{t+1} | \btheta^t) - \nabla_{\bmu\bmu}^2 Q_{\bmu}(\bmu^{t+1} | \btheta^t)\|_2 > C_1 ) \\
	& \leq P( c_1\|\bmu' - \bmu^*\|_2> C_1/3)  + P(\frac{c_2}{K}\sum_{j=1}^K\|\bEta_j^t - \bEta_j^*\|_2 > C_1/3 ) \\
	& + P(\|\frac{1}{Kn}\sum_{i=1}^n\sum_{j=1}^K  \{\nabla_{\bmu\bmu}^2t(\y_{i1},\bEta^*_j) h(\y_{i1},\bmu^*,\btheta_j^*) - \nabla_{\bmu\bmu}h(\y_{ij},\bmu^*,\btheta_j^*) \}\|_2 >C_1/3)\\
& \leq c_1 E\|\bmu' - \bmu^*\|_2^2 + E(\frac{c_2}{K}\sum_{j=1}^K\|\bEta_j^t - \bEta_j^*\|_2)^2    \\
& + c_3 E\|\frac{1}{Kn}\sum_{i=1}^n\sum_{j=1}^K  \{\nabla_{\bmu\bmu}^2t(\y_{i1},\bEta^*_j) h(\y_{i1},\bmu^*,\btheta_j^*) - \nabla_{\bmu\bmu}h(\y_{ij},\bmu^*,\btheta_j^*) \}\|^2_2  \lesssim \frac{K}{n} 
\end{align*}
where in the second inequality we use the Markov inequality, and in the last line we use Lemma S.1, Jensen's inequality and Assumption \ref{assump:init}. 

As for $P(\epsilon_2^c)$, with $$ \| \bmu' - \bmu^* \|_2 \leq \| \bmu^t - \bmu^* \|_2 + \| \bmu^{t+1} - \bmu^* \|_2 $$ and 
$$
{\|\bmu^t-\bmu^{t+1}\|_2 \leq \|\bmu^{t+1}-\bmu^*\|_2 + \|\bmu^t - \bmu^*\|_2}, 
$$
we have 
\begin{align*}
	& P( \| \nabla_{\bmu} {\widetilde{Q}} ( \bmu^{t+1} | \btheta^t) \|_2 > C_2 ) \\
	& \leq P( [ c_1\|\bmu' - \bmu^*\|_2 + \frac{c_2}{K}\sum_{j=1}^K\|\bEta_j^t - \bEta_j^*\|_2 + \|\frac{1}{Kn}\sum_{i=1}^n\sum_{j=1}^K \{\nabla_{\bmu\bmu}^2t(\y_{i1},\bEta^*_j) h(\y_{i1},\bmu^*,\btheta_j^*) - \\ 
	&  \nabla_{\bmu\bmu}^2h(\y_{ij},\bmu^*,\btheta_j^*) \}\|_2 ] \|\bmu^{t+1}-\bmu^t\|_2 > C_2 ) \\ 
	& \lesssim [E\|\bmu' - \bmu^*\|^2_2 + \frac{1}{K}\sum_{j=1}^K E \|\bEta_j^t - \bEta_j^*\|^2_2 + E\|\frac{1}{Kn}\sum_{i=1}^n\sum_{j=1}^K  \{\nabla_{\bmu\bmu}^2t(\y_{i1},\bEta^*_j) h(\y_{i1},\bmu^*,\btheta_j^*) - \\
	&  \nabla_{\bmu\bmu}^2h(\y_{ij},\bmu^*,\btheta_j^*) \}\|_2^2]^{1/2} \cdot[E \|\bmu^{t+1}-\bmu^*\|^2_2 + E\|\bmu^t - \bmu^*\|^2_2 ]^{1/2}  \lesssim \frac{K}{n}
\end{align*}
by the Markov inequality, Holder inequality, Lemma S.1 and Assumption \ref{assump:init}. 
Therefore, by combining the above results, with probability at least $1-Kn^{-1}-n^{-2/3}$ we have 
\begin{align*}
	d_2(\tilde \btheta^{t+1}, \btheta^{t+1} )& = \sum_{k=0}^1 \|\tilde \bmu_k^{t+1} - \bmu_k^{t+1} \|_2 \\
	&\leq [c_1\|\bmu^{t+1} - \bmu^*\|_2+\frac{c_2}{K}\sum_{j=1}^K\|\bEta_j^t - \bEta_j^*\|_2 + O(n^{-1/3})]\|\bmu^t - \bmu^{t+1}\|_2\\
	& \lesssim n^{-5/6}
\end{align*}
by Assumption \ref{assump:init}. By letting $t=0$, we have $d_2(\tilde \btheta_j^{1}, \btheta_j^{1} ) \lesssim n^{-5/6}$. 

Next, we prove that the distance between $\tilde \btheta_j^{t+1}$ and $\btheta_j^{t+1}$ is also $O(n^{-5/6})$ for $t>0$, in which case the pooled EM algorithm and the distributed EM algorithm update with different estimates. Specifically, at the $t$-th iteration, the pooled EM algorithm updates $\btheta^t$ to $\btheta^{t+1}$ and the distributed EM algorithm updates $\tilde\btheta^t$ to $\tilde\btheta^{t+1}$. Suppose we have $d_2(\btheta_j^t,\tilde\btheta_j^t)=O(n^{-5/6})$. Recall that $d_2(\btheta_j^{t+1},\tilde\btheta_j^{t+1})= | \lambda_j^{t+1} - \tilde\lambda_j^{t+1} | + \sum_{k=0}^1 \| \bmu_k^{t+1} - \tilde\bmu_k^{t+1} \|_2 $, and we first control $\sum_{k=0}^1 \| \bmu_k^{t+1} - \tilde\bmu_k^{t+1} \|_2$ using $\sqrt{2}\| \bmu^{t+1} - \tilde\bmu^{t+1} \|_2$.  

Again, let's assume the following three events.
\begin{align*}
	&\epsilon_{0j} := \{ \frac{1}{n} \sum_{i=1}^n  m_k(\y_{ij}) \leq 2L\}, k=1,2,3,4;\ j\in [K] \\
	&\epsilon_1 := \{ \| \nabla_{\bmu\bmu}^2 {\widetilde{Q}} ( \bmu^{t+1} | \tilde\btheta^t) - \nabla_{\bmu\bmu}^2 Q_{\bmu}(\bmu^{t+1} | \btheta^t)  \|_2 \leq C_1 \} \\
	&\epsilon_2 := \{ \| \nabla_{\bmu} {\widetilde{Q}} ( \bmu^{t+1} | \tilde\btheta^t) \|_2 \leq C_2 \},
\end{align*}
where 
$C_1\leq \rho\mu_-/2$ and $C_2 \leq (1-\rho)\mu_-\delta_{\rho}/2$ with $\delta_{\rho}=\min\{\rho, \rho\mu_-/(4L)\}$. 
Then, by using Lemma 6 in \cite{zhang2013communication}, under $\epsilon_0 \cap \epsilon_1 \cap\epsilon_2 $ where $\epsilon_0= \cap_j \epsilon_{0j}$ we have $\|\tilde \bmu^{t+1} - \bmu^{t+1}  \|_2 \leq C \| \nabla_{\bmu} {\widetilde{Q}} ( \bmu^{t+1} | \tilde\btheta^t) \|_2$. 
Next we control $\| \nabla_{\bmu} {\widetilde{Q}} ( \bmu^{t+1} | \tilde\btheta^t) \|_2$. 

Let's define $\bar \bmu^{t+1}$ as the root of $\nabla_{\bmu} Q_{\bmu} ( \bmu|\tilde\btheta^t)=\0$. Then with $\bmu'=\alpha_1\tilde\bmu^t + (1-\alpha_1)\bmu^{t+1}$ and $\bmu''=\alpha_2\tilde\bmu^t + (1-\alpha_2)\bar\bmu^{t+1}$ where $\alpha_1, \ \alpha_2\in(0,1)$, we have
\begin{align}
	\nabla_{\bmu} & {\widetilde{Q}} (\bmu^{t+1}|\tilde\btheta^t) = \nabla_{\bmu}{\check{Q}} (\bmu^{t+1}|\tilde\btheta^t) +\nabla_{\bmu} Q_{\bmu} (\tilde\bmu^t|\tilde\btheta^t) -\nabla_{\bmu}{\check{Q}} (\tilde\bmu^t|\tilde\btheta^t) - \nabla_{\bmu} Q_{\bmu} ( \bar\bmu^{t+1} | \tilde\btheta^t) \nonumber \\
	& = \frac{1}{Kn}\sum_{i=1}^n\sum_{j=1}^K t(\y_{i1},\tilde\bEta_j^t)\{ \nabla_{\bmu} h(\y_{i1},\bmu^{t+1},\tilde\btheta_j^t) - \nabla_{\bmu} h(\y_{i1}, \tilde\bmu^t, \tilde\btheta_j^t) \} \nonumber \\
	& + \frac{1}{Kn}\sum_{i=1}^n\sum_{j=1}^K \{ \nabla_{\bmu} h(\y_{ij},\tilde\bmu^t,\tilde\btheta_j^t) - \nabla_{\bmu} h(\y_{ij}, \bar\bmu^{t+1}, \tilde\btheta_j^t) \}\nonumber \\
	& \leq \frac{1}{Kn}\sum_{i=1}^n\sum_{j=1}^K \{ \nabla_{\bmu\bmu}^2  t(\y_{i1},\tilde\bEta_j^t) h(\y_{i1},\bmu',\tilde\btheta_j^t) - \nabla_{\bmu\bmu}^2t(\y_{i1},\bEta^*_j) h(\y_{i1},\bmu^*,\btheta_j^*)  \nonumber\\ 
	& + \nabla_{\bmu\bmu}^2t(\y_{i1},\bEta^*_j) h(\y_{i1},\bmu^*,\btheta_j^*)  \} (\bmu^{t+1} - \tilde\bmu^t) -  \frac{1}{Kn}\sum_{i=1}^n\sum_{j=1}^K \{ \nabla_{\bmu\bmu}^2h(\y_{ij},\bmu'',\tilde\btheta_j^t) - \nabla_{\bmu\bmu}^2 h(\y_{ij},\bmu^*,\btheta_j^*) \nonumber\\ 
	&+ \nabla_{\bmu\bmu}^2 h(\y_{ij},\bmu^*,\btheta_j^*) - E \nabla_{\bmu\bmu}^2 h(\y_{ij},\bmu^*,\btheta_j^*) + E \nabla_{\bmu\bmu}^2 h(\y_{ij},\bmu^*,\btheta_j^*) \} (\bar\bmu^{t+1} - \tilde\bmu^t).\nonumber
\end{align}
Then, using the same way we used before, based on $\epsilon_{0}$ and Assumption \ref{assump:smooth} 
we have
\begin{align}
	 \| \nabla_{\bmu} & {\widetilde{Q}} ( \bmu^{t+1} | \tilde\btheta^t) \|_2 \nonumber\\ 
	& \leq 2L(\|\bmu' - \bmu^* \|_2 + \frac{1}{K} \sum_{j=1}^K \|\tilde\bEta_j^t - \bEta_j^*\|_2)\| \bmu^{t+1} - \tilde\bmu^t \|_2 \label{mu1}\\ 
	& + 2M(\|\bmu'' - \bmu^* \|_2 + \frac{1}{K} \sum_{j=1}^K \|\tilde\btheta_j^t - \btheta_j^*\|_2)\| \bar\bmu^{t+1} - \tilde\bmu^t \|_2 \label{mu2}\\ 
	& + \|\frac{1}{Kn}\sum_{i=1}^n\sum_{j=1}^K \{\nabla_{\bmu\bmu}^2t(\y_{i1},\bEta^*_j) h(\y_{i1},\bmu^*,\btheta_j^*) - \nabla_{\bmu\bmu}^2 h(\y_{ij},\bmu^*,\btheta_j^*) \}\|_2\|\bmu^{t+1} - \tilde\bmu^t\|_2 \label{mu3}\\ 
	& + \|\frac{1}{Kn}\sum_{i=1}^n\sum_{j=1}^K \{\nabla_{\bmu\bmu}^2 h(\y_{ij},\bmu^*,\btheta_j^*) - E\nabla_{\bmu\bmu}^2 h(\y_{ij},\bmu^*,\btheta_j^*) \}\|_2\|\bar\bmu^{t+1} - \bmu^{t+1}\|_2 \label{mu4}\\ 
	& + \|\frac{1}{Kn}\sum_{i=1}^n\sum_{j=1}^K E\nabla_{\bmu\bmu}^2 h(\y_{ij},\bmu^*,\btheta_j^*)\|_2\|\bar\bmu^{t+1} - \bmu^{t+1}\|_2. \label{mu5}
\end{align}
As for \eqref{mu4}, based on Assumption \ref{assump:smooth} 
and Lemma S.1, with probability at least $1-(nK)^{-2/3}$, we have
\begin{align*}
	\|\frac{1}{Kn}\sum_{i=1}^n\sum_{j=1}^K \{E\nabla_{\bmu\bmu}^2 h(\y_{ij},\bmu^*,\btheta_j^*)  - \nabla_{\bmu\bmu}^2 h(\y_{ij},\bmu^*,\btheta_j^*) \}\|_2 = O((nK)^{-1/3}).
\end{align*} 
As for \eqref{mu3}, similarly, with probability at least $1-n^{-2/3}-(nK)^{-2/3}$, we have 
$$ \|\frac{1}{Kn}\sum_{i=1}^n\sum_{j=1}^K \{\nabla_{\bmu\bmu}^2t(\y_{i1},\bEta^*_j) h(\y_{i1},\bmu^*,\btheta_j^*) - \nabla_{\bmu\bmu}^2 h(\y_{ij},\bmu^*,\btheta_j^*) \}\|_2 = O(n^{-1/3}).$$
Also, by Assumption \ref{loc_conc} we can control $\|\frac{1}{Kn}\sum_{i=1}^n\sum_{j=1}^K E\nabla_{\bmu\bmu}^2 h(\y_{ij},\bmu*,\btheta_j^*)\|_2$ by a constant. Therefore, what remains is to control the terms $\|\bmu' - \bmu^* \|_2$, $\|\tilde\bEta_j^t - \bEta_j^* \|_2$, $\|\bmu^{t+1} -\tilde \bmu^t \|_2$, $\|\bmu'' - \bmu^* \|_2$, $\|\tilde\btheta_j^t - \btheta_j^* \|_2$, $\|\bar\bmu^{t+1} -\tilde \bmu^t \|_2$ and $\|\bar\bmu^{t+1} - \bmu^{t+1} \|_2$. 

We have the following relationship:
\begin{align*}
	\|\bmu' - \bmu^* \|_2 & \leq \|\tilde\bmu^t - \bmu^* \|_2 + \|\bmu^{t+1} - \bmu^* \|_2 \\
	\|\bmu'' - \bmu^* \|_2 & \leq \|\tilde\bmu^t - \bmu^* \|_2 + \|\bar\bmu^{t+1} - \bmu^{t+1} \|_2 + \|\bmu^{t+1} - \bmu^* \|_2 \\
	\|\bmu^{t+1} - \tilde\bmu^t \|_2 & \leq \|\tilde\bmu^t - \bmu^* \|_2 + \|\bmu^{t+1} - \bmu^* \|_2 \\
	\|\bar\bmu^{t+1} - \tilde\bmu^t \|_2 & \leq \|\bmu^{t+1} - \bar\bmu^{t+1} \|_2 + \|\bmu^{t+1} - \bmu^* \|_2 + \|\bmu^* - \tilde\bmu^t \|_2.
\end{align*}
Thus, we only need to control the terms on the right hand side of the above formula, i.e., $\|\tilde\bmu^t - \bmu^* \|_2$, $\|\bmu^{t+1} - \bmu^* \|_2$, and $\|\bar\bmu^{t+1} - \bmu^{t+1} \|_2$. 

For $\|\tilde\bmu^t - \bmu^* \|_2$, we have 
\begin{align*}
	\|\tilde\bmu^t - \bmu^* \|_2  & \leq \|\tilde\bmu^t - \bmu^t \|_2 + \|\bmu^t - \bmu^* \|_2  \\
	& \leq d_2(\btheta^t_j, \tilde\btheta_j^t) + d_2(\btheta^t_j, \btheta^*_j) \\
	& = O_p(n^{-5/6}) + O_p(n^{-1/2}),
\end{align*}
since $d_2(\btheta^t_j, \tilde\btheta_j^t) = O(n^{-5/6})$ with probability $1-K/n-n^{-2/3}$ and $d_2(\btheta^t_j, \btheta^*_j) = O(n^{-1/2}) $ with probability $1-K/n-(nK)^{-1}$ by Assumption \ref{assump:init}. It implies 
\begin{align*}
	E\|\tilde\bmu^t - \bmu^* \|_2^2 & \leq Ed_2^2(\btheta^t_j, \tilde\btheta_j^t) + Ed_2^2(\btheta^t_j, \btheta^*_j) \\
	& = O(K/n)+O(n^{-2/3}).
\end{align*}

For $\|\bmu^{t+1} - \bmu^* \|_2$, by Assumption \ref{assump:init} we have with probability $1-K/n-(nK)^{-1}$
\begin{align*}
	\|\bmu^{t+1} - \bmu^* \|_2 \leq d_2(\btheta^{t+1}_j, \btheta^*_j) = O(n^{-1/2})
\end{align*}
and it implies $E\| \bmu^{t+1} - \bmu^* \|_2^2 = O(K/n)$. 

As for $\|\bar\bmu^{t+1} - \bmu^{t+1} \|_2$, by {Lemma S.2}, we have $\|\bar\bmu^{t+1} - \bmu^{t+1} \|_2 = O(n^{-5/6})$ with probability at least $1-K/n-n^{-2/3}$ and it leads to $E(\|\bar\bmu^{t+1} - \bmu^{t+1} \|_2^2)=O(K/n) + O(n^{-2/3})$.
%

Therefore, we have 
\begin{align*}
	\|\bmu' - \bmu^* \|_2 & \leq \|\tilde\bmu^t - \bmu^* \|_2 + \|\bmu^{t+1} - \bmu^* \|_2 \leq O_p(n^{-5/6}) + O_p(n^{-1/2}), \\
	\|\bmu'' - \bmu^* \|_2 & \leq \|\tilde\bmu^t - \bmu^* \|_2 + \|\bar\bmu^{t+1} - \bmu^{t+1} \|_2 + \|\bmu^{t+1} - \bmu^* \|_2 \leq O_p(n^{-5/6}) + O_p(n^{-1/2}), \\
	\|\bmu^{t+1} - \tilde\bmu^t \|_2 & \leq \|\tilde\bmu^t - \bmu^* \|_2 + \|\bmu^{t+1} - \bmu^* \|_2 \leq O_p(n^{-5/6}) + O_p(n^{-1/2}), \\
	\|\bar\bmu^{t+1} - \tilde\bmu^t \|_2 & \leq \|\bmu^{t+1} - \bar\bmu^{t+1} \|_2 + \|\bmu^{t+1} - \bmu^* \|_2 + \|\bmu^* - \tilde\bmu^t \|_2 \leq O_p(n^{-5/6}) + O_p(n^{-1/2}).
\end{align*}


Next, let's control $P(\epsilon^c)=P((\epsilon_0 \cap \epsilon_1 \cap \epsilon_2)^c)$ where $\epsilon_0= \cap_j \epsilon_{0j}$. 
By Proposition 5.10 in \cite{vershynin2010introduction}, we have $P(\epsilon_{0j}^c)\lesssim \exp\{-n\}$ which leads to $P(\epsilon_0^c)\lesssim K\exp\{-n\} \leq K/n$. To control $P(\epsilon_1^c)=P(\| \nabla_{\bmu\bmu}^2 {\widetilde{Q}} ( \bmu^{t+1} | \tilde\btheta^t) - \nabla_{\bmu\bmu}^2 Q_{\bmu}(\bmu^{t+1} | \btheta^t)  \|_2 > C_1)$, we have
\begin{align}
& \nabla_{\bmu\bmu}^2 {\widetilde{Q}} ( \bmu^{t+1} | \tilde\btheta^t) - \nabla_{\bmu\bmu}^2 Q_{\bmu}(\bmu^{t+1} | \btheta^t) = \nonumber \\
&\frac{1}{Kn}\sum_{i=1}^n\sum_{j=1}^K \{ \nabla_{\bmu\bmu}^2  t(\y_{i1},\tilde\bEta_j^t) h(\y_{i1},\bmu',\tilde\btheta_j^t) - \nabla_{\bmu\bmu}^2t(\y_{i1},\bEta^*_j) h(\y_{i1},\bmu^*,\btheta_j^*)   + \nabla_{\bmu\bmu}^2t(\y_{i1},\bEta^*_j) h(\y_{i1},\bmu^*,\btheta_j^*)  \} \nonumber\\ 
	& -  \frac{1}{Kn}\sum_{i=1}^n\sum_{j=1}^K \{ \nabla_{\bmu\bmu}^2h(\y_{ij},\bmu'',\btheta_j^t) - \nabla_{\bmu\bmu}^2 h(\y_{ij},\bmu^*,\btheta_j^*) + \nabla_{\bmu\bmu}^2 h(\y_{ij},\bmu^*,\btheta_j^*)\}.\nonumber
\end{align}
Therefore, by $\epsilon_0$, Assumption \ref{assump:smooth}, \ref{assump:init} and Lemma S.1, we have 
\begin{align*}
	P(\| \nabla_{\bmu\bmu}^2 {\widetilde{Q}} ( \bmu^{t+1} | \tilde\btheta^t) - \nabla_{\bmu\bmu}^2 Q_{\bmu}(\bmu^{t+1} | \btheta^t)  \|_2 > C_1) = O(K/n) + O(n^{-2/3}).
\end{align*}
As for $P(\epsilon_2^c)=P(\| \nabla_{\bmu} {\widetilde{Q}} ( \bmu^{t+1} | \tilde\btheta^t) \|_2>C_2)$, with the results we obtained when bounding $\| \nabla_{\bmu} {\widetilde{Q}} ( \bmu^{t+1} | \tilde\btheta^t) \|_2$, we have 
\begin{align*}
	P(\epsilon_2^c)& =P(\| \nabla_{\bmu} {\widetilde{Q}} ( \bmu^{t+1} | \tilde\btheta^t) \|_2>C_2) \\ 
	& \lesssim \sqrt{(E\|\bmu' - \bmu^* \|^2_2 + \frac{1}{K} \sum_{j=1}^K E\|\tilde\bEta_j^t - \bEta_j^*\|_2^2)\| \bmu^{t+1} - \tilde\bmu^t \|^2_2} \\ 
	& + \sqrt{(E\|\bmu'' - \bmu^* \|_2 + \frac{1}{K} \sum_{j=1}^K E\|\tilde\btheta_j^t - \btheta_j^*\|^2_2)\| \bar\bmu^{t+1} - \tilde\bmu^t \|^2_2} \\ 
	& + \sqrt{E\|\frac{1}{Kn}\sum_{i=1}^n\sum_{j=1}^K \{\nabla_{\bmu\bmu}^2t(\y_{i1},\bEta^*_j) h(\y_{i1},\bmu^*,\btheta_j^*) - \nabla_{\bmu\bmu}^2 h(\y_{ij},\bmu^*,\btheta_j^*) \}\|_2^4 E\|\bmu^{t+1} - \tilde\bmu^t\|_2^4} \\ 
	& + \sqrt{E\|\frac{1}{Kn}\sum_{i=1}^n\sum_{j=1}^K \{\nabla_{\bmu\bmu}^2 h(\y_{ij},\bmu^*,\btheta_j^*) - E\nabla_{\bmu\bmu}^2 h(\y_{ij},\bmu^*,\btheta_j^*) \}\|_2^4 E\|\bar\bmu^{t+1} - \bmu^{t+1}\|_2^4} \\ 
	& + \|\frac{1}{Kn}\sum_{i=1}^n\sum_{j=1}^K E\nabla_{\bmu\bmu}^2 h(\y_{ij},\bmu^*,\btheta_j^*)\|_2^2 E\|\bar\bmu^{t+1} - \bmu^{t+1}\|_2^2 \\
	&  = O(K/n) + O(n^{-2/3}).
\end{align*}
In summary, we have $P(\epsilon)> 1 - K/n - n^{-2/3}$.

Therefore, by combining the above results, with probability at least $1-Kn^{-1}-n^{-2/3}$ we have
\begin{align*}
	\| \tilde\bmu^{t+1} - \bmu^{t+1} \|_2 =
	O(n^{-5/6}).
\end{align*}

Finally, we bound $|\lambda_j^{t+1} - \tilde\lambda_j^{t+1}|$. Recall that
$\lambda_{j}^{t+1}=\frac{1}{n}\sum_{i=1}^n w_{\btheta_j^t}^j(\y_{ij})$ and $\tilde\lambda_{j}^{t+1}=\frac{1}{n}\sum_{i=1}^n w_{\tilde\btheta_j^t}^j(\y_{ij})$, by Assumption \ref{assump:smooth} we have
\begin{align*}
	|\lambda_j^{t+1} - \tilde\lambda_j^{t+1}| & \leq \frac{1}{n}\sum_{i=1}^n  | w_{\btheta_j^t}^j(\y_{ij}) - w_{\tilde\btheta_j^t}^j(\y_{ij}) | \\
	& \leq \frac{1}{n}\sum_{i=1}^n m_3(\y_{ij})d_2( \tilde\btheta_j^t, \btheta_j^t )  = O_p(n^{-5/6}).
\end{align*}
Therefore, with probability at least $1-K/n-n^{-2/3}$ we have 
\begin{align*}
	d_2(\tilde\btheta_j^{t+1}, \btheta_j^{t+1}) \leq \sqrt{2}\| \tilde\bmu^{t+1} - \bmu^{t+1} \|_2 + |\lambda_j^{t+1} - \tilde\lambda_j^{t+1}|= O(n^{-5/6}).
\end{align*}
 This completes the proof.
\end{proof}

\section{Proof of Theorem \ref{coro1}}

\begin{proof}
	From Lemma \ref{lemma1} and Assumption \ref{assump:init} we have with probability at least $1-K/n-n^{-2/3}$
	\begin{align*}
		d_2(\tilde\btheta^t, \btheta^*) &\leq d_2(\tilde\btheta^t, \btheta^t) + d_2(\btheta^t, \btheta^*) \\
		& \leq \sqrt{K}\max_{j}d_2(\tilde\btheta_j^t, \btheta_j^t) + \kappa^td_2(\btheta^0, \btheta^*) + O(\sqrt{K/n}) \\
		& \leq O(n^{-1/3} \sqrt{K/n}) + \kappa^td_2(\btheta^0, \btheta^*) + O(\sqrt{K/n}) \\
		& \leq \kappa^td_2(\btheta^0, \btheta^*) + O(\sqrt{K/n})
	\end{align*}
	since $n^{-1/3} \sqrt{K/n}$ is ignorable compared to $\sqrt{K/n}$. Note that $\kappa<1$, so $\kappa^td_2(\btheta^0, \btheta^*)$ can be dominated by $O(\sqrt{K/n})$ when $t$ is large enough, and we have   
	\begin{align*}
		d_2(\tilde\btheta^t, \btheta^*) \leq c\sqrt{K/n}
	\end{align*}
	with $c$ a positive constant. This completes the proof.
\end{proof}

\section{Proof of Theorem \ref{overall_contraction}}

Consider the contraction region 
\begin{align*}
\B(\btheta^*;c_0,c_1)= & \{\btheta=(\bmu_0,\bmu_1,\lambda_1,\ldots,\lambda_K): \lambda_j\in(c_0,1-c_0), \bmu_0,\bmu_1\in\mathbb{R}^d,\\
                      &\|\bmu_k-\bmu_k^*\|_2(\leq \frac{1}{4\|\bSigma_j\|_2}\|\bmu_0^*-\bmu_1^*\|_2) \leq \frac{M^{3/2}}{4}\Delta_j,k=0,1,\\
                      &(1-c_1)\Delta_j^2<|\delta_0(\bbeta_j)|,|\delta_1(\bbeta_j)|,\sigma^2(\bbeta_j)<(1+c_1)\Delta_j^2  \},
\end{align*} 
based on which we state two technical lemmas needed in the proof of Theorem \ref{overall_contraction}. Definitions of the notations used in defining $\B(\btheta^*;c_0,c_1)$ can be found in Supplementary Material S6. In the following, $c_3, c_4, c_6, C_{\bmu}, c_{\bmu}, c_{\A\B}$ are some functions of $c_0,c_1,M$, and their exact forms can be found in the proof of the two technical lemmas. 
 \begin{lem1}[Contraction on the population iteration]\label{contraction} Suppose $\btheta^* \in\Theta$ and $\btheta\in\B(\btheta^*;c_0,c_1)$, then with 
 \begin{align*}
 	\kappa_1 & = (c_3 \vee C_{\bmu}) \exp(-c_4\Delta_{min}^2), \\
 	\kappa_{2} & =  [ \{ \frac{1}{\sqrt{K}} ( M c_3c_{\A\B} +  c_6 )\} \vee \{ MC_{\bmu} c_{\A\B} +  c_{\bmu}\} ] \exp(-c_4\Delta_{min}^2), \\
 	\kappa_{3} & =  [ \{M c_3c_{\A\B} +  c_6\} \vee \{ MC_{\bmu} c_{\A\B} +  c_{\bmu}\} ] \exp(-c_4\Delta_{min}^2),
 \end{align*}
we have 
 \begin{itemize}
 	\item[1.]  $ |\lambda_j(\btheta)-\lambda_j^*| \leq \kappa_1 d_2(\btheta_j, \btheta_j^*) \leq \kappa_1 d_2(\btheta, \btheta^*)$
 	\item[2.]  $\| \bmu_k(\btheta) - \bmu_k^* \|_2 \leq \kappa_{2}d_2(\btheta,\btheta^*) $ or $
	\| \bmu_k(\btheta) - \bmu_k^* \|_2  
	\leq  \kappa_{3} \frac{1}{K} \sum_{j=1}^K d_2(\btheta_j,\btheta_j^*) $, $k=0,1$.
 \end{itemize}
It implies if $\Delta_{min} > C(c_0,c_1,c_2,M,K)$ with $C(c_0,c_1,c_2,M,K)$ being a positive quantity that depends on $c_0,c_1,c_2, M,$ and $K$, then $\exists\ \kappa = (\sqrt{K}\kappa_1+2\kappa_2) \lesssim \exp(-c\Delta_{min}^2) \in(0,1)$, s.t., $$ d_2(M(\btheta),\btheta^*)\leq \kappa d_2(\btheta,\btheta^*). $$
\end{lem1}
Note that, $\kappa_1$ and $\kappa_3$ are the $\kappa''$ and $\kappa'$ in Theorem \ref{overall_contraction}, respectively.


\begin{lem1}[Uniform contraction inequality]\label{concentration} Suppose $\btheta^* \in\Theta$, and Condition \ref{cond2} 
is satisfied, then with probability at least $1-n^{-1}$, 
$$ \sup_{\btheta\in\B(\btheta^*;c_0,c_1)} | \lambda_j^n(\btheta) - \lambda_j(\btheta) | \lesssim \sqrt{\frac{\log(n)}{n}}, $$
and with probability at least $1 - \frac{1}{nK}$ we have
\begin{align*}
	\sup_{\btheta\in \B(\btheta^*;c_0,c_1)} \| \bmu_1^n(\btheta) - \bmu_1(\btheta) \|_2 
	 \lesssim \sqrt{ \frac{\log(nK)}{nK} }. 
\end{align*}
It implies that with probability at least $1-\frac{K}{n}-\frac{1}{nK}$, we have
$$ \sup_{\btheta\in\B(\btheta^*;c_0,c_1)}d_2(M_n(\btheta),M(\btheta))\leq T(n,K) :=  \sqrt{\frac{K\log(n)}{n}} + \sqrt{ \frac{\log(nK)}{nK}}. $$
\end{lem1}



\begin{proof}[Proof of Theorem \ref{overall_contraction} 
]
First we need to verify that Condition \ref{cond2} 
guarantees that the initial estimator $\btheta^0$ is in the contraction region $\B(\btheta^*;c_0,c_1)$. Recall that the parameter space is $$\Theta=\{\btheta=(\lambda_1,\ldots,\lambda_K,\bmu_0,\bmu_1):\forall j\in[K],\lambda_j\in(c_w,1-c_w),\bmu_0,\bmu_1\in \mathbb{R}^d \}$$ with $0<c_w<1$, and the contraction region is
\begin{align*}
\B(\btheta^*;c_0,c_1)= & \{\btheta=(\lambda_1,\ldots,\lambda_K,\bmu_0,\bmu_1): \lambda_j\in(c_0,1-c_0), \bmu_0,\bmu_1\in\mathbb{R}^d,\\
                      &\|\bmu_k-\bmu_k^*\|_2 (\leq \frac{1}{4\|\bSigma_j\|_2}\|\bmu_0^*-\bmu_1^*\|_2) \leq \frac{M^{3/2}}{4}\Delta_j,k=1,2,\\
                      &(1-c_1)\Delta_j^2<|\delta_0(\bbeta_j)|,|\delta_1(\bbeta_j)|,\sigma^2(\bbeta_j)<(1+c_1)\Delta_j^2  \}
\end{align*} 
where 
\begin{align*}
	& \bbeta_j = \bSigma_j^{-1}(\bmu_0-\bmu_1),\ \beta^*_j=\bSigma_j^{-1}(\bmu_0^*-\bmu_1^*) \\
	& \delta_0(\bbeta_j) = \bbeta_j^T(\bmu_0^* - \frac{\bmu_0+\bmu_1}{2}),\ \delta_1(\bbeta_j) = \bbeta_j^T(\bmu_1^* - \frac{\bmu_0+\bmu_1}{2})\\
	& \sigma(\bbeta_j)=\sqrt{\bbeta_j^T\bSigma_j\bbeta_j}=\sqrt{(\bmu_0-\bmu_1)^T\bSigma_j^{-1}(\bmu_0-\bmu_1)},\\
	& \Delta_j=\sqrt{\bbeta_j^{*T}\bSigma_j\bbeta_j^*}=\sqrt{(\bmu_0^*-\bmu_1^*)^T\bSigma_j^{-1}(\bmu_0^*-\bmu_1^*)},
\end{align*}
and constants $c_0,c_1,c_w$ satisfy $0 < c_0 \leq c_w<1, 1/2 < c_1<1$. 
Condition \ref{cond2} states that the initial estimator $\btheta^0$ satisfies $d_2(\btheta^0,\btheta^*)\leq r\Delta_{min}$, with $$r<\frac{M^{3/2}}{4}\wedge\frac{|c_0-c_w|}{\Delta_{min}}\wedge(\sqrt{\frac{(2c_1-1)}{M}+\frac{4}{M}}-\frac{2}{\sqrt{M}})\wedge(\sqrt{ \frac{c_1}{M} + \frac{1}{4}(M + \frac{1}{M} +2) } - \frac{1}{2}(\sqrt{M} + \frac{1}{\sqrt{M}})).$$
Thus, what we need to do is to use the statements in both Condition \ref{cond2} and $\Theta$ to derive each statement in $\B(\btheta^*;c_0,c_1)$. 

As $\lambda_j^* \in (c_w,1-c_w)$, when $|\lambda_j -\lambda_j^* | \leq r\Delta_{min} \leq |c_0 - c_w |$, we have $\lambda_j \in (c_0, 1-c_0)$. 

For $\delta_0(\bbeta_j)$ and $\delta_1(\bbeta_j)$, we have
\begin{align*}
	&| \Delta_j^2 - \bbeta_j^T(\bmu_0^* - \frac{\bmu_0 + \bmu_1}{2}) | \\ & = | \bbeta_j^{*T}(\bmu_0^* - \bmu_1^*) - \bbeta_j^T (\bmu_0^* - \bmu_1^* + \bmu_1^* - \frac{\bmu_0 + \bmu_1}{2}) | \\
	& \leq \| \bbeta^*_j - \bbeta_j \|_2 \| \bmu_0^* - \bmu_1^* \|_2 + | \bbeta_j^T (\bmu_1^* - \frac{\bmu_0 + \bmu_1}{2}) | \\
	& \leq M \| \bmu_0^* - \bmu_1^* - (\bmu_0 - \bmu_1)  \|_2\| \bmu_0^* - \bmu_1^* \|_2 + | (\bmu_0 - \bmu_1)^T \bOmega_j(2\bmu_1^* - \bmu_0 -\bmu_1) |/2 \\ 
	& \leq 2M^{1/2} r \Delta_j^2 + \frac{M}{2}r^2\Delta_j^2 + \frac{1}{2}\Delta_j^2 \\
	& \leq c_1 \Delta_j^2
\end{align*} 
where the last inequality is because $$r<\sqrt{\frac{(2c_1-1)}{M}+\frac{4}{M}}-\frac{2}{\sqrt{M}}.$$ It leads to $(1-c_1)\Delta_j^2<|\delta_0(\bbeta_j)|,\ |\delta_1(\bbeta_j)|<(1+c_1)\Delta_j^2$. 

For $\sigma(\bbeta_j)$, we have
$$ | \sigma^2(\bbeta_j) - \Delta_j^2 | \leq r\Delta^2_jM(r + \sqrt{M}) + r\sqrt{M}\Delta_j^2 \leq c_1 \Delta_j^2  $$ because
$$r<\sqrt{ \frac{c_1}{M} + \frac{1}{4}(M + \frac{1}{M} +2) } - \frac{1}{2}(\sqrt{M} + \frac{1}{\sqrt{M}}),$$ and it leads to $(1-c_1)\Delta_j^2<\sigma^2(\bbeta_j)<(1+c_1)\Delta_j^2$. 

We also have $ \|\bmu_k - \bmu_k^* \|_2 \leq M^{3/2}\Delta_{j}/4$ as $r\leq \frac{M^{3/2}}{4}$. Thus, Condition \ref{cond2} can guarantee $\btheta^0 \in \B(\btheta^*;c_0,c_1)$.

Next we prove Theorem \ref{overall_contraction} 
use induction
under Condition \ref{cond2} and the conditions in Lemma S.3 and Lemma S.4. Under Condition \ref{cond2}, we have $\btheta^0 \in \B(\btheta^*;c_0,c_1)$, it follows from Lemma S.3 and Lemma S.4 that
\begin{align*}
	d_2( \btheta^1,\btheta^*) = d_2(M_n( \btheta^0),\btheta^*) & \leq d_2(M( \btheta^0),\btheta^*) + T(n, K) \\ & \leq \kappa d_2( \btheta^0,\btheta^*) + T(n, K) \\ 
	& \leq \kappa r\Delta_{min} + T(n, K)     
\end{align*}
	which also implies $ \btheta^1 \in \B(\btheta^*;c_0,c_1)$ when $n$ is large enough to make $T(n, K) \leq (1-\kappa)r\Delta_{min} $. Now let's assume this property holds at the $t$-th step, i.e.,
	$$ d_2( \btheta^{t},\btheta^*)\leq \kappa^{t} d_2(\btheta^*, \btheta^0) + \frac{1-\kappa^{t}}{1-\kappa}T(n,K) $$ and $ \btheta^t \in \B(\btheta^*;c_0,c_1)$. Then we have
	\begin{align*}
		d_2( \btheta^{t+1},\btheta^*) = d_2(M_n( \btheta^{t}),\btheta^*) &\leq d_2(M( \btheta^{t}),\btheta^*) + T(n, K) \\ 
		& \leq \kappa [\kappa^{t} d_2(\btheta^*, \btheta^0) + \frac{1-\kappa^{t}}{1-\kappa}T(n,K)] + T(n,K) \\ 
		& = \kappa^{t+1} d_2(\btheta^*, \btheta^0) + \frac{1-\kappa^{t+1}}{1-\kappa}T(n,K).
	\end{align*}
	It also leads to $d_2( \btheta^{t+1},\btheta^*)\leq \kappa^{t+1}r\Delta_{min} + \frac{1-\kappa^{t+1}}{1-\kappa}T(n,K)$ and guarantees $ \btheta^{t+1}\in\B(\btheta^*;c_0,c_1)$ when $n$ is large enough to make $T(n, K) \leq (1-\kappa)r\Delta_{min}$. It completes the proof.
\end{proof}

\section{Proof of Lemma S.2}

\begin{proof}
Recall that $\bar\bmu^{t+1}$ is the solution of $\nabla_{\bmu}Q_{\bmu} ( \bmu|\tilde\btheta^t)=\0$, and $\bmu^{t+1}$ is the root of $\nabla_{\bmu}Q_{\bmu} ( \bmu|\btheta^t)=\0$. We use Lemma 6 of \cite{zhang2013communication} to measure the distance between $\bar\bmu^{t+1}$ and $\bmu^{t+1}$. 

Let's first define the following three good events
\begin{align*}
	&\epsilon_{0j} := \{ \frac{1}{n} \sum_{i=1}^n  m_1(\y_{ij}) \leq 2L\},\ j\in [K], \\
	&\epsilon_1 := \{ \| \nabla_{\bmu\bmu}^2 Q_{\bmu} ( \bmu^{t+1} | \tilde\btheta^t) - \nabla_{\bmu\bmu}^2 Q_{\bmu}(\bmu^{t+1} | \btheta^t)  \|_2 \leq C_1 \}, \\
	&\epsilon_2 := \{ \| \nabla_{\bmu} Q_{\bmu} ( \bmu^{t+1} | \tilde\btheta^t) \|_2 \leq C_2 \},
\end{align*}
where 
$C_1\leq \rho\mu_-/2$ and $C_2 \leq (1-\rho)\mu_-\delta_{\rho}/2$ with $\delta_{\rho}=\min\{\rho, \rho\mu_-/(4L)\}$. 
Then, by using Lemma 6 in \cite{zhang2013communication}, under $\epsilon_0 \cap \epsilon_1 \cap\epsilon_2 $ where $\epsilon_0= \cap_j \epsilon_{0j}$ we have $\|\bar \bmu^{t+1} - \bmu^{t+1}  \|_2 \leq C \| \nabla_{\bmu} Q_{\bmu} ( \bmu^{t+1} | \tilde\btheta^t) \|_2$. Next we control $\| \nabla_{\bmu} Q_{\bmu} ( \bmu^{t+1} | \tilde\btheta^t) \|_2$. 

Since $\nabla_{\bmu} Q_{\bmu} ( \bmu^{t+1} | \btheta^t)= \0 $, with $\check\btheta_j=\alpha_j\btheta_j^t + (1-\alpha_j)\tilde\btheta_j^{t}$ where $\alpha_j\in(0,1)$ we have
\begin{align*}
	\nabla_{\bmu} & Q_{\bmu}(\bmu^{t+1}|\tilde\btheta^t) = \nabla_{\bmu} Q_{\bmu}(\bmu^{t+1}|\tilde\btheta^t) - \nabla_{\bmu} Q_{\bmu} ( \bmu^{t+1} | \btheta^t) \\
	& = \frac{1}{Kn}\sum_{i=1}^n\sum_{j=1}^K \{ \nabla_{\bmu} h(\y_{ij},\bmu^{t+1},\tilde\btheta_j^t) - \nabla_{\bmu} h(\y_{ij}, \bmu^{t+1}, \btheta_j^t) \}\\
	& = \frac{1}{Kn}\sum_{i=1}^n\sum_{j=1}^K \{ \nabla_{\bmu\btheta_j}^2 h(\y_{ij},\bmu^{t+1},\check\btheta_j)(\tilde\btheta_j^{t} - \btheta_j^t) \}\\
	& = \frac{1}{Kn}\sum_{i=1}^n\sum_{j=1}^K \nabla_{\bmu\btheta_j}^2\{  h(\y_{ij},\bmu^{t+1},\check\btheta_j) - h(\y_{ij},\bmu^*,\btheta_j^*) \}(\tilde\btheta_j^{t} - \btheta_j^t)  \\
	& + \frac{1}{Kn}\sum_{i=1}^n\sum_{j=1}^K \nabla_{\bmu\btheta_j}^2\{  h(\y_{ij},\bmu^*,\btheta_j^*) - E h(\y_{ij},\bmu^*,\btheta_j^*) \}(\tilde\btheta_j^{t} - \btheta_j^t) \\
	& + \frac{1}{Kn}\sum_{i=1}^n\sum_{j=1}^K \nabla_{\bmu\btheta_j}^2 E h(\y_{ij},\bmu^*,\btheta_j^*)(\tilde\btheta_j^{t} - \btheta_j^t),  
\end{align*}
and it leads to
\begin{align*}
	\|\nabla_{\bmu} Q_{\bmu}(\bmu^{t+1}|\tilde\btheta^t)\|_2 
		& \leq [ \|\frac{1}{Kn}\sum_{i=1}^n\sum_{j=1}^K \nabla_{\bmu\btheta_j}^2\{  h(\y_{ij},\bmu^{t+1},\check\btheta_j) - h(\y_{ij},\bmu^*,\btheta_j^*) \}\|_2  \\
	& + \|\frac{1}{Kn}\sum_{i=1}^n\sum_{j=1}^K \nabla_{\bmu\btheta_j}^2\{  h(\y_{ij},\bmu^*,\btheta_j^*) - E h(\y_{ij},\bmu^*,\btheta_j^*) \}\|_2 \\
	& + \|\frac{1}{Kn}\sum_{i=1}^n\sum_{j=1}^K \nabla_{\bmu\btheta_j}^2 E h(\y_{ij},\bmu^*,\btheta_j^*) \|_2 ] \|\tilde\btheta_j^{t} - \btheta_j^t\|_2.  
\end{align*}
Therefore, by $\epsilon_0$, Assumption \ref{loc_conc}--\ref{assump:smooth} and Lemma S.1, we have $$ \|\nabla_{\bmu} Q_{\bmu}(\bmu^{t+1}|\tilde\btheta^t)\|_2 = O_p(n^{-5/6}). 
$$ 

Next, let's control $P(\epsilon^c)=P((\epsilon_0 \cap \epsilon_1 \cap \epsilon_2)^c)$ where $\epsilon_0= \cap_j \epsilon_{0j}$. 
By Assumption \ref{assump:smooth} and Lemma S.1, 
we have $P(\epsilon_{0j}^c)\lesssim 1/n$ which leads to $P(\epsilon_0^c)\lesssim  K/n$. To control $P(\epsilon_1^c)$, since
\begin{align*}
 \| \nabla_{\bmu\bmu}^2 Q_{\bmu} ( \bmu^{t+1} | \tilde\btheta^t) & - \nabla_{\bmu\bmu}^2 Q_{\bmu}(\bmu^{t+1} | \btheta^t)  \|_2 \\
	& \leq [ \|\frac{1}{Kn}\sum_{i=1}^n\sum_{j=1}^K \nabla_{\bmu\bmu}^2\{  h(\y_{ij},\bmu^{t+1},\tilde\btheta_j^t) - h(\y_{ij},\bmu^*,\btheta_j^*) \}\|_2  \\
	& + \|\frac{1}{Kn}\sum_{i=1}^n\sum_{j=1}^K \nabla_{\bmu\bmu}^2\{  h(\y_{ij},\bmu^{t+1},\btheta_j^t) - h(\y_{ij},\bmu^*,\btheta_j^*) \}\|_2, 
\end{align*}
we have 
\begin{align*}
P(\epsilon_1^c) & = P(\| \nabla_{\bmu\bmu}^2 Q_{\bmu} ( \bmu^{t+1} | \tilde\btheta^t) - \nabla_{\bmu\bmu}^2 Q_{\bmu}(\bmu^{t+1} | \btheta^t)  \|_2 > C_1) \\
	& \leq P( \|\frac{1}{Kn}\sum_{i=1}^n\sum_{j=1}^K \nabla_{\bmu\bmu}^2\{  h(\y_{ij},\bmu^{t+1},\tilde\btheta_j^t) - h(\y_{ij},\bmu^*,\btheta_j^*) \}\|_2 > C_1/2)  \\
	& + P(\|\frac{1}{Kn}\sum_{i=1}^n\sum_{j=1}^K \nabla_{\bmu\bmu}^2\{  h(\y_{ij},\bmu^{t+1},\btheta_j^t) - h(\y_{ij},\bmu^*,\btheta_j^*) \}\|_2 > C_1/2) \\
	& \leq E\| \bmu^{t+1} - \bmu^*  \|_2^2 + Ed_2^2(\tilde\btheta_j^t, \btheta_j^t) + Ed_2^2(\btheta_j^t, \btheta_j^*) \\ 
	& = O(K/n) + O(n^{-2/3}).
\end{align*}
Finally, we consider $P(\epsilon_2^c)$, and by Markov's inequality and Holder' inequality we have 
\begin{align*}
	P(\epsilon_2^c) & = P(\|\nabla_{\bmu} Q_{\bmu}(\bmu^{t+1}|\tilde\btheta^t)\|_2 > C_2) \\
	& \leq [ E\|\frac{1}{Kn}\sum_{i=1}^n\sum_{j=1}^K \nabla_{\bmu\btheta_j}^2\{  h(\y_{ij},\bmu^{t+1},\check\btheta_j) - h(\y_{ij},\bmu^*,\btheta_j^*) \}\|_2^2 E\|\tilde\btheta_j^{t} - \btheta_j^t\|_2^2 ]^{1/2}  \\
	& + [ E\|\frac{1}{Kn}\sum_{i=1}^n\sum_{j=1}^K \nabla_{\bmu\btheta_j}^2\{  h(\y_{ij},\bmu^*,\btheta_j^*) - E h(\y_{ij},\bmu^*,\btheta_j^*) \}\|_2^4 E\|\tilde\btheta_j^{t} - \btheta_j^t\|_2^4 ]^{1/2} \\
	& + \|\frac{1}{Kn}\sum_{i=1}^n\sum_{j=1}^K \nabla_{\bmu\btheta_j}^2 E h(\y_{ij},\bmu^*,\btheta_j^*) \|_2^2 E\|\tilde\btheta_j^{t} - \btheta_j^t\|_2^2 \\ 
	& = O(K/n) + O(n^{-2/3}).  
\end{align*}
Therefore, we have
\begin{align*}
	P(\epsilon^c)  & \leq P(\epsilon_0^c)+P(\epsilon_1^c)+P(\epsilon_2^c) \\ 
	& = c_1 \frac{K}{n} + c_2 n^{-2/3},
\end{align*}
and with $P(\epsilon) \geq 1 - c_1 \frac{K}{n} + c_2 n^{-2/3}$, we have
$$ \|\bar \bmu^{t+1} - \bmu^{t+1}  \|_2 = O(n^{-5/6}). $$
	
\end{proof}

\section{Gaussian mixture model's updating formula in EM algorithm}

In this section, we briefly go over the updating formulas of EM algorithm and introduce more notations required in the subsequent theoretical analysis. For Gaussian mixture models, the complete log-likelihood is
\begin{align*}
L_c(\btheta') & =\frac{1}{Kn}\sum_{i=1}^n\sum_{j=1}^K Z_{ij} \{ \log f(\y_{ij}|\bmu_1') + \log(\lambda_j') \} +(1-Z_{ij})\{\log f(\y_{ij}|\bmu_0') +\log (1-\lambda_j')\} \\
    & = -\frac{1}{2Kn}\sum_{i=1}^n\sum_{j=1}^K[(1-Z_{ij})(\y_{ij}-\bmu_0')^T\bOmega_j(\y_{ij}-\bmu_0')+Z_{ij}(\y_{ij}-\bmu_1')^T\bOmega_j(\y_{ij}-\bmu_1')] \\
	& + \frac{1}{Kn}\sum_{i=1}^n\sum_{j=1}^K[(1-Z_{ij})\log(1-\lambda_j')+Z_{ij}\log(\lambda_j')]
\end{align*}
where $\bOmega_j=\bSigma_j^{-1}$ and we consider a more general setting with heterogeneous variance-covariance matrix across sites.
With a given parameter $\btheta$, we have   
\[
\gamma_{\btheta}(\y_{ij})=E(Z_{ij}|\y_{ij},\btheta) =\frac{\lambda_j}{\lambda_j + (1-\lambda_j)\exp\{(\bmu_0-\bmu_1)^T\bSigma_j^{-1}(\y_{ij}-\frac{\bmu_0+\bmu_1}{2})\}}.
\]
Thus, in the E-step we get the Q function
\begin{align*}
& Q(\btheta'|\btheta) \\
	& = -\frac{1}{2Kn}\sum_{i=1}^n\sum_{j=1}^K[(1-\gamma_{\btheta}(\y_{ij}))(\y_{ij}-\bmu_0')^T\bOmega_j(\y_{ij}-\bmu_0')+\gamma_{\btheta}(\y_{ij})(\y_{ij}-\bmu_1')^T\bOmega_j(\y_{ij}-\bmu_1')] \\
	& + \frac{1}{Kn}\sum_{i=1}^n\sum_{j=1}^K[(1-\gamma_{\btheta}(\y_{ij}))\log(1-\lambda_j')+\gamma_{\btheta}(\y_{ij})\log(\lambda_j')],
\end{align*}
and in the M-step we update $\btheta$ by 
$$ M_n(\btheta)=(\bmu_0^n(\btheta), \bmu_1^n(\btheta),\lambda_{1}^n(\btheta),\ldots,\lambda_{K}^n(\btheta)) = \arg\max_{\btheta'} Q(\btheta'|\btheta). $$
The explicit updating formula is
\begin{align*}
	& \lambda_{j}^n(\btheta)=\frac{1}{n}\sum_{i=1}^n \gamma_{\btheta}(\y_{ij}),\ j\in [K]  \\
	& \bmu_{0}^n(\btheta)=[\frac{1}{nK}\sum_{j=1}^K\sum_{i=1}^n (1-\gamma_{\btheta}(\y_{ij})) \bOmega_j ]^{-1} \frac{1}{nK}\sum_{j=1}^K\sum_{i=1}^n (1-\gamma_{\btheta}(\y_{ij})) \bOmega_j \y_{ij},\\
	& \bmu_1^n(\btheta)=[\frac{1}{nK}\sum_{j=1}^K\sum_{i=1}^n \gamma_{\btheta}(\y_{ij}) \bOmega_j ]^{-1} \frac{1}{nK}\sum_{j=1}^K\sum_{i=1}^n \gamma_{\btheta}(\y_{ij}) \bOmega_j \y_{ij}.
\end{align*}
The corresponding population version is
\begin{align}
    & M(\btheta)=(\bmu_0(\btheta), \bmu_1(\btheta),\lambda_{1}(\btheta),\ldots,\lambda_{K}(\btheta)) = \arg\max_{\btheta'} E[Q(\btheta'|\btheta)] \\
	& \lambda_{j}(\btheta)=E\gamma_{\btheta}(Y_{j}),\ j\in [K] \label{eqpop1} \\
	& \bmu_{0}(\btheta)=[\frac{1}{K}\sum_{j=1}^K (1-E\gamma_{\btheta}(Y_{j})) \bOmega_j ]^{-1} \frac{1}{K}\sum_{j=1}^K E[(1-\gamma_{\btheta}(Y_{j})) \bOmega_j Y_{j}], \label{eqpop2}\\
	& \bmu_1(\btheta)=[\frac{1}{K}\sum_{j=1}^K E\gamma_{\btheta}(Y_{j}) \bOmega_j ]^{-1} \frac{1}{K}\sum_{j=1}^K E[\gamma_{\btheta}(Y_{j}) \bOmega_j Y_{j}]. \label{eqpop3}
\end{align}
Here we remark that, for simplicity, in all the population version notations we use $(Y_j,Z_j)$ to represent random variables from site $j$. All the expectations are taken with respect to the true model with parameter $\btheta^*$ except for extra remark. 

Next we define the contraction region $\B(\btheta^*;c_0,c_1)$ as:  
\begin{align*}
\B(\btheta^*;c_0,c_1)= & \{\btheta=(\bmu_0,\bmu_1,\lambda_1,\ldots,\lambda_K): \lambda_j\in(c_0,1-c_0), \bmu_0,\bmu_1\in\mathbb{R}^d,\\
                      &\|\bmu_k-\bmu_k^*\|_2(\leq \frac{1}{4\|\bSigma_j\|_2}\|\bmu_0^*-\bmu_1^*\|_2) \leq \frac{M^{3/2}}{4}\Delta_j,k=0,1,\\
                      &(1-c_1)\Delta_j^2<|\delta_0(\bbeta_j)|,|\delta_1(\bbeta_j)|,\sigma^2(\bbeta_j)<(1+c_1)\Delta_j^2  \},
\end{align*} 
where 
\begin{align*}
	& \bbeta_j = \bSigma_j^{-1}(\bmu_0-\bmu_1),\ \bbeta^*_j=\bSigma_j^{-1}(\bmu_0^*-\bmu_1^*) \\
	& \delta_0(\bbeta_j) = \bbeta_j^T(\bmu_0^* - \frac{\bmu_0+\bmu_1}{2}),\ \delta_1(\bbeta_j) = \bbeta_j^T(\bmu_1^* - \frac{\bmu_0+\bmu_1}{2})\\
	& \sigma(\bbeta_j)=\sqrt{\bbeta_j^T\bSigma_j\bbeta_j}=\sqrt{(\bmu_0-\bmu_1)^T\bSigma_j^{-1}(\bmu_0-\bmu_1)},\\
	& \Delta_j=\sqrt{\bbeta_j^{*T}\bSigma_j\bbeta_j^*}=\sqrt{(\bmu_0^*-\bmu_1^*)^T\bSigma_j^{-1}(\bmu_0^*-\bmu_1^*)},
\end{align*}
and constants $c_0,c_1,c_w$ satisfy $0 < c_0 \leq c_w < 1/2$ and $ 1/2< c_1<1$. The contraction property of EM algorithm will be considered within this region. In the proof of Theorem \ref{overall_contraction}, we will show that any $\btheta \in \Theta$ that satisfies Condition \ref{cond2} guarantees $\btheta \in \B(\btheta^*;c_0,c_1)$.

Since we consider a heterogeneous variance-covariance matrix setting where $\bSigma_i \neq \bSigma_j$ for $i\neq j$, for all $j\in[K]$, we require that there are some positive constants $M$ to make $M^{-1}\leq \lambda_{min}(\bSigma_j)\leq \lambda_{max}(\bSigma_j)\leq M$ where $\lambda_{min}(\cdot)$ and $\lambda_{max}(\cdot)$ denote the smallest and largest eigenvalues, respectively. We use $\Delta_j$ as the signal to noise ratio of site $j$ and we further require that there exist $\Delta_{min}, \Delta_{max}>0$, s.t., $ \Delta_{min}=\sup_{\Delta}\{\Delta \leq \Delta_j, \text{for all}\ j\in [K] \} $ and $ \Delta_{max}=\inf_{\Delta}\{\Delta \geq \Delta_j, \text{for all}\ j\in [K] \} $. Thus, for any $j\in [K]$ we have $\Delta_{min}\leq \Delta_j\leq\Delta_{max}$, and $\Delta_{min}$ can be viewed as the global signal to noise ratio in our multi-site learning setting. In addition, we require $\Delta_{max} = c_2 \Delta_{min}$ with a positive constant $c_2>1$. The conclusions of homogeneous variance-covariance matrix setting presented in the main body of the manuscript can be easily obtained by letting $\Delta=\Delta_{min}=\Delta_{max}$ and $c_2=1$.

\section{Verification of Assumption \ref{assump:smooth} in Gaussian mixture model}

\begin{proof}
For Gaussian mixture model, we have
$$ h(\y;\bmu,\btheta_j')=-\frac{1}{2}w_{\btheta_j'}(\y)(\y-\bmu_1)^T\bSigma_j^{-1}(\y-\bmu_1) -\frac{1}{2}(1-w_{\btheta_j'}(\y))(\y-\bmu_0)^T\bSigma_j^{-1}(\y-\bmu_0), $$
based on which the derivatives can be derived as below:
\begin{align*}
&\nabla_{\bmu} h(\y;\bmu,\btheta_j') =  
\begin{pmatrix}
w^j_{\btheta_j'}(\y)\bSigma_j^{-1}(\y-\bmu_1)  \\
(1-w^j_{\btheta_j'}(\y))\bSigma_j^{-1}(\y-\bmu_0)
\end{pmatrix}, \\
&\nabla^2_{\bmu\bmu} h(\y;\bmu,\btheta_j') =  
\begin{pmatrix}
-w^j_{\btheta_j'}(\y)\bSigma_j^{-1} & \0 \\
\0 & (w^j_{\btheta_j'}(\y)-1)\bSigma_j^{-1}
\end{pmatrix} = 
\begin{pmatrix}
-w^j_{\btheta_j'}(\y) & 0 \\
0 & (w^j_{\btheta_j'}(\y)-1)
\end{pmatrix}\otimes \bSigma_j^{-1},\\
&
\nabla^2_{\bmu\bmu} h(\y;\bmu,\btheta_j') t(\y,\bEta_j')  = 
\begin{pmatrix}
-w^j_{\btheta_j'}(\y)t(\y,\bEta_j') & 0 \\
0 & (w^j_{\btheta_j'}(\y)-1)t(\y,\bEta_j')
\end{pmatrix}\otimes \bSigma_j^{-1}, \\
&\nabla^2_{\bmu\btheta_j} h(\y;\bmu,\btheta_j') =  
\begin{pmatrix}
\bSigma_j^{-1}(\y-\bmu_1) \otimes (\frac{\partial w^j_{\btheta_j'}(\y)}{\partial \btheta_j'})^T  \\
\bSigma_j^{-1}(\y-\bmu_0) \otimes (\frac{(1-\partial w^j_{\btheta_j'}(\y))}{\partial \btheta_j'})^T
\end{pmatrix},
\end{align*}
where the derivative $\nabla^2_{\bmu\btheta_j}h(\y;\bmu,\btheta_j)$ is taken with respect to the $\bmu$ part and $\btheta_j$ part sequentially.  

\subsection{Lipchitz-continuity}

We first handle the derivatives taken with respect to $\bmu$ only. Let's denote $H(\y;\btheta_j')=\nabla^2_{\bmu\bmu} h(\y;\bmu,\btheta_j')$ and $\widetilde H(\y;\bEta_j')=\nabla^2_{\bmu\bmu} h(\y;\bmu,\btheta_j')t(\y,\bEta_j')$, then we have
\begin{align*}
\| H(\y;\bar\btheta_j) - H(\y;\bar\btheta_j') \|_2 &= \left \| 
\begin{pmatrix}
-w^j_{\bar\btheta_j}(\y) + w^j_{\bar\btheta_j'}(\y) & 0 \ \\
0 & w^j_{\bar\btheta_j}(\y) - w^j_{\bar\btheta_j'}(\y)
\end{pmatrix}\otimes \bSigma_j^{-1} \right \|_2 \\
& \leq | w^j_{\bar\btheta_j}(\y) - w^j_{\bar\btheta_j'}(\y) | \cdot d_1(\bSigma_j^{-1}) 
\end{align*}
and 
\begin{align*}
& \|\widetilde H(\y;\bar\bEta_j) - \widetilde H(\y;\bar\bEta_j') \|_2 = \\ & \left \| 
\begin{pmatrix}
-w^j_{\bar\btheta_j}(\y)t(\y,\bar\bEta_j) + w^j_{\bar\btheta_j'}(\y)t(\y,\bar\bEta_j') & 0 \\
0 & (1-w^j_{\bar\btheta_j'}(\y))t(\y,\bar\bEta_j') - (1-w^j_{\bar\btheta_j}(\y))t(\y,\bar\bEta_j)
\end{pmatrix}\otimes \bSigma_j^{-1} \right \|_2 \\
& \leq ( |w^j_{\bar\btheta_j}(\y)t(\y,\bar\bEta_j) - w^j_{\bar\btheta_j'}(\y)t(\y,\bar\bEta_j') | + | t(\y,\bar\bEta_j') - t(\y,\bar\bEta_j) |) \cdot d_1(\bSigma_j^{-1}) 
\end{align*}
where $d_1(\cdot)$ is the largest singular value of a matrix. Thus, we only need to verify 
\begin{align}
 	| w^j_{\bar\btheta_j}(\y) - w^j_{\bar\btheta_j'}(\y) | \leq m_1(\y)\| \bar\btheta_j - \bar\btheta_j' \|_2 \label{lip1}
 \end{align}
and 
\begin{align}
 	( |w^j_{\bar\btheta_j}(\y)t(\y,\bar\bEta_j) - w^j_{\bar\btheta_j'}(\y)t(\y,\bar\bEta_j') | + | t(\y,\bar\bEta_j') - t(\y,\bar\bEta_j) |) \leq m_2(\y)\| \bar\bEta_j - \bar\bEta_j' \|_2 \label{lip2}
 \end{align}
 with $m_1(\y)$ and $m_2(\y)$ satisfy the moment conditions stated in Assumption \ref{assump:smooth}. 

Let's first handle \eqref{lip1}. Let $\Delta_{\btheta_j}=\btheta_j - \btheta_j'$, and $\btheta_j^u = \btheta_j' + u \Delta_{\btheta_j}$ with $u \in (0,1)$, then 
\begin{align*}
	w^j_{\btheta_j}(\y) - w^j_{\btheta_j'}(\y) & = \int_0^1 \left\langle \frac{\partial w^j_{\btheta_j}(\y)}{\partial \btheta_j}|_{\btheta_j=\btheta_j^u}, \Delta_{\btheta_j}  \right\rangle du \\
	& = \int_0^1 \left\langle \frac{\partial w^j_{\btheta_j}(\y)}{\partial \lambda_j}|_{\btheta_j=\btheta_j^u}, \Delta_{\lambda_j}  \right\rangle du + \sum_{k=0}^1 \int_0^1 \left\langle \frac{\partial w^j_{\btheta_j}(\y)}{\partial \bmu_k}|_{\btheta_j=\btheta_j^u}, \Delta_{\bmu_k}  \right\rangle du.
\end{align*}
From
\begin{align*}
	w^j_{\btheta_j}(\y) & =\frac{\lambda_j}{\lambda_j + (1-\lambda_j)\exp\{(\bmu_0-\bmu_1)^T\bSigma_j^{-1}(\y-\frac{\bmu_0+\bmu_1}{2})\}} \\
	&= \frac{\lambda_j}{\lambda_j + (1-\lambda_j)\exp\{\bbeta_j^T(\y-\frac{\bmu_0+\bmu_1}{2})\}}, 
\end{align*}
we have 
\begin{align*}
	\frac{\partial w^j_{\btheta_j}(\y)}{\partial \lambda_j} 
	&= \frac{\exp\{\bbeta_j^T(\y-\frac{\bmu_0+\bmu_1}{2})\}}{[\lambda_j + (1-\lambda_j)\exp\{\bbeta_j^T(\y-\frac{\bmu_0+\bmu_1}{2})\}]^2},  \\
	& = \frac{\exp(t)}{\{ \lambda_j + (1-\lambda_j)\exp(t) \}^2} \ \text{(using}\ t= \bbeta_j^T(\y-\frac{\bmu_0+\bmu_1}{2}) ) \\
	& = \frac{1}{\{ \lambda_j\exp(-t/2) + (1-\lambda_j)\exp(t/2) \}^2} \\
	& \leq \frac{1}{4\lambda_j(1-\lambda_j)} \ \text{(using results in Section \ref{sectool} and}\ t \in \mathbb{R}).
	\end{align*}
In addition, we have 	
\begin{align*}
	\frac{\partial w^j_{\btheta_j}(\y)}{\partial \bmu_k}&= \lambda_j(1-\lambda_j)\frac{\partial w^j_{\btheta_j}(\y)}{\partial \lambda_j}\bOmega_j(\y-\bmu_k)(-1)^{1(k=1)},\ k=0,1, 
\end{align*}
which leads to 
\begin{align*}
	\left\| \frac{\partial w^j_{\btheta_j}(\y)}{\partial \bmu_k} \right\|_2 \leq \frac{1}{4}\| \bOmega_j(\y-\bmu_k) \|_2. 
\end{align*}
Therefore, we have verified \eqref{lip1} with $m_1(\y)=\sqrt{3}[\{4\lambda_j(1-\lambda_j)\}^{-1} + \| \bOmega_j(\y-\bmu_1) \|_2/4 + \| \bOmega_j(\y-\bmu_0) \|_2/4 ]=c_1 + c_2 \|\z \|_2 $ where $\z\sim N_d(\0,\I_d)$. 

Let's then handle \eqref{lip2}. By plugging in the form of $w^j_{\btheta_j}(\y)$ and $t(\y,\bEta_j)$, we have $w^j_{\btheta_j}(\y)t(\y,\bEta_j)=w^1_{\btheta_1}(\y)\lambda_j/\lambda_1$ and
\begin{align*}
	\frac{\partial w^j_{\btheta_j}(\y)t(\y,\bEta_j)}{\partial \lambda_1} & = \frac{\lambda_j}{\lambda_1}\frac{\partial w^1_{\btheta_1}(\y)}{\partial \lambda_1} - \frac{\lambda_j}{\lambda_1^2}w^1_{\btheta_1}(\y) \\
	 \frac{\partial w^j_{\btheta_j}(\y)t(\y,\bEta_j)}{\partial \lambda_j} & =\frac{1}{\lambda_1} w^1_{\btheta_1}(\y) \\
	 \frac{\partial w^j_{\btheta_j}(\y)}{\partial \bmu_k}&= \frac{\lambda_j}{\lambda_1} \lambda_1(1-\lambda_1)\frac{\partial w^1_{\btheta_1}(\y)}{\partial \lambda_1}\bOmega_1(\y-\bmu_k)(-1)^{1(k=1)},\ k=0,1.
	 \end{align*}
Let $\Delta_{\bEta_j}=\bEta_j - \bEta_j'$, and $\bEta_j^u = \bEta_j' + u \Delta_{\bEta_j}$ with $u \in (0,1)$, then 
\begin{align*}
	& w^j_{\btheta_j}(\y)t(\y,\bEta_j) - w^j_{\btheta_j'}(\y)t(\y,\bEta_j') \\
	& = \int_0^1 \left\langle \frac{\partial w^j_{\btheta_j}(\y)t(\y,\bEta_j)}{\partial \bEta_j}|_{\bEta_j=\bEta_j^u}, \Delta_{\bEta_j}  \right\rangle du \\
	& = \int_0^1 \left\langle \frac{\partial w^j_{\btheta_j}(\y)}{\partial \lambda_j}|_{\bEta_j=\bEta_j^u}, \Delta_{\lambda_j}  \right\rangle du + \int_0^1 \left\langle \frac{\partial w^j_{\btheta_j}(\y)}{\partial \lambda_j}|_{\bEta_j=\bEta_j^u}, \Delta_{\lambda_j}  \right\rangle du \\ & + \sum_{k=0}^1 \int_0^1 \left\langle \frac{\partial w^j_{\btheta_j}(\y)}{\partial \bmu_k}|_{\bEta_j=\bEta_j^u}, \Delta_{\bmu_k}  \right\rangle du.
\end{align*}
For $\btheta \in  U_{\btheta^*}(\rho)$, we have the following results:
\begin{align*}
	& \left| \frac{\partial w^j_{\btheta_j}(\y)t(\y,\bEta_j)}{\partial \lambda_1} \right| \leq \frac{\lambda_j}{\lambda_1} \left| \frac{\partial w^j_{\btheta_j}(\y)}{\partial \lambda_1} \right| + \frac{\lambda_j}{\lambda_1^2}  \leq \frac{\lambda_j}{4\lambda_1^2(1-\lambda_j)} + \frac{\lambda_j}{\lambda_1^2} := c_1, \\
	& \left|\frac{\partial w^j_{\btheta_j}(\y)t(\y,\bEta_j)}{\partial \lambda_j} \right|=\frac{1}{\lambda_1} w^1_{\btheta_1}(\y) \leq \frac{1}{\lambda_1} := c_2, \\
	& \left\| \frac{\partial w^j_{\btheta_j}(\y)}{\partial \bmu_k}\right\|_2= \frac{\lambda_j}{\lambda_1} \lambda_1(1-\lambda_1)\frac{\partial w^1_{\btheta_1}(\y)}{\partial \lambda_1}\bOmega_1(\y-\bmu_k) \leq \frac{\lambda_j}{4\lambda_1} \| \bOmega_1(\y-\bmu_k) \|_2:=m_{2k}(\y). 
\end{align*}
Therefore, with $d_2(\bEta_j,\bEta_j')=|\lambda_j-\lambda_j'|+|\lambda_1 - \lambda_1'| + \|\bmu_0 - \bmu_0'\|_2 + \|\bmu_1 - \bmu_1'\|_2$, we have
\begin{align*}
	|w^j_{\btheta_j}(\y)t(\y,\bEta_j) - w^j_{\btheta_j'}(\y)t(\y,\bEta_j')  & \leq c_1 |\lambda_1 - \lambda_1'| + c_2 |\lambda_j - \lambda_j'| + \sum_{k=0}^1 m_{2k}(\y) \| \bmu_k -\bmu_k' \|_2 \\ 
	& \leq (c_1 + c_2 + m_{20}(\y) + m_{21}(\y))d_2(\bEta_j,\bEta_j') \\ 
	& \leq \sqrt{4}(c_1 + c_2 + m_{20}(\y) + m_{21}(\y))\|\bEta_j-\bEta_j'\|_2 \\ 
	& := m_{23}(\y)\|\bEta_j-\bEta_j'\|_2.
\end{align*}
	Follow the same technique, we can also obtain
	\begin{align*}
		|t(\y,\bEta_j) - t(\y,\bEta_j') | \leq m_{24}(\y)\|\bEta_j-\bEta_j'\|_2, 
	\end{align*}
	and it completes the verification of \eqref{lip2} with $m_2(\y)=m_{23}(\y)+m_{24}(\y)$ which can be written as $c_1 + c_2 \|\z\|_2$ with $\z \sim N_d(\0,\I_d)$.  
	
As for the derivative that also involves $\btheta_j$, we have
\begin{align*}
	\|\nabla^2_{\bmu\btheta_j} h(\y;\bmu,\btheta_j') - \nabla^2_{\bmu\btheta_j} h(\y;\bar\bmu,\btheta_j'')\|_2 \leq \|\nabla^2_{\bmu\btheta_j} & h(\y;\bmu,\btheta_j')  - \nabla^2_{\bmu\btheta_j} h(\y;\bar\bmu,\btheta_j')\|_2 + \\ 
	& \|\nabla^2_{\bmu\btheta_j} h(\y;\bar\bmu,\btheta_j') - \nabla^2_{\bmu\btheta_j} h(\y;\bar\bmu,\btheta_j'')\|_2,
\end{align*}
	and we want to show that 
	\begin{align}
		& \|\nabla^2_{\bmu\btheta_j} h(\y;\bmu,\btheta_j') - \nabla^2_{\bmu\btheta_j} h(\y;\bar\bmu,\btheta_j')\|_2 \leq m_{41}(\y)\| \bmu -\bar\bmu \|_2, \label{derive1}\\
		& \|\nabla^2_{\bmu\btheta_j} h(\y;\bar\bmu,\btheta_j') - \nabla^2_{\bmu\btheta_j} h(\y;\bar\bmu,\btheta_j'')\|_2 \leq m_{42}(\y)\|\btheta_j' - \btheta_j'' \|_2 \label{derive2}
	\end{align}
	where $m_4(\y)= m_{41}(\y)\vee m_{42}(\y)$ satisfies some moment conditions as stated in Assumption \ref{assump:smooth}. For \eqref{derive1}, using some results we obtained before, we have
	\begin{align*}
		\|\nabla^2_{\bmu\btheta_j} h(\y;\bmu,\btheta_j') & - \nabla^2_{\bmu\btheta_j} h(\y;\bar\bmu,\btheta_j')\|_2 =   
\left\| \begin{pmatrix}
\bSigma_j^{-1}(\bmu_1-\bar\bmu_1) \otimes (\frac{\partial w^j_{\btheta_j'}(\y)}{\partial \btheta_j'})^T  \\
\bSigma_j^{-1}(\bmu_0-\bar\bmu_0) \otimes (\frac{(1-\partial w^j_{\btheta_j'}(\y))}{\partial \btheta_j'})^T
\end{pmatrix}\right\|_2 \\
&\leq \| \bSigma_j^{-1}(\bmu_1-\bar\bmu_1) \otimes (\frac{\partial w^j_{\btheta_j'}(\y)}{\partial \btheta_j'})^T \|_2 + \|\bSigma_j^{-1}(\bmu_0-\bar\bmu_0) \otimes (\frac{(1-\partial w^j_{\btheta_j'}(\y))}{\partial \btheta_j'})^T \|_2 \\
& \leq \| \bSigma_j^{-1}(\bmu_1-\bar\bmu_1)\|_2 \| (\frac{\partial w^j_{\btheta_j'}(\y)}{\partial \btheta_j'})^T \|_2 + \|\bSigma_j^{-1}(\bmu_0-\bar\bmu_0)\|_2 \| (\frac{(1-\partial w^j_{\btheta_j'}(\y))}{\partial \btheta_j'})^T \|_2 \\
& \leq (c_1 + c_2 \|\z\|_2)\| \bmu- \bmu' \|_2.
	\end{align*}  
Similarly, for \eqref{derive2}, we have
\begin{align*}
	\|\nabla^2_{\bmu\btheta_j} h(\y;\bar\bmu,\btheta_j') - \nabla^2_{\bmu\btheta_j} h(\y;\bar\bmu,\btheta_j'')\|_2 \leq (c_1\|\z\|_2^3 + c_2 \|\z\|_2^2 + c_3\|\z\|_2 + c_4)\| \btheta_j'- \btheta_j'' \|_2.
\end{align*}
Therefore, with the above results, we have
\begin{align*}
	\|\nabla^2_{\bmu\btheta_j} h(\y;\bmu,\btheta_j') - \nabla^2_{\bmu\btheta_j} h(\y;\bar\bmu,\btheta_j'')\|_2 \leq (c_1\|\z\|_2^3 + c_2 \|\z\|_2^2 + c_3\|\z\|_2 + c_4)(\|\bmu -\bar\bmu\|_2 + \| \btheta_j'- \btheta_j'' \|_2)
\end{align*}
	with $m_4(\y)=c_1\|\z\|_2^3 + c_2 \|\z\|_2^2 + c_3\|\z\|_2 + c_4$. Notice that all $m_k(\y)$, $k=1,2,3,4$ can be written as functions of $\|\z\|_2$ which follows a $\chi_d$ distribution. It can be easily verified that the moment conditions in Assumption \ref{assump:smooth} can be satisfied.

\subsection{Moment bounds}
Recall the form of Hessian matrices, we have 
\begin{align*}
	\nabla^2_{\bmu\bmu} h(\y;\bmu^*,\btheta_j^*) &=   
\begin{pmatrix}
-w^j_{\btheta_j^*}(\y) & 0 \\
0 & (w^j_{\btheta_j^*}(\y)-1)
\end{pmatrix}\otimes \bSigma_j^{-1} 
\end{align*}
and 
	\begin{align*}
	E\nabla^2_{\bmu\bmu} h(\y;\bmu^*,\btheta_j^*) &=   
\begin{pmatrix}
-\lambda_j^* & 0 \\
0 & -(1-\lambda_j^*)
\end{pmatrix}\otimes \bSigma_j^{-1}. 
\end{align*}
Since $w^j_{\btheta_j^*}(\y) \in [0,1]$ and all other terms are also bounded when $\btheta \in U_{\btheta^*}(\rho)$, we conclude that $ E( \| \nabla^2_{\bmu\bmu} h(\y;\bmu^*,\btheta_j^*) - E \nabla^2_{\bmu\bmu} h(\y;\bmu^*,\btheta_j^*)  \|_2^8 ) $ is bounded. Similarly, since all the terms in
\begin{align*}
	\nabla^2_{\bmu\bmu} h(\y;\bmu^*,\btheta_j^*)t(\y,\bEta_j^*) &=   
\begin{pmatrix}
- \frac{\lambda_j^*}{\lambda_1^*} w^1_{\btheta_1^*}(\y) & 0 \\
0 & \frac{1-\lambda_j^*}{1-\lambda_1^*}(w^1_{\btheta_1^*}(\y)-1)
\end{pmatrix}\otimes \bSigma_j^{-1} 
\end{align*}
and 
\begin{align*}
	E\nabla^2_{\bmu\bmu} h(\y;\bmu^*,\btheta_j^*)t(\y,\bEta^*_j) &=   
\begin{pmatrix}
-\lambda_j^* & 0 \\
0 & -(1-\lambda_j^*)
\end{pmatrix}\otimes \bSigma_j^{-1} 
\end{align*}
are bounded, we have $ E( \| \nabla^2_{\bmu\bmu} h(\y;\bmu^*,\btheta_j^*)t(\y,\bEta^*_j) - E \nabla^2_{\bmu\bmu} h(\y;\bmu^*,\btheta_j^*)t(\y,\bEta^*_j)  \|_2^8 ) $ is bounded.

Then, as for the gradient, we have
\begin{align*}
	\|\nabla_{\bmu} h(\y;\bmu^*,\btheta_j^*) \|_2&=  \left\|
\begin{pmatrix}
w^j_{\btheta_j*}(\y)\bSigma_j^{-1}(\y-\bmu_1^*)  \\
(1-w^j_{\btheta_j^*}(\y))\bSigma_j^{-1}(\y-\bmu_0^*)
\end{pmatrix} \right\|_2 \\ 
& \leq \| \bSigma_j^{-1}(\y-\bmu_1^*) \|_2 + \| \bSigma_j^{-1}(\y-\bmu_0^*) \|_2 \\
& \leq c_1 + c_2\|\z\|_2
\end{align*}
with $\z \sim N_d(\0,\I_d)$. Therefore, it can be verified that $E(\|\nabla_{\bmu} h(\y;\bmu^*,\btheta_j^*) \|_2^8)$ is bounded. 

Finally, since 
\begin{align*}
	\nabla^2_{\bmu\btheta_j} h(\y;\bmu,\btheta_j) =  
\begin{pmatrix}
\bSigma_j^{-1}(\y-\bmu_1) \otimes (\frac{\partial w^j_{\btheta_j}(\y)}{\partial \btheta_j})^T  \\
\bSigma_j^{-1}(\y-\bmu_0) \otimes (\frac{(1-\partial w^j_{\btheta_j}(\y))}{\partial \btheta_j})^T
\end{pmatrix},
\end{align*} 
we have 
\begin{align*}
	\|\nabla^2_{\bmu\btheta_j} h(\y;\bmu,\btheta_j)\|_2 & \leq \|\bSigma_j^{-1}(\y-\bmu_1)\|_2 \|(\frac{\partial w^j_{\btheta_j}(\y)}{\partial \btheta_j})\|_2 +  \|\bSigma_j^{-1}(\y-\bmu_0) \|_2 \|(\frac{(1-\partial w^j_{\btheta_j}(\y))}{\partial \btheta_j})^T \|_2 \\
	& \leq c_1\|\z\|_2^2 + c_2\|\z\|_2 + c_3.
\end{align*}
Thus, the moment condition can be verified as before.

\end{proof}


\section{Proof of Lemma S.3}
\begin{proof}
 Lemma S.3 is about the contraction on the population iteration, and we need to use the population updating formulas \eqref{eqpop1}-\eqref{eqpop3}. Here we divide the whole proof into several parts.

%
%

\subsection*{Goal and Self-consistency}
	Our goal is to find a $\kappa\in(0,1)$ to make
	\begin{align*}
		&|\lambda_j(\btheta)-\lambda_j^*|\leq \kappa d_2(\btheta,\btheta^*),\ j\in[K] \\
		&\|\bmu_0(\btheta)-\bmu_0^*\|_2 \leq \kappa d_2(\btheta,\btheta^*), \\
		&\|\bmu_1(\btheta)-\bmu_1^*\|_2 \leq \kappa d_2(\btheta,\btheta^*).
\end{align*}
Firstly, let's verify the self-consistency property $M(\btheta^*)=\btheta^*$. For each $j\in[K]$, we have  
\begin{align*}
 \lambda_{j}(\btheta^*)&=E\gamma_{\btheta^*}(Y_j)=P_{\btheta^*}(Z_j=1|Y_j)=\lambda_j^*,\ j\in [K] \\
 \bmu_0(\btheta^*)&= [\frac{1}{K}\sum_{j=1}^K (1-E\gamma_{\btheta^*}(Y_j)) \bOmega_j ]^{-1} \frac{1}{K}\sum_{j=1}^K E[(1-\gamma_{\btheta^*}(Y_j)) \bOmega_j Y_j] \\
 &= [\frac{1}{K}\sum_{j=1}^K (1-E\gamma_{\btheta^*}(Y_j)) \bOmega_j ]^{-1} \frac{1}{K}\sum_{j=1}^K E[P_{\btheta^*}(Z_j=0|Y_j) \bOmega_j Y_j] \\
 &= [\frac{1}{K}\sum_{j=1}^K (1-E\gamma_{\btheta^*}(Y_j)) \bOmega_j ]^{-1} \frac{1}{K}\sum_{j=1}^K E[E[1(Z_j=0)|Y_j] \bOmega_j Y_j] \\
 &= [\frac{1}{K}\sum_{j=1}^K (1-E\gamma_{\btheta^*}(Y_j)) \bOmega_j ]^{-1} \frac{1}{K}\sum_{j=1}^K E[E[\bOmega_j Y_j1(Z_j=0)|Y_j]] \\
 &= [\frac{1}{K}\sum_{j=1}^K (1-E\gamma_{\btheta^*}(Y_j)) \bOmega_j ]^{-1} \frac{1}{K}\sum_{j=1}^K E[\bOmega_j Y_j1(Z_j=0)] \\
 &= [\frac{1}{K}\sum_{j=1}^K (1-E\gamma_{\btheta^*}(Y_j)) \bOmega_j ]^{-1} \frac{1}{K}\sum_{j=1}^K \bOmega_j E[ 1-\gamma_{\btheta^*}(Y_j)]\bmu_0^* = \bmu_0^*.
\end{align*}
Similarly, we can verify $\bmu_1(\btheta^*)=\bmu_1^*$. Thus, we only need to prove 
\begin{align}
		&|\lambda_j(\btheta)-\lambda_j(\btheta^*)|\leq \kappa d_2(\btheta,\btheta^*),\ j\in[K] \label{eqpop4} \\
		&\|\bmu_0(\btheta)-\bmu_0(\btheta^*)\|_2 \leq \kappa d_2(\btheta,\btheta^*), \label{eqpop5}\\
		&\|\bmu_1(\btheta)-\bmu_1(\btheta^*)\|_2 \leq \kappa d_2(\btheta,\btheta^*).\label{eqpop6}
	\end{align}

\subsection*{Some tools from Cai et al. (2019)}\label{sectool}
We need some tools to help derive the above bounds. For the reader's convenience, we take the following results from the supplementary material C.1.2 of \cite{cai2019chime}. For functions $ f_1(t)=\frac{1}{[we^t+(1-w)e^{-t}]^2}$, $f_2(t)=\frac{t}{[we^t+(1-w)e^{-t}]^2}$ and $f_3(t)=\frac{t^2-b^2}{[we^t+(1-w)e^{-t}]^2}$, we have 
\begin{align*}
	&f_1(t)\leq\frac{1}{4w(1-w)}\leq\frac{1}{4\min\{w,1-w\}},\  \text{for all}\ t\in\mathbb{R}, \\
	&\sup_{t\in[a,\infty]}f_1(t)\leq \frac{1}{\min\{w,1-w\}^2}\exp(-2a),\  \text{for all}\ a\geq 0, \\
	&|f_2(t)|\leq\frac{|t|e^{-|t|}}{\min\{w,1-w\}^2}\leq\frac{1}{4\min\{w^2,(1-w)^2\}},\  \text{for all}\ t\in\mathbb{R}, \\
	&\sup_{t\in[a,\infty]}f_2(t)\leq \frac{1}{\min\{w,1-w\}^2}\exp(-3a/2),\  \text{for all}\ a\geq 0, \\
	&|f_3(t)|\leq\frac{|t^2-b^2|e^{-|t|}}{\min\{w,1-w\}^2}\leq\frac{1+b^2}{\min\{w^2,(1-w)^2\}},\  \text{for all}\ t\in\mathbb{R}, \\
	&\sup_{t\in[a,\infty]}f_3(t)\leq \frac{1+b^2}{\min\{w,1-w\}^2}\exp(-a),\  \text{for all}\ a\geq 0.
\end{align*}

\subsection*{Taylor expansion of $\lambda_j(\btheta)$ and $\bmu_1(\btheta)$}
We need to verify
\begin{align}
		|\lambda_j(\btheta)-\lambda_j(\btheta^*)| &= |E(\gamma_{\btheta}(Y_j))-E(\gamma_{\btheta^*}(Y_j))| \leq \kappa_1 d_2(\btheta,\btheta^*),\label{eqpop7} \\
		\|\bmu_1(\btheta)-\bmu_1(\btheta^*)\|_2 &=\|[\frac{1}{K}\sum_{j=1}^K E\gamma_{\btheta}(Y_j) \bOmega_j ]^{-1} \frac{1}{K}\sum_{j=1}^K E[\gamma_{\btheta}(Y_j) \bOmega_j Y_j]\nonumber \\
		&- [\frac{1}{K}\sum_{j=1}^K E\gamma_{\btheta^*}(Y_j) \bOmega_j ]^{-1} \frac{1}{K}\sum_{j=1}^K E[\gamma_{\btheta^*}(Y_j) \bOmega_j Y_j]\|_2\nonumber \\ 
		& \leq \kappa_2 d_2(\btheta,\btheta^*).\label{eqpop8}
	\end{align}
	where the two constants $\kappa_1$ and $\kappa_2$ are to be determined. 
Let $\Delta_{\btheta}=\btheta-\btheta^*$, and $\btheta_u=\btheta^*+u\Delta_{\btheta}$ with $u\in(0,1)$. Then we have
\begin{align}
	E(\gamma_{\btheta}(Y_j))-E(\gamma_{\btheta^*}(Y_j)) &= E[\int_0^1 \langle \frac{d \gamma_{\btheta}(Y_j)}{d \btheta}|_{\btheta=\btheta_u},\Delta_{\btheta} \rangle du ] \nonumber \\
	& = E[ \int_0^1 \langle \frac{\partial \gamma_{\btheta}(Y_j)}{\partial \lambda_j}|_{\btheta=\btheta_u},\Delta_{\lambda_j} \rangle du  ] + \sum_{k=0}^1 E[ \int_0^1 \langle \frac{\partial \gamma_{\btheta}(Y_j)}{\partial \bmu_k}|_{\btheta=\btheta_u},\Delta_{\bmu_k} \rangle du  ] \label{eqlambj}
\end{align}
and 
\begin{align}
	\bmu_1(\btheta)-\bmu_1(\btheta^*) &= \int_0^1 (\frac{d \bmu_1(\btheta)}{d \btheta}|_{\btheta=\btheta_u})\Delta_{\btheta} du  \nonumber \\
	& = \sum_{j=1}^K\int_0^1 (\frac{\partial \bmu_1(\btheta)}{\partial \lambda_j}|_{\btheta=\btheta_u})\Delta_{\lambda_j} du  + \sum_{k=0}^1 \int_0^1  (\frac{\partial \bmu_1(\btheta)}{\partial \bmu_k}|_{\btheta=\btheta_u})\Delta_{\bmu_k}  du \label{eqlammu}
\end{align}
where $\Delta_{\lambda_j}=\lambda_j-\lambda_j^*$ and $\Delta_{\bmu_k}=\bmu_k-\bmu_k^*$.
Let's first deal with \eqref{eqlambj} whose key part is $\partial \gamma_{\btheta}(Y_j)$. Recall the form of $\gamma_{\btheta}(Y_j)$ is 
\begin{align*}
	\gamma_{\btheta}(Y_j) & =\frac{\lambda_j}{\lambda_j + (1-\lambda_j)\exp\{(\bmu_0-\bmu_1)^T\bSigma_j^{-1}(Y_j-\frac{\bmu_0+\bmu_1}{2})\}} \\
	&= \frac{\lambda_j}{\lambda_j + (1-\lambda_j)\exp\{\bbeta_j^T(Y_j-\frac{\bmu_0+\bmu_1}{2})\}} 
\end{align*}
and the partial derivatives of $\gamma_{\btheta}(Y_j)$ with respect to each parameter in $\btheta_j$ are 
\begin{align}
	\frac{\partial \gamma_{\btheta}(Y_j)}{\partial \lambda_j} &= \frac{\exp\{\bbeta_j^T(Y_j-\frac{\bmu_0+\bmu_1}{2})\}}{[\lambda_j + (1-\lambda_j)\exp\{\bbeta_j^T(Y_j-\frac{\bmu_0+\bmu_1}{2})\}]^2},\ j\in [K] \label{eqdlam} \\
	\frac{\partial \gamma_{\btheta}(Y_j)}{\partial \bmu_k}&=\frac{\lambda_j(1-\lambda_j)\exp\{\bbeta_j^T(Y_j-\frac{\bmu_0+\bmu_1}{2})\}}{[\lambda_j + (1-\lambda_j)\exp\{\bbeta_j^T(Y_j-\frac{\bmu_0+\bmu_1}{2})\}]^2}\bOmega_j(Y_j-\bmu_k)(-1)^{1(k=1)} \nonumber\\
	& = \lambda_j(1-\lambda_j)\frac{\partial \gamma_{\btheta}(Y_j)}{\partial \lambda_j}\bOmega_j(Y_j-\bmu_k)(-1)^{1(k=1)},\ k=0,1. \label{eqdmu}
\end{align}
As the term $\bbeta_j^T(Y_j-\frac{\bmu_0+\bmu_1}{2})$ appears in both \eqref{eqdlam} and \eqref{eqdmu}, next we write it as a one-dimensional normal random variable and then to obtain probabilistic bounds of the expectations of \eqref{eqdlam} and \eqref{eqdmu}. Let $\tilde Y_j=\bOmega_j^{1/2}\{Y_j - \frac{\bmu_0^* + \bmu_1^*}{2} \} $, then 
$$ \tilde Y_j \sim (1-\lambda_j^*)N_d(\bOmega_j^{1/2}\frac{\bmu_0^* - \bmu_1^*}{2}, \I_d ) + \lambda_j^*N_d(\bOmega_j^{1/2}\frac{\bmu_1^* - \bmu_0^*}{2}, \I_d ):= \Psi_j + Z_N  $$
where $ \Psi_j \sim (1-\lambda_j^*)\bOmega_j^{1/2}\frac{\bmu_0^* - \bmu_1^*}{2} +  \lambda_j^*\bOmega_j^{1/2}\frac{\bmu_1^* - \bmu_0^*}{2}$ and $Z_n \sim N_d(\0,\I_d)$. Also, we have $Y_j=\bSigma_j^{1/2}\tilde Y_j + (\bmu_0^* + \bmu_1^*)/2 $. For simplicity, let's adopt the following notations:
\begin{align*}
	& \Delta_{\bmu}=(\bmu_0 + \bmu_1 - \bmu_0^* - \bmu_1^*)/2, \\ 
	&\delta_0(\bbeta_j)=\bbeta_j^T(\bmu_0^* - \frac{\bmu_0 + \bmu_1}{2}), \ \delta_1(\bbeta_j)=\bbeta_j^T(\bmu_1^* - \frac{\bmu_0 + \bmu_1}{2}),\ \sigma(\bbeta_j)=\sqrt{\bbeta_j^T\bSigma_j\bbeta_j}, \\
	& \delta_{\bbeta_j}=\bbeta_j^T\bSigma_j^{1/2}\Psi_j - \bbeta_j^T\Delta_{\bmu}\ \text{with}\ P(\delta_{\bbeta_j}=\delta_0(\bbeta_j))=1-\lambda_j^*=1-P(\delta_{\bbeta_j}=\delta_1(\bbeta_j)).
\end{align*} 
Then, we have
\begin{align*}
	\bbeta_j^T(Y_j - \frac{\bmu_0 + \bmu_1}{2}) \stackrel{d}{=} \delta_{\bbeta_j} + \sigma(\bbeta_j)Z_{N1} 
\end{align*}
where $Z_{N1}\sim N(0,1)$. 

\subsection*{Contraction for the mixing proportion}
Follow the same reasoning as the proof of Lemma 3.1 of \cite{cai2019chime} (see the derivation of (C.12) in the supplemental material of CHIME), using the results in \ref{sectool} we can obtain 
\begin{align}
	E[\frac{\partial \gamma_{\btheta}(Y_j)}{\partial \lambda_j}] \leq c_3 \exp(-c_4 \Delta_j^2) \label{eqwj}
\end{align}
where $c_3:=2/c_0^2$ and $c_4=\frac{1-c_1}{2} \wedge \frac{(1-c_1)^2}{8(1+c_1)}$ with $c_0,c_1$ defined in $\B(\btheta^*;c_0,c_1)$. Next we deal with $|\langle \frac{\partial}{\partial \bmu_1}E \gamma_{\btheta}(Y_j),\Delta\bmu_1 \rangle|$. We have
\begin{align}
	& |\langle \frac{\partial}{\partial \bmu_1}E \gamma_{\btheta}(Y_j),\Delta\bmu_1 \rangle| \nonumber \\ 
	& =|\langle \lambda_j(1-\lambda_j)E\frac{\partial \gamma_{\btheta}(Y_j)}{\partial \lambda_j}\bOmega_j(\bmu_1-Y_j), \Delta_{\bmu_1} \rangle  | \nonumber \\
	& \leq |\lambda_j(1-\lambda_j)|\cdot |\langle E\frac{\partial \gamma_{\btheta}(Y_j)}{\partial \lambda_j}\bOmega_j(\bSigma_j^{1/2}\tilde\Y_j + (\bmu_0^*+\bmu_1^*)/2- \bmu_1), \Delta_{\bmu_1} \rangle  | \nonumber \\
	& = |\lambda_j(1-\lambda_j)|\cdot |\langle E\frac{\partial \gamma_{\btheta}(Y_j)}{\partial \lambda_j}\bOmega_j(\bSigma_j^{1/2}(\Psi_j + Z_N) + (\bmu_0^*+\bmu_1^*)/2- \bmu_1), \Delta_{\bmu_1} \rangle  | \nonumber \\
	& \leq \frac{1}{4}\{ |\langle E\frac{\partial \gamma_{\btheta}(Y_j)}{\partial \lambda_j}\bOmega_j^{1/2}\Psi_j, \Delta_{\bmu_1} \rangle  | \label{eqLamConc1} \\ 
	& + |\langle E\frac{\partial \gamma_{\btheta}(Y_j)}{\partial \lambda_j}\bOmega_j^{1/2}Z_N, \Delta_{\bmu_1} \rangle  | \label{eqLamConc2} \\ 
	& + |\langle E\frac{\partial \gamma_{\btheta}(Y_j)}{\partial \lambda_j}\bOmega_j(\frac{\bmu_0^*+\bmu_1^*}{2}- \bmu_1), \Delta_{\bmu_1} \rangle  | \}. \label{eqLamConc3}
\end{align} 

For \eqref{eqLamConc1}, we have
\begin{align*}
	|\langle E\frac{\partial \gamma_{\btheta}(Y_j)}{\partial \lambda_j}\bOmega_j^{1/2}\Psi_j, \Delta_{\bmu_1} \rangle | &\leq  c_3\exp(-c_4\Delta_j^2)\|\bOmega_j^{1/2}\Psi_j\|_2\|\Delta_{\bmu_1}\|_2 \\
	& \leq \frac{c_3}{2}\sqrt{M}\Delta_j\exp(-c_4\Delta_j^2)\|\Delta_{\bmu_1}\|_2. 
\end{align*}
where the last inequality is due to $\Psi_j \sim (1-\lambda_j^*)\bOmega_j^{1/2}\frac{\bmu_0^* - \bmu_1^*}{2} +  \lambda_j^*\bOmega_j^{1/2}\frac{\bmu_1^* - \bmu_0^*}{2}$, thus $\|\Psi_j\|_2 \leq \| \bOmega_j^{1/2}\frac{\bmu_0^*-\bmu_1^*}{2} \|_2$.

For \eqref{eqLamConc3}, we have
\begin{align*}
	|\langle E\frac{\partial \gamma_{\btheta}(Y_j)}{\partial \lambda_j}\bOmega_j(\frac{\bmu_0^*+\bmu_1^*}{2}- \bmu_1), \Delta_{\bmu_1} \rangle  | 
	&\leq c_3\exp(-c_4\Delta_j^2)| \langle \bOmega_j(\frac{\bmu_0^*+\bmu_1^*}{2}- \bmu_1), \Delta_{\bmu_1} \rangle   | \\ 
	& \leq c_3\exp(-c_4\Delta_j^2)\|\bOmega_j(\frac{\bmu_0^*+\bmu_1^*}{2}- \bmu_1)\|_2\|\Delta_{\bmu_1}\|_2 \\
	& \leq c_3\exp(-c_4\Delta_j^2)\sqrt{M}\Delta_j(M^2/2+\sqrt{1+c_1})\|\Delta_{\bmu_1}\|_2/2
\end{align*}
by the fact that $\btheta \in \B(\btheta^*;c_0,c_1)$.

Finally we work on \eqref{eqLamConc2}. Let $\alpha_j=\bSigma_j^{1/2}\bbeta_j$, $\H$ be an orthogonal matrix with first row be $\alpha_j^T/\|\alpha_j\|_2$. Then it follows that
$$ \H\alpha_j = \|\alpha_j\|_2\e_1=\sigma(\bbeta_j)\e_1 $$
where $\e_1$ is the first canonical basis vector and $$E[\frac{\partial \gamma_{\btheta}(Y_j)}{\partial \lambda_j}\bSigma^{-1/2}_j Z_N]= \bSigma^{-1/2}_j \H^T E[\frac{\partial \gamma_{\btheta}(Y_j)}{\partial \lambda_j}\H Z_N].$$ Then, we have 
\begin{align*}
	E[\frac{\partial \gamma_{\btheta}(Y_j)}{\partial \lambda_j}\H Z_N]
	& = E[\frac{\exp\{\bbeta_j^T(Y_j-\frac{\bmu_0+\bmu_1}{2})\}}{[\lambda_j + (1-\lambda_j)\exp\{\bbeta_j^T(Y_j-\frac{\bmu_0+\bmu_1}{2})\}]^2}\H Z_N] \\
	& = E[\frac{\exp\{\delta_{\bbeta_j} + \bbeta_j^T\bSigma_j^{1/2}Z_{N} \}}{[\lambda_j + (1-\lambda_j)\exp\{\delta_{\bbeta_j} + \bbeta_j^T\bSigma_j^{1/2}Z_{N}\}]^2}\H Z_N] \\ 
	& = E[\frac{\exp\{\delta_{\bbeta_j} + \alpha_j^T\H^T\H Z_{N} \}}{[\lambda_j + (1-\lambda_j)\exp\{\delta_{\bbeta_j} + \alpha_j^T\H^T\H Z_{N}\}]^2}\H Z_N] \\
	& = E[\frac{\exp\{\delta_{\bbeta_j} + \|\alpha_j\|_2Y_1 \}}{[\lambda_j + (1-\lambda_j)\exp\{\delta_{\bbeta_j} + \|\alpha_j\|_2Y_1\}]^2} Y]\ \text{by}\ Y=\H Z_N\sim N_d(0,\I_d) \\
	& = E[\frac{\exp\{\delta_{\bbeta_j} + \sigma(\bbeta_j)Z_{N1} \}}{[\lambda_j + (1-\lambda_j)\exp\{\delta_{\bbeta_j} + \sigma(\bbeta_j)Z_{N1}\}]^2} Z_{N1}\e_1].
\end{align*}  
Thus, 
\begin{align*}
	E[\frac{\partial \gamma_{\btheta}(Y_j)}{\partial \lambda_j}\Sigma^{-1/2}_j Z_N] 
	& = E[\frac{\exp\{\delta_{\bbeta_j} + \sigma(\bbeta_j)Z_{N1} \}}{[\lambda_j + (1-\lambda_j)\exp\{\delta_{\bbeta_j} + \sigma(\bbeta_j)Z_{N1}\}]^2} Z_{N1}]\Sigma^{-1/2}_j\H^T\e_1 \\ 
	& = E[\frac{\exp\{\delta_{\bbeta_j} + \sigma(\bbeta_j)Z_{N1} \}}{[\lambda_j + (1-\lambda_j)\exp\{\delta_{\bbeta_j} + \sigma(\bbeta_j)Z_{N1}\}]^2} Z_{N1}] \Sigma^{-1/2}_j \frac{\alpha_j}{\|\alpha_j\|_2} \\
	& = E[\frac{\exp\{\delta_{\bbeta_j} + \sigma(\bbeta_j)Z_{N1} \}}{[\lambda_j + (1-\lambda_j)\exp\{\delta_{\bbeta_j} + \sigma(\bbeta_j)Z_{N1}\}]^2} Z_{N1}] \Sigma^{-1/2}_j \frac{\alpha_j}{\sigma(\bbeta_j)} \\
	& = E[\frac{\exp\{\delta_{\bbeta_j} + \sigma(\bbeta_j)Z_{N1} \}}{[\lambda_j + (1-\lambda_j)\exp\{\delta_{\bbeta_j} + \sigma(\bbeta_j) Z_{N1}\}]^2} \sigma(\bbeta_j) Z_{N1}]\bbeta_j/\sigma^2(\bbeta_j).
\end{align*}
By writing $E[\frac{\exp\{\delta_{\bbeta_j} + \sigma(\bbeta_j)Z_{N1} \}}{[\lambda_j + (1-\lambda_j)\exp\{\delta_{\bbeta_j} + \sigma(\bbeta_j) Z_{N1}\}]^2} \sigma(\bbeta_j) Z_{N1}]$ as 
$$E[\frac{\exp\{\delta_{\bbeta_j} + \sigma(\bbeta_j)Z_{N1} \}}{[\lambda_j + (1-\lambda_j)\exp\{\delta_{\bbeta_j} + \sigma(\bbeta_j) Z_{N1}\}]^2} (\sigma(\bbeta_j) Z_{N1} + \delta_{\bbeta_j} - \delta_{\bbeta_j} )] $$ and deal with the two terms separately with the tools we introduced in \ref{sectool}, we get $$ E[\frac{\exp\{\delta_{\bbeta_j} + \sigma(\bbeta_j)Z_{N1} \}}{[\lambda_j + (1-\lambda_j)\exp\{\delta_{\bbeta_j} + \sigma(\bbeta_j) Z_{N1}\}]^2} \sigma(\bbeta_j) Z_{N1}] \leq c_3 \exp(-c_4 \Delta_j^2), $$
and it follows that
\begin{align*}
	|\langle E\frac{\partial \gamma_{\btheta}(Y_j)}{\partial \lambda_j}\bOmega_j^{1/2} Z_N , \Delta_{\bmu_1} \rangle  | & \leq c_3 \exp(-c_4\Delta_j^2) |\langle \bbeta_j,\Delta_{\bmu_1} \rangle|/\sigma^2(\bbeta_j) \\
	& \leq \frac{\sqrt{M}}{\sqrt{1-c_1}\Delta_j} c_3 \exp(-c_4\Delta_j^2) \| \Delta_{\bmu_1} \|_2
\end{align*}
as $ \|\bbeta_j\|_2 \leq \|\bOmega_j^{1/2}\|_2\| \bOmega_j^{1/2}(\bmu_0 - \bmu_1) \|_2 \leq \sqrt{M}\sigma(\bbeta_j)$ and $\sigma(\bbeta_j)\geq \sqrt{1-c_1}\Delta_j$. 

Combine the above results, we have
\begin{align*}
	|\langle \frac{\partial}{\partial \bmu_1}E \gamma_{\btheta}(Y_j),\Delta\bmu_1 \rangle| 
	& \leq  \frac{c_3}{8}\sqrt{M}\Delta_j\exp(-c_4\Delta_j^2)\|\Delta_{\bmu_1}\|_2 \\ 
	& + c_3\exp(-c_4\Delta_j^2)\sqrt{M}\Delta_j(M^2/2+\sqrt{1+c_1})\|\Delta_{\bmu_1}\|_2/8 \\
	& + \frac{\sqrt{M}}{4\sqrt{1-c_1}\Delta_j} c_3 \exp(-c_4\Delta_j^2) \| \Delta_{\bmu_1} \|_2 \\
	& \leq C^j_{\bmu}\exp(-c_4\Delta_j^2)\| \Delta_{\bmu_1} \|_2 
\end{align*}
with $ C^j_{\bmu} = c_3(\sqrt{M}\Delta_j/8 + \sqrt{M}\Delta_j(M^2/2+\sqrt{1+c_1})/8 + \frac{\sqrt{M}}{4\sqrt{1-c_1}\Delta_j})$. Similarly, we can also verify $$ |\langle \frac{\partial}{\partial \bmu_0}E \gamma_{\btheta}(Y_j),\Delta\bmu_0 \rangle| \leq C^j_{\bmu}\exp(-c_4\Delta_j^2)\| \Delta_{\bmu_0} \|_2. $$
To summarize, we have 
$$ |E(\gamma_{\btheta}(Y_j))-E(\gamma_{\btheta^*}(Y_j))| \leq c_3 \exp(-c_4\Delta_j^2) |\Delta_{\lambda_j}| + C^j_{\bmu}\exp(-c_4\Delta_j^2)(\| \Delta_{\bmu_0} \|_2+\| \Delta_{\bmu_1} \|_2). $$
If we let $\kappa_1 = c_3 \exp(-c_4\Delta_{min}^2) \vee C_{\bmu} \exp(-c_4\Delta_{min}^2) $ with $C_{\bmu} = c_3 (\sqrt{M}\Delta_{max}/8 + \sqrt{M}\Delta_{max}(M^2/2+\sqrt{1+c_1})/8 + \frac{\sqrt{M}}{4\sqrt{1-c_1}\Delta_{min}})$, then we have
$$ |E(\gamma_{\btheta}(Y_j))-E(\gamma_{\btheta^*}(Y_j))| \leq \kappa_1 d_2(\btheta_j, \btheta_j^*) \leq \kappa_1 d_2(\btheta, \btheta^*). $$
Note that, $\kappa_1$ is the $\kappa''$ in Theorem \ref{overall_contraction} of the manuscript.

\subsection*{Contraction for the mean}
Recall \eqref{eqpop3} and \eqref{eqlammu}
\begin{align*}
& \bmu_1(\btheta)=[\frac{1}{K}\sum_{j=1}^K E\gamma_{\btheta}(Y_j) \bOmega_j ]^{-1} \frac{1}{K}\sum_{j=1}^K E[\gamma_{\btheta}(Y_j) \bOmega_j Y_j] \\	
&\bmu_1(\btheta)-\bmu_1(\btheta^*)  = \sum_{j=1}^K\int_0^1 (\frac{\partial \bmu_1(\btheta)}{\partial \lambda_j}|_{\btheta=\btheta_u})\Delta_{\lambda_j} du  + \sum_{k=0}^1 \int_0^1  (\frac{\partial \bmu_1(\btheta)}{\partial \bmu_k}|_{\btheta=\btheta_u})\Delta_{\bmu_k}  du.
\end{align*}
In order to bound $\|\bmu_1(\btheta)-\bmu_1(\btheta^*)\|_2$, we need to deal with each term separately on the right hand side. For simplicity, let's denote
$$\A=\frac{1}{K}\sum_{j=1}^K E\gamma_{\btheta}(Y_j) \bOmega_j,\ \B=\frac{1}{K}\sum_{j=1}^K E[\gamma_{\btheta}(Y_j) \bOmega_j (Y_j-\bmu_1)],$$ and it follows that 
\begin{align*}
	\partial \bmu_1(\btheta) = -\A^{-1}\partial \A \A^{-1} \B + \A^{-1} \partial \B.
\end{align*}
Thus, 
\begin{align*}
    \| (\frac{\partial \bmu_1(\btheta)}{\partial \lambda_j})\Delta_{\lambda_j} \|_2  
    & \leq \|\frac{\partial \bmu_1(\btheta)}{\partial \lambda_j}\|_2 |\Delta_{\lambda_j}|
	 \\ 
	 & \leq (\|\A^{-1}\|_2 \|\frac{\partial \A}{\partial \lambda_j}\|_2 \|\A^{-1} \B\|_2 + \|\A^{-1}\|_2 \|\frac{\partial \B}{\partial \lambda_j}\|_2)|\Delta_{\lambda_j}|.
\end{align*}
We have
\begin{align*}
	\|\frac{\partial \A}{\partial \lambda_j}\|_2 & \leq \frac{1}{K} \| \frac{\partial}{\partial \lambda_j} E\gamma_{\btheta}(Y_j) \bOmega_j\|_2 \\ 
	& \leq \frac{1}{K} |\frac{\partial}{\partial \lambda_j} E\gamma_{\btheta}(Y_j) | \| \bOmega_j\|_2 \\
	& \leq c_3\exp(-c_4\Delta_j^2)\frac{1}{K} \|\bOmega_j\|_2 \\
	& \leq \frac{M}{K} c_3\exp(-c_4\Delta_j^2).
\end{align*}
{As for $ \|\A^{-1}\|_2 $, due to the fact that $\bOmega_j$ are positive definite and $0<E\gamma_{\btheta}(Y_j)<1$ for all $j\in[K]$, $\A$ is also positive definite. Thus, $\lambda_{min}(\A) \neq 0$ and $\|\A^{-1}\|_2=\lambda_{min}(\A)^{-1}$ is upper bounded where $\lambda_{min}(\A)$ is the smallest eigenvalue of $\A$.} Next, we deal with
\begin{align*}
	\|\frac{\partial \B}{\partial \lambda_j}\|_2 \leq \frac{1}{K} \|E \frac{\partial \gamma_{\btheta}(Y_j)}{\partial \lambda_j}\bOmega_j(Y_j - \bmu_1) \|_2
\end{align*}
where $Y_j= \bSigma_j^{1/2} Z_N + \bSigma_j^{1/2} \Psi_j + (\bmu_0^* + \bmu_1^*)/2 $. For each $j$, we have
\begin{align*}
 & \|E \frac{\partial \gamma_{\btheta}(Y_j)}{\partial \lambda_j}\bOmega_j(Y_j - \bmu_1) \|_2 \\
 & \leq  \|E \frac{\partial \gamma_{\btheta}(Y_j)}{\partial \lambda_j}\bOmega_j^{1/2}Z_N \|_2  + |E \frac{\partial \gamma_{\btheta}(Y_j)}{\partial \lambda_j}|\cdot\|\bOmega_j(\bSigma_j^{1/2} \Psi_j + (\bmu_0^* + \bmu_1^*)/2 - \bmu_1) \|_2 \\
 & \leq \frac{\sqrt{M}}{\sqrt{1-c_1}\Delta_j}c_3\exp(-c_4\Delta_j^2)+c_3\exp(-c_4\Delta_j^2)\|\bOmega_j\|_2\|\bSigma_j^{1/2} \Psi_j + (\bmu_0^* + \bmu_1^*)/2 - \bmu_1\|_2. 
\end{align*}
Note that $ \bSigma_j^{1/2} \Psi_j + (\bmu_0^* + \bmu_1^*)/2 \sim (1-\lambda_j^*)\bmu_0^* + \lambda_j^*\bmu_1^*$, thus
\begin{align*}
	& \|\bSigma_j^{1/2} \Psi_j + (\bmu_0^* + \bmu_1^*)/2 - \bmu_1\|_2 \\ 
	& \leq \| \bmu_0^* - \bmu_1 \|_2 + \| \bmu_1^* - \bmu_1 \|_2 \\
	& \leq \| \bmu_0^* - \bmu_0 + \bmu_0 - \bmu_1 \|_2 + \| \bmu_1^* - \bmu_1 \|_2 \\
	& \leq \frac{M}{2}\|\bOmega_j^{-1/2}\|_2\| \bOmega_j^{1/2} ( \bmu_0^* - \bmu_1^* )\|_2 + \|\bOmega_j^{-1/2}\|_2\| \bOmega_j^{1/2}( \bmu_0 - \bmu_1 )\|_2 \\
	& \leq M^{3/2}\Delta_j/2 + \sqrt{M(1+c_1)}\Delta_j.
\end{align*}
Thus, 
\begin{align*}
	& \|E \frac{\partial \gamma_{\btheta}(Y_j)}{\partial \lambda_j}\bOmega_j(Y_j - \bmu_1) \|_2 \\ 
	& \leq \frac{\sqrt{M}}{\sqrt{1-c_1}\Delta_j}c_3\exp(-c_4\Delta_j^2)+c_3\exp(-c_4\Delta_j^2)M(M^{3/2}\Delta_j/2 + \sqrt{M(1+c_1)}\Delta_j) \\
	& \leq c_3(\frac{\sqrt{M}}{\sqrt{1-c_1}\Delta_{min}}+M^{5/2}\Delta_{max}/2 + M^{3/2}\sqrt{1+c_1}\Delta_{max})\exp(-c_4\Delta_{min}^2) \\
	& := c_6 \exp(-c_4\Delta_{min}^2).
\end{align*}
It follows that $$ \|\frac{\partial \B}{\partial \lambda_j}\|_2 \leq c_6 \frac{1}{K} \exp(-c_4\Delta_{min}^2). $$
Next, we work on providing an upper bound for $\| \A^{-1}\B \|_2$. It can be verified that
\begin{align*}
	\A^{-1}\B &= [\frac{1}{K}\sum_{j=1}^K E\gamma_{\btheta}(Y_j) \bOmega_j]^{-1} [\frac{1}{K}\sum_{j=1}^K E\gamma_{\btheta^*}(Y_j) \bOmega_j] [\frac{1}{K}\sum_{j=1}^K E\gamma_{\btheta^*}(Y_j) \bOmega_j]^{-1}\cdot  \\
	& \left\{\frac{1}{K}\sum_{j=1}^K E[\gamma_{\btheta}(Y_j) \bOmega_j (Y_j-\bmu_1)] - \frac{1}{K}\sum_{j=1}^K E[\gamma_{\btheta^*}(Y_j) \bOmega_j (Y_j-\bmu_1^*)]  \right\}.
\end{align*}
Let $g(\btheta)=\frac{1}{K}\sum_{j=1}^K E[\gamma_{\btheta}(Y_j) \bOmega_j (Y_j-\bmu_1)]$, thus 
\begin{align*}
	\frac{1}{K}\sum_{j=1}^K & E[\gamma_{\btheta}(Y_j) \bOmega_j (Y_j-\bmu_1)] - \frac{1}{K}\sum_{j=1}^K E[\gamma_{\btheta^*}(Y_j) \bOmega_j (Y_j-\bmu_1^*)] \\ 
	& = g(\btheta) - g(\btheta^*) \\
	& = \int_0^1 \frac{d g(\btheta)}{d \btheta}|_{\btheta=\btheta_u}(\btheta - \btheta^*) du \\
	& = \sum_{j=1}^K \int_0^1 \frac{\partial g(\btheta)}{\partial \lambda_j}|_{\btheta=\btheta_u}\Delta_{\lambda_j} du + \sum_{k=1}^2 \int_0^1 \frac{\partial g(\btheta)}{\partial \bmu_k}|_{\btheta=\btheta_u}\Delta_{\bmu_k} du
\end{align*}
with $\btheta_u=\btheta + u (\btheta^* - \btheta)$ with $u\in(0,1)$. In particular, 
\begin{align*}
	& \frac{\partial g(\btheta)}{\partial \lambda_j} = \frac{1}{K}E[ \frac{\partial \gamma_{\btheta}(Y_j)}{\partial \lambda_j}\bOmega_j (Y_j-\bmu_1)] \\
	& \frac{\partial g(\btheta)}{\partial \bmu_0}=\frac{1}{K}\sum_{j=1}^K E(\bOmega_j(Y_j -\bmu_1)[ \frac{\partial \gamma_{\btheta}(Y_j)}{\partial \bmu_0} ]^T) \\
	& \frac{\partial g(\btheta)}{\partial \bmu_1}=\frac{1}{K}\sum_{j=1}^K E(\bOmega_j(Y_j -\bmu_1)[ \frac{\partial \gamma_{\btheta}(Y_j)}{\partial \bmu_1} ]^T) + \frac{1}{K}\sum_{j=1}^K E(\gamma_{\btheta}(Y_j)\bOmega_j). 
\end{align*}
Thus, we have 
\begin{align*}
	\|\A^{-1}\B\|_2 & \lesssim \sum_{j=1}^K \| \frac{\partial g(\btheta)}{\partial \lambda_j}  \|_2 |\Delta_{\lambda_j}| + \sum_{k=0}^1 \|\frac{\partial g(\btheta)}{\partial \bmu_k}\|_2\|\Delta_{\bmu_k}\|_2 \\ 
	& \lesssim c_6\exp(-c_4\Delta_{min}^2) +  \frac{M^{3/2}}{4}\Delta_{max} + \frac{M^{3/2}}{4}\|\frac{1}{K}\sum_{j=1}^K E(\bOmega_j(Y_j -\bmu_1)[ \frac{\partial \gamma_{\btheta}(Y_j)}{\partial \bmu_1} ]^T)\|_2\Delta_{max}
\end{align*}
as $|\Delta_{\lambda_j}|\leq 1-c_0-c_w$ and $\|\Delta_{\bmu_k}\|_2 \leq M^{3/2}\Delta_{max}/4 $.
The term $ \|\frac{1}{K}\sum_{j=1}^K E(\bOmega_j(Y_j -\bmu_1)[ \frac{\partial \gamma_{\btheta}(Y_j)}{\partial \bmu_1} ]^T)\|_2 $ will be bounded later. Now, let's focus on $\frac{\partial \bmu_1(\btheta)}{\partial \bmu_1}$. Similarly, we need to take derivative on $\A$ and $\B$ separately with respect to $\bmu_1$, and $\partial \A/\partial \bmu_1$ will be a tensor of dimension $d\times d\times d$ and we can rearrange the elements into a big matrix of dimension $d \times d^2$, i.e., by writing $\bmu_1$ as $(\bmu_{21},\bmu_{22},\ldots,\bmu_{2d})^T$ we have
$$ \frac{\partial \A}{\partial \bmu_1} = [ \frac{\partial \A}{\partial \bmu_{21}},\ldots, \frac{\partial \A}{\partial \bmu_{2d}}  ]. $$
Thus, 
\begin{align*}
\|\frac{\partial \bmu_1(\btheta)}{\partial \bmu_1}\|_2 &= \left\| -\A^{-1}\frac{\partial \A}{\partial \bmu_1}(\I_d \otimes \A^{-1}\B) + \A^{-1} \frac{\partial \B}{\partial \bmu_1} \right\|_2 \\
& \leq \left\|\A^{-1}\right\|_2 \left\|\frac{\partial \A}{\partial \bmu_1}\right\|_2 \left\|\A^{-1}\B \right\|_2 + \left\|\A^{-1}\right\|_2 \left\| \frac{\partial \B}{\partial \bmu_1} \right\|_2.
\end{align*}
As rearranging the columns of a matrix will not change its operator norm, it can be verified that we can also write   
 $$ \frac{\partial \A}{\partial \bmu_1} = \frac{1}{K} \sum_{j=1}^K \bOmega_j \otimes \frac{\partial}{\partial \bmu_1}E\gamma_{\btheta}(Y_j), $$
and it leads to
\begin{align*}
	\| \frac{\partial \A}{\partial \bmu_1} \|_2 & \leq \frac{1}{K} \sum_{j=1}^K \|\bOmega_j\|_2 \| \frac{\partial}{\partial \bmu_1}E\gamma_{\btheta}(Y_j) \|_2 \\
	& \leq MC_{\bmu}\exp(-c_4\Delta_{min}^2).
\end{align*}
Also, we have $$ \frac{\partial \B}{\partial \bmu_1} = \frac{1}{K} \sum_{j=1}^K E[ \bOmega_j(Y_j - \bmu_1)(\frac{\partial \gamma_{\btheta}(Y_j)}{\partial \bmu_1})^T ],  $$
and $$ \| \frac{\partial \B}{\partial \bmu_1} \|_2 \leq \frac{1}{K} \sum_{j=1}^K \|E[ \bOmega_j(Y_j - \bmu_1)(\frac{\partial \gamma_{\btheta}(Y_j)}{\partial \bmu_1})^T ]\|_2:=\frac{1}{K}\sum_{j=1}^K \| \frac{\partial \B_j}{\partial \bmu_1} \|_2 $$
For each $j$,  
\begin{align*}
\| \frac{\partial \B_j}{\partial \bmu_1} \|_2=\|E[ \bOmega_j(Y_j - \bmu_1)(\frac{\partial \gamma_{\btheta}(Y_j)}{\partial \bmu_1})^T ]\|_2 & \leq \| \lambda_j(1-\lambda_j) E\frac{\partial \gamma_{\btheta}(Y_j)}{\partial \lambda_j}\bOmega_j(Y_j-\bmu_1)(Y_j-\bmu_1)^T\bOmega_j  \|_2. 
\end{align*}
As $Y_j = \bSigma_j^{1/2}Z_N + \bSigma_j^{1/2}\Psi_j + (\bmu_0^* + \bmu_1^*)/2 $, we have
\begin{align*}
	(Y_j-\bmu_1)(Y_j-\bmu_1)^T 
	& = \bSigma_j^{1/2}Z_N(\bSigma_j^{1/2}Z_N)^T - 2 \bSigma_j^{1/2}Z_N (\bSigma_j^{1/2}\Psi_j + (\bmu_0^* + \bmu_1^*)/2 - \bmu_1)^T  \\ 
	& + (\bSigma_j^{1/2}\Psi_j + (\bmu_0^* + \bmu_1^*)/2- \bmu_1)(\bSigma_j^{1/2}\Psi_j + (\bmu_0^* + \bmu_1^*)/2- \bmu_1)^T.
\end{align*}
It follows that
\begin{align}
	& \frac{1}{\lambda_j(1-\lambda_j)}\frac{\partial \B_j}{\partial \bmu_1} \nonumber \\ 
	& = E\frac{\partial \gamma_{\btheta}(Y_j)}{\partial \lambda_j}\bSigma_j^{-1/2}Z_N(\bSigma_j^{-1/2}Z_N)^T \label{eqDevMu1} \\
	& - 2E\frac{\partial \gamma_{\btheta}(Y_j)}{\partial \lambda_j}\bSigma_j^{-1/2}Z_N (\bSigma_j^{1/2}\Psi_j + (\bmu_0^* + \bmu_1^*)/2 - \bmu_1)^T\bSigma_j^{-1} \label{eqDevMu2} \\
	& +  E\frac{\partial \gamma_{\btheta}(Y_j)}{\partial \lambda_j}\bSigma_j^{-1}(\bSigma_j^{1/2}\Psi_j + (\bmu_0^* + \bmu_1^*)/2- \bmu_1)(\bSigma_j^{1/2}\Psi_j + (\bmu_0^* + \bmu_1^*)/2- \bmu_1)^T\bSigma_j^{-1}. \label{eqDevMu3}
\end{align}
For \eqref{eqDevMu2}, 
\begin{align*}
 & \| E\frac{\partial \gamma_{\btheta}(Y_j)}{\partial \lambda_j}\bSigma_j^{-1/2}Z_N (\bSigma_j^{1/2}\Psi_j + (\bmu_0^* + \bmu_1^*)/2 - \bmu_1)^T\bSigma_j^{-1} \|_2 \\ 
 & \leq \| E\frac{\partial \gamma_{\btheta}(Y_j)}{\partial \lambda_j}\bSigma_j^{-1/2}Z_N \|_2 \| \bSigma_j^{1/2}\Psi_j + (\bmu_0^* + \bmu_1^*)/2 - \bmu_1\|_2 \|\bSigma_j^{-1} \|_2 \\ 
 & \leq \frac{M^2(M\Delta_j/2 + \sqrt{1+c_1}\Delta_j)}{\sqrt{1-c_1}\Delta_j}c_3\exp(-c_4\Delta_j^2).
\end{align*}
For \eqref{eqDevMu3},
\begin{align*}
	& \|E\frac{\partial \gamma_{\btheta}(Y_j)}{\partial \lambda_j}\bSigma_j^{-1}(\bSigma_j^{1/2}\Psi_j + (\bmu_0^* + \bmu_1^*)/2- \bmu_1)(\bSigma_j^{1/2}\Psi_j + (\bmu_0^* + \bmu_1^*)/2- \bmu_1)^T\bSigma_j^{-1} \|_2 \\
	& \leq |E\frac{\partial \gamma_{\btheta}(Y_j)}{\partial \lambda_j}| \|\bSigma_j^{-1} \|_2^2 \| \bSigma_j^{1/2}\Psi_j + (\bmu_0^* + \bmu_1^*)/2- \bmu_1 \|_2^2 \\ 
	& \leq M^3(M\Delta_j/2 + \sqrt{1+c_1}\Delta_j)^2c_3\exp(-c_4\Delta_j^2).
\end{align*}
For \eqref{eqDevMu1}, using the same technique as before, let $\alpha_j=\bSigma_j^{1/2}\bbeta_j$ and $\H$ be an orthogonal matrix whose first row is $\alpha_j/\|\alpha_j\|_2$ with $\H\alpha=\|\alpha_j\|_2\e_1$. Then follow the proof in page 19--20 of the supplementary material of \cite{cai2019chime}, we have 
$$ \| E\frac{\partial \gamma_{\btheta}(Y_j)}{\partial \lambda_j}\bSigma_j^{-1/2}Z_N(\bSigma_j^{-1/2}Z_N)^T \|_2 \leq \frac{4M^3c_{7j}}{1-c_1}\exp(-c_4\Delta_j^2) + c_3 M^3 \exp(-c_4\Delta_j^2),$$
with $$ c_{7j} = \frac{2[1+2(1+c_1)\Delta_j^2 + 2(1+c_1)^2\Delta_j^4]}{c_0^2(1-c_1)\Delta_j^2}. $$
Thus, 
\begin{align*}
	\|\frac{\partial \B_j}{\partial \bmu_1}\|_2 & \leq \frac{1}{4}\{ \frac{4M^3c_{7j}}{1-c_1}\exp(-c_4\Delta_j^2) + c_3 M^3 \exp(-c_4\Delta_j^2) \\
	& +  \frac{2M^2(M\Delta_j/2 + \sqrt{1+c_1}\Delta_j)}{\sqrt{1-c_1}\Delta_j}c_3\exp(-c_4\Delta_j^2) \\
	& +  M^3(M\Delta_j/2 + \sqrt{1+c_1}\Delta_j)^2c_3\exp(-c_4\Delta_j^2) \} \\
	& \leq \frac{1}{4}\{ \frac{4M^3c_{7}}{1-c_1}\exp(-c_4\Delta_{min}^2) + c_3 M^3 \exp(-c_4\Delta_{min}^2) \\
	& +  \frac{2M^2(M\Delta_{max}/2 + \sqrt{1+c_1}\Delta_{max})}{\sqrt{1-c_1}\Delta_{min}}c_3\exp(-c_4\Delta_{min}^2) \\
	& +  M^3(M\Delta_{max}/2 + \sqrt{1+c_1}\Delta_{max})^2c_3\exp(-c_4\Delta_{min}^2) \} \\
	& := c_{\bmu}\exp(-c_4\Delta_{min}^2)
\end{align*}
with $$ c_{7} = \frac{2[1+2(1+c_1)\Delta_{max}^2 + 2(1+c_1)^2\Delta_{max}^4]}{c_0^2(1-c_1)\Delta_{min}^2} $$ and 
$$ c_{\bmu}= \frac{1}{4}\left\{ \frac{4M^3c_{7}}{1-c_1} + c_3 M^3 
	 +  \frac{2M^2(M\Delta_{max}/2 + \sqrt{1+c_1}\Delta_{max})}{\sqrt{1-c_1}\Delta_{min}}c_3
	 +  M^3(M\Delta_{max}/2 + \sqrt{1+c_1}\Delta_{max})^2c_3\right\}.$$ Thus, we have 
$$ \|\frac{\partial \B}{\partial \bmu_1}\|_2 \leq c_{\bmu}\exp(-c_4\Delta_{min}^2), $$
which also leads to 
\begin{align*}
	\|\A^{-1}\B\|_2 &  \lesssim c_6\exp(-c_4\Delta_{min}^2) + \frac{M^{3/2}}{4}\Delta_{max} + \frac{M^{3/2}}{4}c_{\bmu}\exp(-c_4\Delta_{min}^2)\Delta_{max} := c_{\A\B}.
\end{align*}
Thus, we have 
\begin{align*}
    \| \frac{\partial \bmu_1(\btheta)}{\partial \lambda_j}\Delta_{\lambda_j} \|_2  
	 & \leq (\|\A^{-1}\|_2 \|\frac{\partial \A}{\partial \lambda_j}\|_2 \|\A^{-1} \B\|_2 + \|\A^{-1}\|_2 \|\frac{\partial \B}{\partial \lambda_j}\|_2)|\Delta_{\lambda_j}| \\
	 & \lesssim \frac{1}{K} [M c_3\exp(-c_4\Delta_{min}^2)c_{\A\B} +  c_6 \exp(-c_4\Delta_{min}^2)] |\Delta_{\lambda_j}|.
\end{align*}
as 
\begin{align*}
	\|\frac{\partial \A}{\partial \lambda_j}\|_2 & \leq M \frac{1}{K} c_3\exp(-c_4\Delta_{min}^2).
\end{align*}
and 
$$ \|\frac{\partial \B}{\partial \lambda_j}\|_2 \leq c_6 \frac{1}{K} \exp(-c_4\Delta_{min}^2). $$
Also, as  $ \| \frac{\partial \A}{\partial \bmu_1} \|_2 \leq MC_{\bmu}\exp(-c_4\Delta_{min}^2)$ and 
$ \|\frac{\partial \B}{\partial \bmu_1}\|_2 \leq c_{\bmu}\exp(-c_4\Delta_{min}^2)$ we have
\begin{align*}
\|\frac{\partial \bmu_1(\btheta)}{\partial \bmu_1}\|_2 
&  \leq \left\|\A^{-1}\right\|_2 \left\|\frac{\partial \A}{\partial \bmu_1}\right\|_2 \left\|\A^{-1}\B \right\|_2 + \left\|\A^{-1}\right\|_2 \left\| \frac{\partial \B}{\partial \bmu_1} \right\|_2 \\ 
& \lesssim   MC_{\bmu} c_{\A\B}\exp(-c_4\Delta_{min}^2) +  c_{\bmu}\exp(-c_4\Delta_{min}^2).
\end{align*}
where $$C_{\bmu} = (\sqrt{M}\Delta_{max}/8 + \sqrt{M}\Delta_{max}(M^2/2+\sqrt{1+c_1})/8 + \frac{\sqrt{M}}{4\sqrt{1-c_1}\Delta_{min}})c_3 $$ and
$$ c_{\bmu}= \frac{1}{4}\left\{ \frac{4M^3c_{7}}{1-c_1} + c_3 M^3 
	 +  \frac{2M^2(M\Delta_{max}/2 + \sqrt{1+c_1}\Delta_{max})}{\sqrt{1-c_1}\Delta_{min}}c_3
	 +  M^3(M\Delta_{max}/2 + \sqrt{1+c_1}\Delta_{max})^2c_3\right\}.$$ 
Thus, $$ \| \frac{\partial \bmu_1(\btheta)}{\partial \bmu_1}\Delta_{\bmu_1} \|_2 \lesssim (MC_{\bmu} c_{\A\B}\exp(-c_4\Delta_{min}^2) +  c_{\bmu}\exp(-c_4\Delta_{min}^2))\| \Delta_{\bmu_1} \|_2. $$
Due to the symmetry, we also have 
$$ \| \frac{\partial \bmu_1(\btheta)}{\partial \bmu_0}\Delta_{\bmu_0} \|_2 \lesssim (MC_{\bmu} c_{\A\B}\exp(-c_4\Delta_{min}^2) +  c_{\bmu}\exp(-c_4\Delta_{min}^2))\| \Delta_{\bmu_0} \|_2. $$
It follows that 
\begin{align*}
	\| \bmu_1(\btheta) - \bmu_1(\btheta^*)  \|_2 & \leq \sum_{j=1}^K \| \frac{\partial \bmu_1(\btheta)}{\partial \lambda_j}\|_2|\Delta_{\lambda_j}|  + \sum_{k=0}^1 \| \frac{\partial \bmu_1(\btheta)}{\partial \bmu_k} \Delta_{\bmu_k}\|_2 \\ 
	& \leq \frac{1}{K}[M c_3\exp(-c_4\Delta_{min}^2)c_{\A\B} +  c_6 \exp(-c_4\Delta_{min}^2)] \sum_{j=1}^K |\Delta_{\lambda_j}| \\ 
	& +  (MC_{\bmu} c_{\A\B}\exp(-c_4\Delta_{min}^2) +  c_{\bmu}\exp(-c_4\Delta_{min}^2))(\| \Delta_{\bmu_0} \|_2 +\| \Delta_{\bmu_1} \|_2) \\ 
	& \leq \frac{1}{\sqrt{K}}[M c_3\exp(-c_4\Delta_{min}^2)c_{\A\B} +  c_6 \exp(-c_4\Delta_{min}^2)] \|\Delta_{\Lambda} \|_2 \\ 
	& +  (MC_{\bmu} c_{\A\B}\exp(-c_4\Delta_{min}^2) +  c_{\bmu}\exp(-c_4\Delta_{min}^2))(\| \Delta_{\bmu_0} \|_2 +\| \Delta_{\bmu_1} \|_2) \\
	& \leq \kappa_{2}d_2(\btheta,\btheta^*)
\end{align*}
with $$ \kappa_{2} :=  [ \{ \frac{1}{\sqrt{K}} ( M c_3c_{\A\B} +  c_6 )\} \vee \{ MC_{\bmu} c_{\A\B} +  c_{\bmu}\} ] \exp(-c_4\Delta_{min}^2). $$
We can also get 
\begin{align*}
	\| \bmu_1(\btheta) - \bmu_1(\btheta^*)  \|_2  
	& \leq \frac{1}{K}[M c_3\exp(-c_4\Delta_{min}^2)c_{\A\B} +  c_6 \exp(-c_4\Delta_{min}^2)] \sum_{j=1}^K |\Delta_{\lambda_j}| \\ 
	& +  (MC_{\bmu} c_{\A\B}\exp(-c_4\Delta_{min}^2) +  c_{\bmu}\exp(-c_4\Delta_{min}^2))(\| \Delta_{\bmu_0} \|_2 +\| \Delta_{\bmu_1} \|_2) \\ 
	& \leq \kappa_{3} \frac{1}{K} \sum_{j=1}^K d_2(\btheta_j,\btheta_j^*)
\end{align*}
with $$ \kappa_{3} :=  [ \{M c_3c_{\A\B} +  c_6\} \vee \{ MC_{\bmu} c_{\A\B} +  c_{\bmu}\} ] \exp(-c_4\Delta_{min}^2). $$
Note that, $\kappa_3$ is the $\kappa'$ in Theorem \ref{overall_contraction} of the manuscript.

\subsection*{Combine the results}

Recall that with $\kappa_1 = c_3 \exp(-c_4\Delta_{min}^2) \vee C_{\bmu}\exp(-c_4\Delta_{min}^2) $ where $C_{\bmu} = (\sqrt{M}\Delta_{max}/8 + \sqrt{M}\Delta_{max}(M^2/2+\sqrt{1+c_1})/8 + \frac{\sqrt{M}}{4\sqrt{1-c_1}\Delta_{min}})c_3$, we have
$$ |E(\gamma_{\btheta}(Y_j))-E(\gamma_{\btheta^*}(Y_j))| \leq \kappa_1 d_2(\btheta, \btheta^*). $$
Thus, 
\begin{align*}
	d_2( M(\btheta), \btheta^* ) & = \sqrt{\sum_{j=1}^K | E(\gamma_{\btheta}(Y_j))-E(\gamma_{\btheta^*}(Y_j)) |^2} + \sum_{k=0}^1 \| \bmu_k(\btheta) - \bmu_k(\btheta^*) \|_2  \\ 
	& \leq (\sqrt{K}\kappa_1+2\kappa_2)d_2(\btheta,\btheta^*) \\ 
	& := \kappa d_2(\btheta,\btheta^*).
\end{align*}
Based on the assumption that $\Delta_{max} = c_2\Delta_{min}$, we can write $\kappa$ as $\kappa=\kappa_0\exp(-c_4\Delta_{min}^2)$ where $\kappa_0 = poly(\Delta_{min};M,c_0,c_1,c_2,K)+poly(1/\Delta_{min};M,c_0,c_1,c_2,K)$. Therefore, we can find a quantity $C(c_0,c_1,c_2,M,K)$ decided by $c_0, c_1,c_2, M$ and $K$ such that when the global SNR $\Delta_{min}$ is large enough to satisfy $\Delta_{min}>C(c_0,c_1,c_2,M,K)$,  there exists a $\kappa \in (0,1)$ such that 
$$ d_2( M(\btheta), \btheta^* ) \leq  \kappa d_2(\btheta,\btheta^*).$$ 
\end{proof}

\section{Proof of Lemma S.4}

\begin{proof}
We divide the proof into the derivation of the concentration inequalities for the estimates of $\lambda_{j}$ and $\bmu_k$ separately.  

\subsection{Concentration of the mixing proportion}
For each $j\in[K]$, we have
\[
\lambda^n_j(\btheta) = \frac{1}{n}\sum_{i=1}^n \frac{\lambda_j}{\lambda_j + (1-\lambda_j)\exp\{(\bmu_0-\bmu_1)^T\bSigma_j^{-1}(Y_{ij}-\frac{\bmu_0+\bmu_1}{2})\}}.
\]
Let's define $$ Z_{\lambda_j} = \sup_{\btheta\in\B(\btheta^*;c_0,c_1)}|\lambda^n_j(\btheta)-\lambda_j(\btheta)|, $$
and let $\{\epsilon_1,\ldots,\epsilon_n\}$ be a sequence of i.i.d. Rademacher random variables. Then, for any $\lambda>0$, by using a standard symmetrization result for empirical processes, we have
$$ E(\exp(\lambda Z_{\lambda_j}))\leq E[\exp( 2\lambda \sup_{\btheta\in\B(\btheta^*;c_0,c_1)}|  \frac{1}{n}\sum_{i=1}^n \epsilon_i \frac{\lambda_j}{\lambda_j + (1-\lambda_j)\exp\{\bbeta_j^T(Y_{ij}-\frac{\bmu_0+\bmu_1}{2})\}}|)]. $$
 We can check that $\psi(x)=\frac{\lambda_j}{\lambda_j+(1-\lambda_j)\exp(x)}-\lambda_j$ is Lipschitz with constant $\frac{1-\lambda_j}{\lambda_j}\leq \frac{1-c_0}{c_0}$ and $\psi(0)=0$. Then by applying Lemma C.1 of \cite{cai2019chime}, i.e., the Ledoux-Talagrand  contraction for Rademacher processes with $g(\cdot)=1$, we have
 \begin{align}
 	&E(\exp(\lambda Z_{\lambda_j})) \nonumber
 	\\
 	& \leq E[\exp( 2\lambda \sup_{\btheta\in\B(\btheta^*;c_0,c_1)}|  \frac{1}{n}\sum_{i=1}^n \epsilon_i ( \frac{\lambda_j}{\lambda_j + (1-\lambda_j)\exp\{\bbeta_j^T(Y_{ij}-\frac{\bmu_0+\bmu_1}{2})\}} - \lambda_j ) |)] \nonumber \\ 
 	& \cdot E(\exp(\frac{2\lambda}{n}\sup_{\btheta\in\B(\btheta^*;c_0,c_1)}|  \frac{1}{n}\sum_{i=1}^n \epsilon_i \lambda_j|)) \nonumber \\ 
 	& \leq E[\exp( 2\lambda \sup_{\btheta\in\B(\btheta^*;c_0,c_1)}|  \frac{1}{n}\sum_{i=1}^n \epsilon_i \frac{1-c_0}{c_0} \bbeta_j^T(Y_{ij}-\frac{\bmu_0+\bmu_1}{2})|)]\cdot \exp(\frac{4\lambda^2(1-c_0)^2}{n})  \nonumber\\ 
 	& \leq E[\exp( 2\lambda \sup_{\btheta\in\B(\btheta^*;c_0,c_1)}|  \frac{1}{n}\sum_{i=1}^n \epsilon_i \frac{1-c_0}{c_0} \bbeta_j^T(Y_{ij} - \bmu_j^* + \bmu_j^* -\frac{\bmu_0+\bmu_1}{2})|)]\cdot \exp(\frac{4\lambda^2(1-c_0)^2}{n}) \nonumber\\ 
 	& \leq E[\exp( 2\lambda \sup_{\btheta\in\B(\btheta^*;c_0,c_1)}|  \frac{1}{n}\sum_{i=1}^n \epsilon_i \frac{1-c_0}{c_0} \bbeta_j^T(Y_{ij} - \bmu_j^*)|)] \label{eqConc1} \\ & \cdot E[\exp( 2\lambda \sup_{\btheta\in\B(\btheta^*;c_0,c_1)}|  \frac{1}{n}\sum_{i=1}^n \epsilon_i \frac{1-c_0}{c_0} \bbeta_j^T  (\bmu_j^* -\frac{\bmu_0+\bmu_1}{2})|)] \label{eqConc2} \\ 
 	    & \cdot \exp(\frac{4\lambda^2(1-c_0)^2}{n}) \nonumber
 \end{align}
where $\bmu_j^*=(1-\lambda_j^*)\bmu_0^* + \lambda_j^*\bmu_1^*$ and the second inequality is due to the property of sub-Gaussian norm of bounded random variables. We first handle \eqref{eqConc2}, as $\btheta \in \B(\btheta^*;c_0,c_1)$ we have
\begin{align*}
	\sup_{\btheta\in\B(\btheta^*;c_0,c_1)}|   \bbeta_j^T  (\bmu_j^* -\frac{\bmu_0+\bmu_1}{2})| = \sup_{\btheta\in\B(\btheta^*;c_0,c_1)}| (1-\lambda_j^*)\delta_0(\bbeta_j) + \lambda_j^* \delta_1(\bbeta_j) | \leq (1+c_1)\Delta_j^2.
\end{align*}
Thus, we have 
$$ E[\exp( 2\lambda \sup_{\btheta\in\B(\btheta^*;c_0,c_1)}|  \frac{1}{n}\sum_{i=1}^n \epsilon_i \frac{1-c_0}{c_0} \bbeta_j^T  (\bmu_j^* -\frac{\bmu_0+\bmu_1}{2})|)] \leq \exp(\frac{4\lambda^2}{n}(\frac{1-c_0}{c_0})^2(1+c_1)^2\Delta_j^4). $$
As for \eqref{eqConc1}, let $Y^{N}_{ij}= Y_{ij}-\bmu_j^*$ be a centered random variable, then 
\begin{align*}
	E[\exp( & 2\lambda \sup_{\btheta\in\B(\btheta^*;c_0,c_1)}|  \frac{1}{n}\sum_{i=1}^n \epsilon_i \frac{1-c_0}{c_0} \bbeta_j^TY_{ij}^N|)] \\ 
	& \leq E(\exp(\frac{2\lambda}{n}
\frac{1-c_0}{c_0} \sup_{\btheta\in\B(\btheta^*;c_0,c_1)} \| \bbeta_j\|_1 \sup_{l \in [d] } | \sum_{i=1}^n \epsilon_i Y_{ij,l}^N | )) \\ 
 & \leq E(\exp(\frac{2\lambda}{n}
\frac{1-c_0}{c_0} \sqrt{dM(1+c_1)}\Delta_j\sup_{l \in [d] } | \sum_{i=1}^n \epsilon_i Y_{ij,l}^N |))  \\ 
& \lesssim \sum_{l=1}^d \exp( \frac{4\lambda^2}{n}
(\frac{1-c_0}{c_0})^2 dM(1+c_1)\Delta^2_j) \\
& \leq \exp( \frac{4\lambda^2}{n}
(\frac{1-c_0}{c_0})^2 dM(1+c_1)\Delta^2_j + \log(d))
\end{align*}
where $Y_{ij,l}^N$ is the $l$-th element of $Y_{ij}^N$, the second inequality is due to 
\begin{align*}
\|\bbeta_j\|_1\leq \sqrt{d}\|\bbeta_j \|_2=\sqrt{d} \| \bSigma_j^{-1/2}\bSigma_j^{1/2}\bbeta_j \|_2\leq \sqrt{dM}\sigma(\bbeta_j)\leq \sqrt{dM(1+c_1)}\Delta_j	
\end{align*}
and the third inequality is by using the property of sub-Gaussian norm of Gaussian random variables and the assumption that all variances are bounded. By combining the above two results, we have 
\begin{align*}
	E(\exp(\lambda Z_{\lambda_j} )) & \leq \exp( \frac{4\lambda^2}{n}
(\frac{1-c_0}{c_0})^2 dM(1+c_1)\Delta^2_j + \log(d)) \cdot  \exp(\frac{4\lambda^2}{n}(\frac{1-c_0}{c_0})^2(1+c_1)^2\Delta_j^4) \\ 
& \cdot \exp(\frac{4\lambda^2(1-c_0)^2}{n}) \\  
& \leq \exp(\frac{\lambda^2}{n}(c_1d + c_2)+ \log(d))
\end{align*}
with $a_1$ and $a_2$ are two constants that do not rely on $(n,d)$. Then, using the Chernoff's approach, let $$ \lambda=\sqrt{\frac{n(\log(n) + \log(d))}{a_1 d + a_2}},\ t=2\sqrt{\log(nd)}\sqrt{\frac{a_1 d + a_2}{n}},  $$
then we have 
\begin{align*}	P(\sup_{\btheta\in\B(\btheta^*;c_0,c_1)} | \lambda_j^n(\btheta) - \lambda_j(\btheta) | > t )  = P(Z_{\lambda_j} > t) \leq e^{-\lambda t} E(e^{\lambda Z_{\lambda_j}}) \leq \frac{1}{n}.  
\end{align*}
It implies that, with probability at least $1-n^{-1}$, 
$$ \sup_{\btheta\in\B(\btheta^*;c_0,c_1)} | \lambda_j^n(\btheta) - \lambda_j(\btheta) | \leq 2\sqrt{\log(nd)}\sqrt{\frac{a_1 d + a_2}{n}} $$
and with probability at least $1-Kn^{-1}$, for all $j \in [K]$, we have 
$$ \sup_{\btheta\in\B(\btheta^*;c_0,c_1)} | \lambda_j^n(\btheta) - \lambda_j(\btheta) | \lesssim \sqrt{\frac{d}{n}} $$
if ignore the logarithmic term. 

\subsection{Concentration of the mean}
Here we only provide the derivation of concentration results of $\bmu_1(\btheta)$, the results of $\bmu_0(\btheta)$ can be similarly obtained. Recall that 
\begin{align*}
	& \bmu_1^n(\btheta)=[\frac{1}{nK}\sum_{j=1}^K\sum_{i=1}^n \gamma_{\btheta}(Y_{ij}) \bOmega_j ]^{-1} \frac{1}{nK}\sum_{j=1}^K\sum_{i=1}^n \gamma_{\btheta}(Y_{ij}) \bOmega_j Y_{ij},\\
	& \bmu_1(\btheta)=[\frac{1}{K}\sum_{j=1}^K E\gamma_{\btheta}(Y_j) \bOmega_j ]^{-1} \frac{1}{K}\sum_{j=1}^K E[\gamma_{\btheta}(Y_j) \bOmega_j Y_j]. 
\end{align*}
Thus,
\begin{align*}
	\bmu_1^n(\btheta) - \bmu_1(\btheta) &= [\frac{1}{nK}\sum_{j=1}^K\sum_{i=1}^n \gamma_{\btheta}(Y_{ij}) \bOmega_j ]^{-1} \frac{1}{nK}\sum_{j=1}^K\sum_{i=1}^n \gamma_{\btheta}(Y_{ij}) \bOmega_j (Y_{ij} -\frac{\bmu_0 + \bmu_1}{2} )  \\ 
	& -  [\frac{1}{K}\sum_{j=1}^K E\gamma_{\btheta}(Y_j) \bOmega_j ]^{-1} \frac{1}{K}\sum_{j=1}^K E[\gamma_{\btheta}(Y_j) \bOmega_j (Y_j -\frac{\bmu_0 + \bmu_1}{2} )] \\ 
	& = [\frac{1}{nK}\sum_{j=1}^K\sum_{i=1}^n \gamma_{\btheta}(Y_{ij}) \bOmega_j ]^{-1} \{ \frac{1}{nK}\sum_{j=1}^K\sum_{i=1}^n \gamma_{\btheta}(Y_{ij}) \bOmega_j (Y_{ij} -\frac{\bmu_0 + \bmu_1}{2} )   \\
	& - \frac{1}{K}\sum_{j=1}^K E[\gamma_{\btheta}(Y_j) \bOmega_j (Y_j -\frac{\bmu_0 + \bmu_1}{2} )] \} + \{ [\frac{1}{nK}\sum_{j=1}^K\sum_{i=1}^n \gamma_{\btheta}(Y_{ij}) \bOmega_j ]^{-1} \\ 
	& - [\frac{1}{K}\sum_{j=1}^K E\gamma_{\btheta}(Y_j) \bOmega_j ]^{-1}  \} \frac{1}{K}\sum_{j=1}^K E\gamma_{\btheta}(Y_j) \bOmega_j (Y_j -\frac{\bmu_0 + \bmu_1}{2} ).
\end{align*}
Let 
\begin{align*}
	W^{(\bmu)}&=\sup_{\btheta\in \B(\btheta^*;c_0,c_1)} \| \frac{1}{nK}\sum_{j=1}^K\sum_{i=1}^n \gamma_{\btheta}(Y_{ij}) \bOmega_j (Y_{ij} -\frac{\bmu_0 + \bmu_1}{2} )    - \frac{1}{K}\sum_{j=1}^K E[\gamma_{\btheta}(Y_j) \bOmega_j (Y_j -\frac{\bmu_0 + \bmu_1}{2} )] \|_2 \\ 
	& = \sup_{\btheta\in \B(\btheta^*;c_0,c_1)} \| \frac{1}{nK}\sum_{j=1}^K\sum_{i=1}^n \{\gamma_{\btheta}(Y_{ij}) \bOmega_j (Y_{ij} -\frac{\bmu_0 + \bmu_1}{2} )    - E[\gamma_{\btheta}(Y_j) \bOmega_j (Y_j -\frac{\bmu_0 + \bmu_1}{2} )] \}\|_2,
\end{align*}
and \begin{align*}
	W^{(\bmu)}_u
	& = \sup_{\btheta\in \B(\btheta^*;c_0,c_1)} \langle \frac{1}{nK}\sum_{j=1}^K\sum_{i=1}^n \{\gamma_{\btheta}(Y_{ij}) \bOmega_j (Y_{ij} -\frac{\bmu_0 + \bmu_1}{2} )    - E[\gamma_{\btheta}(Y_j) \bOmega_j (Y_j -\frac{\bmu_0 + \bmu_1}{2} )] \}, u \rangle ,
\end{align*}
with $u \in \mathbb{S}^{d-1}=\{ u\in\mathbb{R}^d:\|u\|_2=1 \}$. We have that $ W^{(\bmu)}= \sup_{u \in \mathbb{S}^{d-1}} W^{(\bmu)}_u$. Let $\{u_1,\ldots, u_{M_{net}}\}$ denote a $1/2$-net of the space $\mathbb{S}^{d-1}$ (we have $\log(M_{net})\leq 2d$ by \cite{balakrishnan2017statistical}). This means that for any $v \in \mathbb{S}^{d-1}$ there exists some index $j \in [M_{net}]$, s.t., $\|v - u_j \|_2\leq 1/2$. We have
$$ W^{(\bmu)} = \sup_{v \in \mathbb{S}^{d-1}} W^{(\bmu)}_v \leq \max_{j \in [M_{net}]} W^{(\bmu)}_{u_j} + \frac{1}{2} W^{(\bmu)}, $$
which leads to $$ W^{(\bmu)} \leq 2  \max_{j \in [M_{net}]} W^{(\bmu)}_{u_j}. $$	
Thus, next we only need to bound $W_{u}^{(\bmu)}$ for a fixed $u$. Let $\{\epsilon_{ij}\},\ i=1,\ldots,n;\ j=1,\ldots,K$ denote a sequence of i.i.d. Rademacher random variables, for any $\lambda>0$, we have
\begin{align}
  &	E(e^{\lambda W^{(\bmu)}_u }) \nonumber \\ 
  & \leq E[\exp( 2\lambda \sup_{\btheta\in\B(\btheta*;c_0,c_1)} | \frac{1}{nK}\sum_{j=1}^K\sum_{i=1}^n  \epsilon_{ij} \frac{\lambda_j \langle  \bOmega_j (Y_{ij} -\frac{\bmu_0 + \bmu_1}{2} ), u \rangle}{\lambda_j+ (1-\lambda_j ) \exp(\bbeta_j^T(Y_{ij} - \frac{\bmu_0 + \bmu_1}{2})) } | ) ] \nonumber \\
  & \leq E[\exp( 2\lambda \sup_{\btheta\in\B(\btheta*;c_0,c_1)} | \frac{1}{nK}\sum_{j=1}^K\sum_{i=1}^n  \epsilon_{ij} \{ \frac{\lambda_j }{\lambda_j+ (1-\lambda_j ) \exp(\bbeta_j^T(Y_{ij} - \frac{\bmu_0 + \bmu_1}{2})) }  -\lambda_j \} \langle  \bOmega_j (Y_{ij} \nonumber \\ 
  & -\frac{\bmu_0 + \bmu_1}{2} ), u \rangle |  ) ]\cdot E[\exp( 2\lambda \sup_{\btheta\in\B(\btheta*;c_0,c_1)} | \frac{1}{nK}\sum_{j=1}^K\sum_{i=1}^n  \epsilon_{ij} \lambda_j \langle  \bOmega_j (Y_{ij} -\frac{\bmu_0 + \bmu_1}{2} ), u \rangle |)]. \label{eqCon2.5}
\end{align}
Let's first look at the second term at the right hand side of \eqref{eqCon2.5}  
\begin{align}
	E[ & \exp( 2\lambda \sup_{\btheta\in\B(\btheta*;c_0,c_1)} | \frac{1}{nK}\sum_{j=1}^K\sum_{i=1}^n  \epsilon_{ij} \lambda_j \langle  \bOmega_j (Y_{ij} -\frac{\bmu_0 + \bmu_1}{2} ), u \rangle |)] \nonumber \\
	& = E[\exp( 2\lambda \sup_{\btheta\in\B(\btheta*;c_0,c_1)} | \frac{1}{nK}\sum_{j=1}^K\sum_{i=1}^n  \epsilon_{ij} \lambda_j \langle  \bOmega_j (Y_{ij} - \bmu_j^* + \bmu_j^* -\frac{\bmu_0 + \bmu_1}{2} ), u \rangle |)] \nonumber \\
	& \leq E[\exp( 2\lambda \sup_{\btheta\in\B(\btheta*;c_0,c_1)} | \frac{1}{nK}\sum_{j=1}^K\sum_{i=1}^n  \epsilon_{ij} \lambda_j \langle  \bOmega_j (Y_{ij} - \bmu_j^*), u \rangle |)] \label{eqCon3} \\
	& \cdot E[\exp( 2\lambda \sup_{\btheta\in\B(\btheta*;c_0,c_1)} | \frac{1}{nK}\sum_{j=1}^K\sum_{i=1}^n  \epsilon_{ij} \lambda_j \langle  \bOmega_j (\bmu_j^* -\frac{\bmu_0 + \bmu_1}{2} ), u \rangle |)] \label{eqCon4}.
\end{align}
For \eqref{eqCon4}, as we have $ \| \bmu_j^* - (\bmu_0 + \bmu_1)/2 \|_2 \leq (M/2+1)\sqrt{M}\Delta_{max}$, $\lambda_j\leq 1-c_0$, and $\|\bOmega_j\|_2\leq M$, we have
$$ E[\exp( 2\lambda \sup_{\btheta\in\B(\btheta*;c_0,c_1)} | \frac{1}{nK}\sum_{j=1}^K\sum_{i=1}^n  \epsilon_{ij} \lambda_j \langle  \bOmega_j (\bmu_j^* -\frac{\bmu_0 + \bmu_1}{2} ), u \rangle |)] \leq \exp( \frac{4\lambda^2}{nK}(1-c_0)^2M^3(\frac{M}{2} + 1)^2\Delta_{max}^2 )  $$
by using the property of sub-Gaussian norm of bounded random variables. As for \eqref{eqCon3}, we have 
\begin{align*}
	E[ & \exp( 2\lambda \sup_{\btheta\in\B(\btheta*;c_0,c_1)} | \frac{1}{nK}\sum_{j=1}^K\sum_{i=1}^n  \epsilon_{ij} \lambda_j \langle  \bOmega_j (Y_{ij} - \bmu_j^*), u \rangle |)] \\
	& \leq E[ \exp( 2\lambda \sup_{\btheta\in\B(\btheta*;c_0,c_1)} | \frac{1}{nK}\sum_{j=1}^K\sum_{i=1}^n  \epsilon_{ij} \lambda_j \langle \tilde Y^N_{ij}, u \rangle |)] \ \text{with}\ \tilde Y^N_{ij}\sim N_d(\0,\bOmega_j) \\
	& \leq E[ \exp( 2\lambda (1-c_0) \sup_{\btheta\in\B(\btheta*;c_0,c_1)} | \frac{1}{nK}\sum_{j=1}^K\sum_{i=1}^n  \epsilon_{ij} \langle \tilde Y^N_{ij}, u \rangle |)]  \\ 
	& \leq \exp( \frac{4\lambda^2}{nK}(1-c_0)^2M )
\end{align*}
where the last inequality is due to the fact that $ \langle \tilde Y^N_{ij}, u \rangle \sim N(0, u^T\bOmega_j u) $. Thus, combine the bounds of \eqref{eqCon3} and \eqref{eqCon4}, we have
\begin{align*}
	E[ & \exp( 2\lambda \sup_{\btheta\in\B(\btheta*;c_0,c_1)} | \frac{1}{nK}\sum_{j=1}^K\sum_{i=1}^n  \epsilon_{ij} \lambda_j \langle  \bOmega_j (Y_{ij} -\frac{\bmu_0 + \bmu_1}{2} ), u \rangle |)] \\ 
	& \leq \exp( \frac{4\lambda^2}{nK}(1-c_0)^2M )\exp( \frac{4\lambda^2}{nK}(1-c_0)^2M^3(\frac{M}{2} + 1)^2\Delta_{max}^2 )\\
	& := \exp( \frac{\lambda^2}{nK}c_3).
\end{align*}
Next, we deal with the first term at the right hand side of \eqref{eqCon2.5}. 
\begin{align*}
E[ & \exp( \sup_{\btheta\in\B(\btheta*;c_0,c_1)} | \frac{2\lambda }{nK}\sum_{j=1}^K\sum_{i=1}^n  \epsilon_{ij} \{ \frac{\lambda_j }{\lambda_j+ (1-\lambda_j ) \exp(\bbeta_j^T(Y_{ij} - \frac{\bmu_0 + \bmu_1}{2})) }  -\lambda_j \} \langle  \bOmega_j (Y_{ij} -\frac{\bmu_0 + \bmu_1}{2} ), u \rangle |  ) ] \\
& \leq E[\exp( \sup_{\btheta\in\B(\btheta*;c_0,c_1)} | \frac{2\lambda }{nK}\sum_{j=1}^K\sum_{i=1}^n  \epsilon_{ij} \frac{1-c_0}{c_0} \bbeta_j^T(Y_{ij}- \frac{\bmu_0 + \bmu_1}{2}) \langle  \bOmega_j (Y_{ij} -\frac{\bmu_0 + \bmu_1}{2} ), u \rangle |  ) ]\\
& \leq E[\exp( \sup_{\btheta\in\B(\btheta*;c_0,c_1)} | \frac{2\lambda }{nK}\sum_{j=1}^K\sum_{i=1}^n  \epsilon_{ij} \frac{1-c_0}{c_0} \bbeta_j^T(Y_{ij}- \bmu_j^* + \bmu_j^* - \frac{\bmu_0 + \bmu_1}{2}) \\ & \cdot \langle  \bOmega_j (Y_{ij} - \bmu_j^* + \bmu_j^* -\frac{\bmu_0 + \bmu_1}{2} ), u \rangle |  ) ],
\end{align*}
and it can be further expanded as
\begin{align}
	& E[\exp( \sup_{\btheta\in\B(\btheta*;c_0,c_1)} | \frac{2\lambda }{nK}\sum_{j=1}^K\sum_{i=1}^n  \epsilon_{ij} \frac{1-c_0}{c_0} \bbeta_j^T(Y_{ij}- \bmu_j^*) \cdot \langle  \bOmega_j (Y_{ij} - \bmu_j^*), u \rangle |  ) ] \label{eqCon5} \\
	& \cdot E[\exp( \sup_{\btheta\in\B(\btheta*;c_0,c_1)} | \frac{2\lambda }{nK}\sum_{j=1}^K\sum_{i=1}^n  \epsilon_{ij} \frac{1-c_0}{c_0} \bbeta_j^T(\bmu_j^* - \frac{\bmu_0 + \bmu_1}{2}) \cdot \langle  \bOmega_j (Y_{ij} - \bmu_j^*), u \rangle |  ) ] \label{eqCon6} \\
	& \cdot E[\exp( \sup_{\btheta\in\B(\btheta*;c_0,c_1)} | \frac{2\lambda }{nK}\sum_{j=1}^K\sum_{i=1}^n  \epsilon_{ij} \frac{1-c_0}{c_0} \bbeta_j^T(Y_{ij}- \bmu_j^*) \cdot \langle  \bOmega_j (\bmu_j^* -\frac{\bmu_0 + \bmu_1}{2} ), u \rangle |  ) ] \label{eqCon7}\\
	& \cdot E[\exp( \sup_{\btheta\in\B(\btheta*;c_0,c_1)} | \frac{2\lambda }{nK}\sum_{j=1}^K\sum_{i=1}^n  \epsilon_{ij} \frac{1-c_0}{c_0} \bbeta_j^T(\bmu_j^* - \frac{\bmu_0 + \bmu_1}{2}) \cdot \langle  \bOmega_j (\bmu_j^* -\frac{\bmu_0 + \bmu_1}{2} ), u \rangle |  ) ].\label{eqCon8}
\end{align} 
For \eqref{eqCon6}, as $| \bbeta_j^T(\bmu_j^* - \frac{\bmu_0 + \bmu_1}{2}) | \leq (1+c_1)\Delta_{max}^2$
\begin{align*}
E[ & \exp( \sup_{\btheta\in\B(\btheta*;c_0,c_1)} | \frac{2\lambda }{nK}\sum_{j=1}^K\sum_{i=1}^n  \epsilon_{ij} \frac{1-c_0}{c_0} \bbeta_j^T(\bmu_j^* - \frac{\bmu_0 + \bmu_1}{2}) \cdot \langle  \bOmega_j (Y_{ij} - \bmu_j^*), u \rangle |  ) ] \\ 
& \leq E[ \exp( 2\lambda \frac{1-c_0}{c_0} (1+c_1)\Delta_{max}^2 | \frac{1}{nK}\sum_{j=1}^K\sum_{i=1}^n  \epsilon_{ij}  \langle  \bOmega_j (Y_{ij} - \bmu_j^*), u \rangle |  ) ] \\ 
& \leq E[ \exp( 2\lambda \frac{1-c_0}{c_0} (1+c_1)\Delta_{max}^2 | \frac{1}{nK}\sum_{j=1}^K\sum_{i=1}^n  \epsilon_{ij}  \langle  \tilde Y^N_{ij}, u \rangle |  ) ] \ \text{with}\ \tilde Y^N_{ij}\sim N(0, \bOmega_j) \\  
& \leq \exp( \frac{4\lambda^2}{nK}(\frac{1-c_0}{c_0})^2(1+c_1)^2\Delta_{max}^4M)
\end{align*}
by using the property of sub-Gaussian norm of normal random variables. 

For \eqref{eqCon7}, 
\begin{align*}
	E[ & \exp( \sup_{\btheta\in\B(\btheta*;c_0,c_1)} | \frac{2\lambda }{nK}\sum_{j=1}^K\sum_{i=1}^n  \epsilon_{ij} \frac{1-c_0}{c_0} \bbeta_j^T(Y_{ij}- \bmu_j^*) \cdot \langle  \bOmega_j (\bmu_j^* -\frac{\bmu_0 + \bmu_1}{2} ), u \rangle |  ) ] \\ 
	& \leq E[ \exp(2\lambda  \frac{1-c_0}{c_0} \sup_{\btheta\in\B(\btheta*;c_0,c_1)} | \frac{1}{nK}\sum_{j=1}^K |\langle  \bOmega_j (\bmu_j^* -\frac{\bmu_0 + \bmu_1}{2} ), u \rangle | \sum_{i=1}^n  \epsilon_{ij} \bbeta_j^T Y^N_{ij}  |  ) ] \\ 
	& \leq E[ \exp(2\lambda  \frac{1-c_0}{c_0} M^{3/2} (\frac{M}{2}+1)\Delta_{max} \sup_{\btheta\in\B(\btheta*;c_0,c_1)} | \frac{1}{nK}\sum_{j=1}^K \sum_{i=1}^n  \epsilon_{ij} \bbeta_j^T Y^N_{ij}  |  ) ],
\end{align*}
within which we have
\begin{align*}
	\sup_{\btheta\in\B(\btheta*;c_0,c_1)} | \frac{1}{nK}\sum_{j=1}^K \sum_{i=1}^n  \epsilon_{ij} \bbeta_j^T Y^N_{ij}  | & = \sup_{\btheta\in\B(\btheta*;c_0,c_1)} | \frac{1}{nK}\sum_{j=1}^K \sum_{i=1}^n  \epsilon_{ij} (\bmu_0 - \bmu_1)^T \bOmega_j Y^N_{ij}  | \\ 
	& = \sup_{\btheta\in\B(\btheta*;c_0,c_1)} | \frac{1}{nK}\sum_{j=1}^K \sum_{i=1}^n  \epsilon_{ij} (\bmu_0 - \bmu_1)^T \tilde Y^N_{ij}| \\ 
	& \leq \sup_{\btheta\in\B(\btheta*;c_0,c_1)} \|\bmu_0 - \bmu_1 \|_1 \max_{l \in [d]}  | \frac{1}{nK}\sum_{j=1}^K \sum_{i=1}^n  \epsilon_{ij}  \tilde Y^N_{ij,l}| \\ 
	& \leq \sup_{\btheta\in\B(\btheta*;c_0,c_1)} \sqrt{d} \|\bmu_0 - \bmu_1 \|_2 \max_{l \in [d]}  | \frac{1}{nK}\sum_{j=1}^K \sum_{i=1}^n  \epsilon_{ij}  \tilde Y^N_{ij,l}| \\ 
	& \leq \sqrt{dM(1+c_1)}\Delta_{max} \max_{l \in [d]}  | \frac{1}{nK}\sum_{j=1}^K \sum_{i=1}^n  \epsilon_{ij}  \tilde Y^N_{ij,l}|. 
\end{align*}
Thus, 
\begin{align*}
	E[ & \exp( \sup_{\btheta\in\B(\btheta*;c_0,c_1)} | \frac{2\lambda }{nK}\sum_{j=1}^K\sum_{i=1}^n  \epsilon_{ij} \frac{1-c_0}{c_0} \bbeta_j^T(Y_{ij}- \bmu_j^*) \cdot \langle  \bOmega_j (\bmu_j^* -\frac{\bmu_0 + \bmu_1}{2} ), u \rangle |  ) ] \\ 
	& \leq d\cdot E[ \exp(2\lambda  \frac{1-c_0}{c_0} M^{2} (\frac{M}{2}+1) \sqrt{d(1+c_1)}\Delta^2_{max} | \frac{1}{nK}\sum_{j=1}^K \sum_{i=1}^n  \epsilon_{ij}  \tilde Y^N_{ij,l}|  ) ] \\
	& \leq \exp( \frac{4\lambda^2}{nK} (\frac{1-c_0}{c_0})^2 M^{4} (\frac{M}{2}+1)^2 d(1+c_1)\Delta^4_{max} + \log(d) ).
\end{align*}

For \eqref{eqCon8}, as $$ \sup_{\btheta\in\B(\btheta*;c_0,c_1)} | \bbeta_j^T(\bmu_j^* - \frac{\bmu_0 + \bmu_1}{2}) \cdot \langle  \bOmega_j (\bmu_j^* -\frac{\bmu_0 + \bmu_1}{2} ), u \rangle | \leq (1+c_1)\Delta_{max}^3 M^{3/2} (\frac{M}{2}+1) $$ we have
\begin{align*}
	E[ & \exp( \sup_{\btheta\in\B(\btheta*;c_0,c_1)} | \frac{2\lambda }{nK}\sum_{j=1}^K\sum_{i=1}^n  \epsilon_{ij} \frac{1-c_0}{c_0} \bbeta_j^T(\bmu_j^* - \frac{\bmu_0 + \bmu_1}{2}) \cdot \langle  \bOmega_j (\bmu_j^* -\frac{\bmu_0 + \bmu_1}{2} ), u \rangle |  ) ] \\
	& \leq \exp( \frac{4\lambda^2}{nK} (\frac{1-c_0}{c_0})^2  (1+c_1)^2\Delta_{max}^6 M^{3} (\frac{M}{2}+1)^2).
\end{align*}
Combine the upper bounds of \eqref{eqCon6}-\eqref{eqCon8}, an upper bound of the product of these three terms is
\begin{align*}
	\exp( & \frac{4\lambda^2}{nK}(\frac{1-c_0}{c_0})^2(1+c_1)^2\Delta_{max}^4M) \\ 
	& \cdot \exp( \frac{4\lambda^2}{nK} (\frac{1-c_0}{c_0})^2 M^{4} (\frac{M}{2}+1)^2 d(1+c_1)\Delta^4_{max} + \log(d) )\\ 
	& \cdot \exp( \frac{4\lambda^2}{nK} (\frac{1-c_0}{c_0})^2  (1+c_1)^2\Delta_{max}^6 M^{3} (\frac{M}{2}+1)^2) \\ 
	& := \exp( \frac{\lambda^2}{nK} (c_5 d +c_6) + \log(d) ).
\end{align*}
Finally, we proceed to control \eqref{eqCon5}.
\begin{align*}
	E[ & \exp( \sup_{\btheta\in\B(\btheta*;c_0,c_1)} | \frac{2\lambda }{nK}\sum_{j=1}^K\sum_{i=1}^n  \epsilon_{ij} \frac{1-c_0}{c_0} \bbeta_j^T(Y_{ij}- \bmu_j^*) \cdot \langle  \bOmega_j (Y_{ij} - \bmu_j^*), u \rangle |  ) ]  \\ 
	& \leq E[ \exp( \sup_{\btheta\in\B(\btheta*;c_0,c_1)} | \frac{2\lambda }{nK}\sum_{j=1}^K\sum_{i=1}^n  \epsilon_{ij} \frac{1-c_0}{c_0} \langle \bmu_0 - \bmu_1, \tilde Y^N_{ij} \rangle \langle  \tilde Y^N_{ij}, u \rangle |  ) ] \\ 
	& \leq E[ \exp( \sup_{\btheta\in\B(\btheta*;c_0,c_1)} | \frac{2\lambda }{nK}\sum_{j=1}^K\sum_{i=1}^n  \epsilon_{ij} \frac{1-c_0}{c_0} \langle \bmu_0 - \bmu_1 - (\bmu_0^* - \bmu_1^*), \tilde Y^N_{ij} \rangle \langle  \tilde Y^N_{ij}, u \rangle |  ) ]\\ 
	& \cdot E[ \exp( \sup_{\btheta\in\B(\btheta*;c_0,c_1)} | \frac{2\lambda }{nK}\sum_{j=1}^K\sum_{i=1}^n  \epsilon_{ij} \frac{1-c_0}{c_0} \langle \bmu_0^* - \bmu_1^*, \tilde Y^N_{ij} \rangle \langle  \tilde Y^N_{ij}, u \rangle |  ) ].
\end{align*}
Let's define $$ \tilde W_{\tilde u, u} = \langle \tilde u, \frac{1}{nK} \sum_{i=1}^n \sum_{j=1}^K \epsilon_{ij} \tilde Y_{ij}^N \tilde Y_{ij}^{NT} u  \rangle, $$
and
\begin{align*}
	\tilde W_u & = \sup_{\tilde u \in \mathbb{S}^{d-1}} \langle \tilde u, \frac{1}{nK} \sum_{i=1}^n \sum_{j=1}^K \epsilon_{ij} \tilde Y_{ij}^N \tilde Y_{ij}^{NT} u  \rangle =  \sup_{\tilde u \in \mathbb{S}^{d-1}} \tilde W_{\tilde u, u}.
\end{align*}
We have
\begin{align*}
	\sup_{\btheta\in\B(\btheta*;c_0,c_1)} & \frac{1}{nK}\sum_{j=1}^K\sum_{i=1}^n  \epsilon_{ij} \langle \bmu_0 - \bmu_1 - (\bmu_0^* - \bmu_1^*), \tilde Y^N_{ij} \rangle \langle  \tilde Y^N_{ij}, u \rangle \\
	& \leq (\sup_{\btheta\in\B(\btheta*;c_0,c_1)} \| \bmu_0 - \bmu_1 - (\bmu_0^* - \bmu_1^*) \|_2 ) \tilde W_u \\ 
	& \leq \frac{1}{2 } M^{3/2} \Delta_{max} \tilde W_u \\
	& \leq M^{3/2} \Delta_{max} \max_{l \in [M_{net}]} \tilde W_{\tilde u_l, u}
\end{align*}
by using the covering net. Then for a fixed $\tilde u$ we have 
$$\tilde W_{\tilde u, u} = \langle \tilde u, \frac{1}{nK} \sum_{i=1}^n \sum_{j=1}^K \epsilon_{ij} \tilde Y_{ij}^N \tilde Y_{ij}^{NT} u  \rangle = \frac{1}{nK} \sum_{i=1}^n \sum_{j=1}^K \langle \tilde u, \tilde Y_{ij}^N\rangle \langle \tilde Y_{ij}^N, u \rangle,   $$
where we can use 
\begin{align*}
	\| \langle \tilde u, \tilde Y_{ij}^N\rangle \langle \tilde Y_{ij}^N, u \rangle \|_{\psi_1} \leq c_{\psi} \max\{ \| \langle \tilde u, \tilde Y_{ij}^N\rangle\|^2_{\psi_2}, \| \langle u, \tilde Y_{ij}^N\rangle\|^2_{\psi_2} \} \leq c_{\psi} M
\end{align*}
because if $x \sim N(0, \sigma^2)$, then $ \| x \|_{\psi_2} \leq c_{\psi}\sigma $.
Similarly, we can get
\begin{align*}
	\| \langle \bmu_0^* - \bmu_1^*, \tilde Y^N_{ij} \rangle \langle  \tilde Y^N_{ij}, u \rangle  \|_{\psi_1} \leq c_{\psi} \max\{ \|\bmu_0^* - \bmu_1^*\|_2M, M \} \leq c_{\psi} (M^{3/2}\Delta_{max} + M).
\end{align*}
Thus, for sufficiently small $\lambda$, we have 
\begin{align*}
	E[ & \exp( \sup_{\btheta\in\B(\btheta*;c_0,c_1)} | \frac{2\lambda }{nK}\sum_{j=1}^K\sum_{i=1}^n  \epsilon_{ij} \frac{1-c_0}{c_0} \bbeta_j^T(Y_{ij}- \bmu_j^*) \cdot \langle  \bOmega_j (Y_{ij} - \bmu_j^*), u \rangle |  ) ] \\ 
	& \leq \exp[ \frac{4\lambda^2}{nK} (\frac{1-c_0}{c_0})^2 c_{\psi}^2 \{ M^5\Delta_{max}^2 + (M^{3/2}\Delta_{max} + M)^2\} + \log(M_{net}) ] \\
	& \leq \exp( \frac{\lambda^2}{nK} c_7 + 2d ).
\end{align*}
Putting all the pieces together, we have 
\begin{align*}
	E[ \exp(\lambda W^{(\bmu)} ) ] & \leq E(\exp (2\lambda \max_{l\in [M_{net}]} W^{(\bmu)}_{\bmu_l} ) ) \\ 
	& \leq \sum_{l=1}^{M_{net}} E(\exp (2\lambda W^{(\bmu)}_{\bmu_l} ) ) \\
	& \leq M_{net} \exp( \frac{\lambda^2}{nK} c_3)\cdot \exp( \frac{\lambda^2}{nK} (c_5 d +c_6) + \log(d) )\cdot \exp( \frac{\lambda^2}{nK} c_7 + 2d )\\
	& \leq \exp( \frac{\lambda^2}{nK}( c_8 d + c_9 ) + 4d + \log(d) ).  
\end{align*}
Using Chernoff's approach, if we let 
\begin{align*}
	t & = \sqrt{ \frac{ 2(c_8 d + c_9 )(4d + \log(d) + \log(nK) ) }{nK} } \\
	\lambda & = \sqrt{ \frac{(4d + \log(d) + \log(nK))nK }{c_8 d + c_9} }
\end{align*}
then with probability at least $1-\frac{1}{nK}$, we have 
$$ W^{(\bmu)} \lesssim \sqrt{ \frac{ (c_8 d + c_9 )(4d + \log(d) + \log(nK) ) }{nK} }.  $$
It follows that, with probability at least $1 - \frac{1}{nK}$ we have
\begin{align*}
	\sup_{\btheta\in \B(\btheta^*;c_0,c_1)} & \| \bmu_1^n(\btheta) - \bmu_1(\btheta) \|_2  \\
	&= \sup_{\btheta\in \B(\btheta^*;c_0,c_1)} \|  [\frac{1}{nK}\sum_{j=1}^K\sum_{i=1}^n \gamma_{\btheta}(Y_{ij}) \bOmega_j ]^{-1} \{ \frac{1}{nK}\sum_{j=1}^K\sum_{i=1}^n \gamma_{\btheta}(Y_{ij}) \bOmega_j (Y_{ij} -\frac{\bmu_0 + \bmu_1}{2} )   \\
	& - \frac{1}{K}\sum_{j=1}^K E[\gamma_{\btheta}(Y_j) \bOmega_j (Y_j -\frac{\bmu_0 + \bmu_1}{2} )] \} + \{ [\frac{1}{nK}\sum_{j=1}^K\sum_{i=1}^n \gamma_{\btheta}(Y_{ij}) \bOmega_j ]^{-1} \\ 
	& - [\frac{1}{K}\sum_{j=1}^K E\gamma_{\btheta}(Y_j) \bOmega_j ]^{-1}  \} \frac{1}{K} \sum_{j=1}^K E\gamma_{\btheta}(Y_j) \bOmega_j (Y_j -\frac{\bmu_0 + \bmu_1}{2} ) \|_2 \\
	& \lesssim \sqrt{ \frac{ (c_8 d + c_9 )(4d + \log(d) + \log(nK) ) }{nK} } 
\end{align*}
because
\begin{align}
	 \|\{ [\frac{1}{nK} & \sum_{j=1}^K\sum_{i=1}^n \gamma_{\btheta}(Y_{ij}) \bOmega_j ]^{-1} - [\frac{1}{K}\sum_{j=1}^K E\gamma_{\btheta}(Y_j) \bOmega_j ]^{-1}  \} \frac{1}{K} \sum_{j=1}^K E\gamma_{\btheta}(Y_j) \bOmega_j (Y_j -\frac{\bmu_0 + \bmu_1}{2} )\|_2 \nonumber  \\
	& = \|[\frac{1}{nK}\sum_{j=1}^K\sum_{i=1}^n \gamma_{\btheta}(Y_{ij}) \bOmega_j ]^{-1} \{ \frac{1}{K}\sum_{j=1}^K E\gamma_{\btheta}(Y_j) \bOmega_j -  \frac{1}{nK}\sum_{j=1}^K\sum_{i=1}^n \gamma_{\btheta}(Y_{ij}) \bOmega_j \} \nonumber \\ 
	& \cdot [\frac{1}{K}\sum_{j=1}^K E\gamma_{\btheta}(Y_j) \bOmega_j ]^{-1} \frac{1}{K} \sum_{j=1}^KE \gamma_{\btheta}(Y_j) \bOmega_j (Y_j -\frac{\bmu_0 + \bmu_1}{2} )\|_2 \nonumber \\ 
	& \leq \| [\frac{1}{nK}\sum_{j=1}^K\sum_{i=1}^n \gamma_{\btheta}(Y_{ij}) \bOmega_j ]^{-1} \|_2 \| \frac{1}{K}\sum_{j=1}^K E\gamma_{\btheta}(Y_j) \bOmega_j -  \frac{1}{nK}\sum_{j=1}^K\sum_{i=1}^n \gamma_{\btheta}(Y_{ij}) \bOmega_j  \|_2 \label{eqCon9} \\ 
	& \cdot \{ \| [\frac{1}{K}\sum_{j=1}^K E\gamma_{\btheta}(Y_j) \bOmega_j ]^{-1} \frac{1}{K} \sum_{j=1}^K E\gamma_{\btheta}(Y_j) \bOmega_j (Y_j -\bmu_1 ) \|_2 \label{eqCon10} \\
	& + \| [\frac{1}{K}\sum_{j=1}^K E\gamma_{\btheta}(Y_j) \bOmega_j ]^{-1} \frac{1}{K} \sum_{j=1}^K E\gamma_{\btheta}(Y_j) \bOmega_j (\bmu_1 -\frac{\bmu_0 + \bmu_1}{2} ) \|_2  \} \label{eqCon11}
\end{align}
where we have \eqref{eqCon9} $\lesssim O_p(1/\sqrt{nK})$, \eqref{eqCon10} $=\|\A^{-1}\B\|_2 \lesssim c_{\A\B}=c_6\exp(-c_4\Delta_{min}^2) + \frac{M^{3/2}}{4}\Delta_{max} + \frac{M^{3/2}}{4}c_{\bmu}\exp(-c_4\Delta_{min}^2)\Delta_{max}$ and \eqref{eqCon11} $=\|\bmu_1 -\frac{\bmu_0 + \bmu_1}{2}\|_2 \leq \sqrt{M(1+c_1)}\Delta_{max}/2$, and it implies that when the global SNR $\Delta_{min}$ is large enough, then the rate of $\sup_{\btheta\in \B(\btheta^*;c_0,c_1)}\| \bmu_1^n(\btheta) - \bmu_1(\btheta) \|_2$ will be as claimed.

To summarize, with probability at least $1-\frac{K}{n}-\frac{1}{nK}$ we have
$$ \sup_{\btheta\in\B(\btheta^*;c_0,c_1)}d_2(M_n(\btheta),M(\btheta)) = O_p(\sqrt{ \frac{ d^2 + d\log(dnK) }{nK} }) + O_p(\sqrt{\log(nd)} \sqrt{\frac{Kd}{n}}). $$

\end{proof}

\section{Additional simulation results}

Figure \ref{add_fig1}--\ref{add_fig3} display the estimation results when $n=3,000$.

\begin{figure}
	\centering
	\includegraphics[width = 1\linewidth]{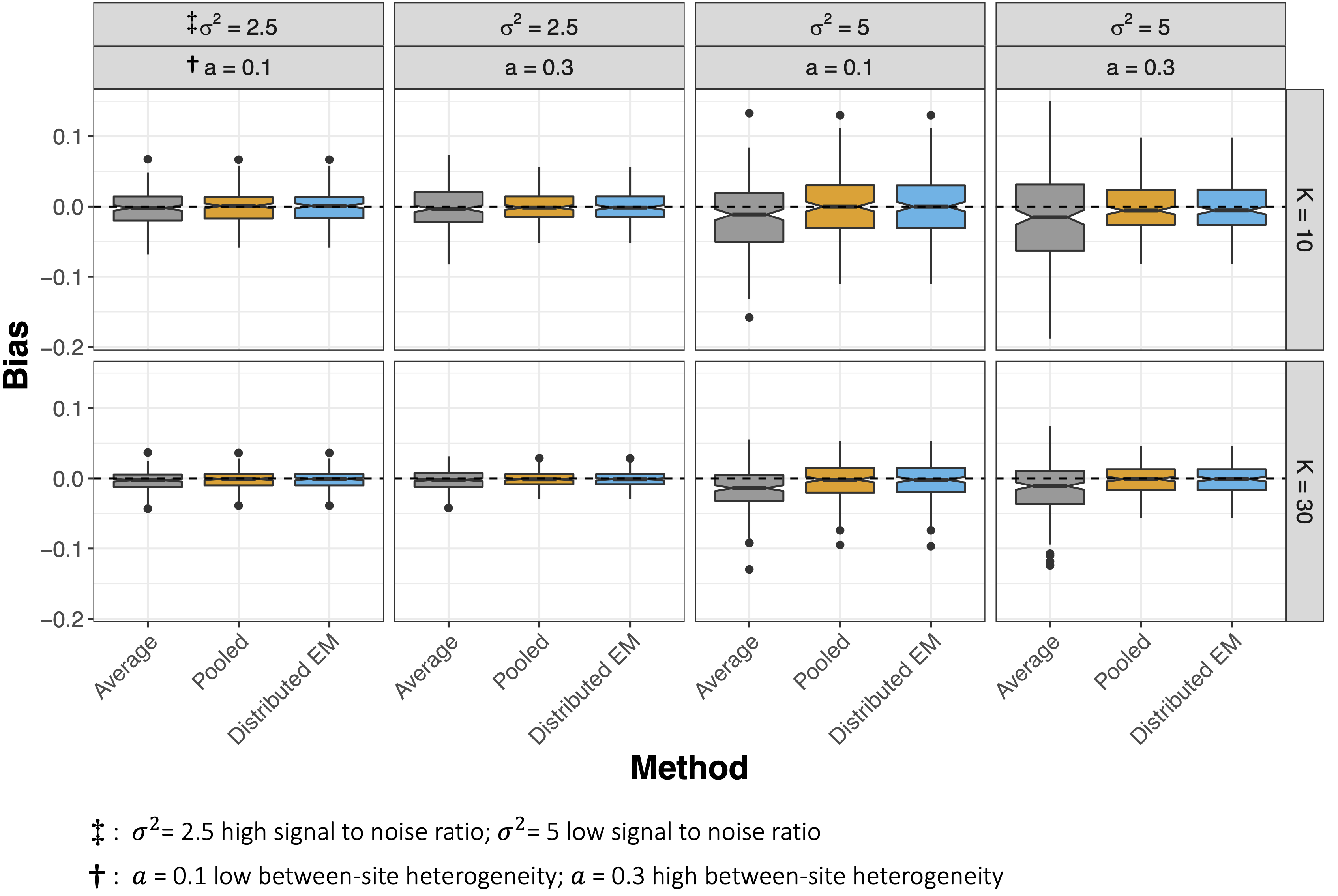}
	\caption{Empirical bias and variances of estimates of $\mu_{01}$ from the average estimator, the pooled estimator, and our distributed EM estimator, when $n=3,000$ under different settings of number of sites ($K$), signal to noise ratio ($\sigma^2$) and heterogeneity level ($a$).}
\label{add_fig1}
\end{figure}


\begin{figure}
	\centering
	\includegraphics[width = 1\linewidth]{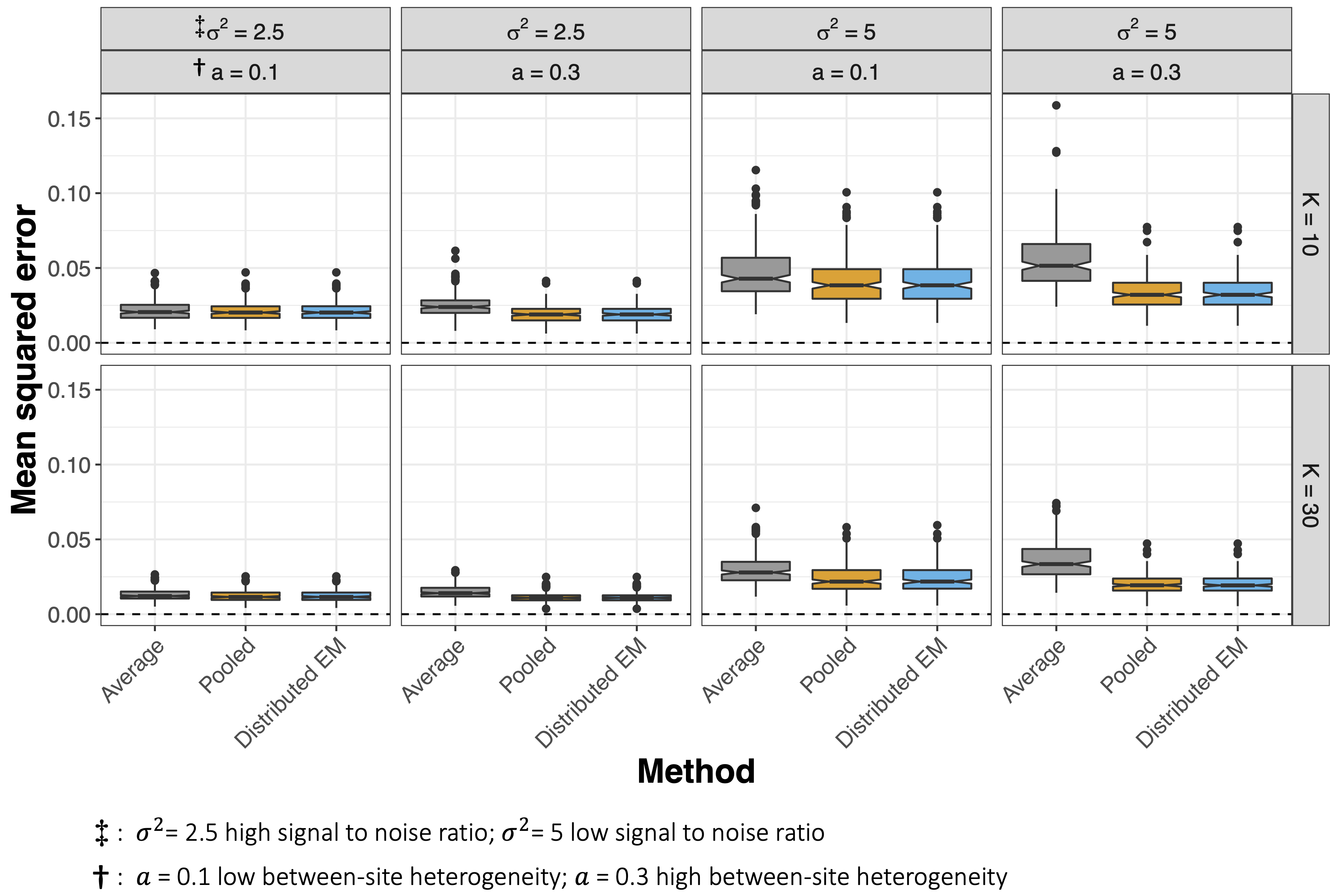}
	\caption{Mean squared error of estimates of $\bmu$ from the average estimator, the pooled estimator, and our distributed EM estimator, when $n=3,000$ under different settings of number of sites ($K$), signal to noise ratio ($\sigma^2$) and heterogeneity level ($a$).}
\label{add_fig3}
\end{figure}




\end{document}